\newcommand{\Sec}[1]{\hyperref[sec:#1]{\S\ref*{sec:#1}}} %
\newcommand{\App}[1]{\hyperref[sec:#1]{Appendix~\ref*{sec:#1}}} %
\newcommand{\Eqn}[1]{\hyperref[eq:#1]{(\ref*{eq:#1})}} %
\newcommand{\FigsTwo}[2]{{Figures~\ref{fig:#1} and~\ref{fig:#2}}} %
\newcommand{\Fig}[1]{\hyperref[fig:#1]{Figure~\ref*{fig:#1}}} %
\newcommand{\Tab}[1]{\hyperref[tab:#1]{Table~\ref*{tab:#1}}} %
\newcommand{\Thm}[1]{\hyperref[thm:#1]{Thm.\,\ref*{thm:#1}}} %
\newcommand{\Lem}[1]{\hyperref[lem:#1]{Lem.\,\ref*{lem:#1}}} %
\newcommand{\Prop}[1]{\hyperref[prop:#1]{Prop.~\ref*{prop:#1}}} %
\newcommand{\Cor}[1]{\hyperref[cor:#1]{Cor.~\ref*{cor:#1}}} %
\newcommand{\Def}[1]{\hyperref[def:#1]{Defn.~\ref*{def:#1}}} %
\newcommand{\Alg}[1]{\hyperref[alg:#1]{Algorithm~\ref*{alg:#1}}} %
\newcommand{\Ex}[1]{\hyperref[ex:#1]{Ex.~\ref*{ex:#1}}} %
\newcommand{\Clm}[1]{\hyperref[clm:#1]{Claim~\ref*{clm:#1}}} %
\newcommand{\Step}[1]{\hyperref[step:#1]{Step~\ref*{step:#1}}} %
\newcommand{\Line}[1]{\hyperref[line:#1]{Line~\ref*{line:#1}}} %
\newcommand{\Lines}[2]{\hyperref[line:#1]{Lines~\ref*{line:#1}--\ref*{line:#2}}} %
\newcommand{\qtext}[1]{\quad\text{#1}\quad}
\newcommand{\EX}{\mathbb{E}}
\newcommand{\Natural}{\mathbb{N}}
\newcommand{\ceil}[1]{\lceil #1 \rceil}
\newcommand{\ndseq}{\set{n_d}_{d \in \Natural}}
\newcommand{\cdseq}{\set{c_d}_{d \in \Natural}}
\newcommand{\nfillblk}{n^{\rm fill}_*}
\newcommand{\degblk}{d_*} 
\newcommand{\ndfill}[1][d]{n^{\rm fill}_{#1}}
\newcommand{\ndbulk}{n_d^{\rm bulk}}
\newcommand{\wdfill}[1][d]{w^{\rm fill}_{#1}}
\newcommand{\rdfill}[1][d]{r^{\rm fill}_{#1}}
\newcommand{\wdbulk}{w^{\rm bulk}_d}
\newcommand{\dmax}{d_{{\rm max}}}
\newcommand{\duniq}{d_{\rm uniq}}
\newcommand{\dmaxideal}{d^*}
\newcommand{\bmax}{b_{{\rm max}}}
\newcommand{\gmax}{g_{{\rm max}}}
\newcommand{\davg}{d_{{\rm avg}}}
\newcommand{\davgideal}{\bar d}
\newcommand{\Prob}[1]{\text{Pr}\left( #1 \right)}
\newcommand{\ColTitle}[1]{\multicolumn{1}{c|}{\bf #1}}
\begin{document}

\title{A Scalable Generative Graph Model\\with Community Structure\thanks{This work was funded by the GRAPHS Program at DARPA and by the Applied Mathematics Program at the U.S.\@ Department of Energy. Sandia National Laboratories is a multi-program laboratory managed and operated by Sandia Corporation, a wholly owned subsidiary of Lockheed Martin Corporation, for the U.S. Department of Energy's National Nuclear Security Administration under contract DE-AC04-94AL85000.}}
\author{%
  Tamara G. Kolda\footnotemark[2] 
  \and Ali Pinar\footnotemark[2] 
  \and Todd Plantenga\footnotemark[2] 
  \and C.\@ Seshadhri\footnotemark[2] 
}
\date{}
\maketitle

\renewcommand{\thefootnote}{\fnsymbol{footnote}}
\footnotetext[2]{Sandia National Laboratories, Livermore, CA.
  Email: \{tgkolda,apinar,tplante,scomand\}@sandia.gov}
\renewcommand{\thefootnote}{\arabic{footnote}}

\begin{abstract}
  Network data is ubiquitous and growing, yet we lack realistic
  generative network models that can be calibrated to match real-world data.
  The recently proposed Block Two-Level Erd\H{o}s-R\'{e}nyi (BTER)
  model can be tuned to capture two fundamental properties: degree
  distribution and clustering coefficients. The latter is particularly
  important for reproducing graphs with community structure, such as
  social networks. In this paper, we compare BTER to other scalable
  models and show that it gives a better fit to real data. We provide
  a scalable implementation that requires only O($\dmax$) storage
  where $\dmax$ is the maximum number of neighbors for a single
  node. 
  The generator is trivially parallelizable, and we show results
  for a Hadoop MapReduce implementation for a modeling a real-world web graph
  with over 4.6 billion edges. We propose that the BTER model can be
  used as a graph generator for benchmarking purposes and provide
  idealized degree distributions and clustering coefficient profiles
  that can be tuned for user specifications.
\end{abstract}

\begin{keywords}
  graph generator, network data, block two-level Erd\H{o}s-R\'{e}nyi (BTER)
  model, large-scale graph benchmarks
\end{keywords}

\section{Introduction}

Network interaction data are now available
from online social interactions, computer-to-computer communications,
financial transactions, collaboration networks, telecommunications,
and more.
A major obstacle to working in the field of network science is that
access to data is restricted due to a combination of security and
privacy concerns; yet models, algorithms, software, and hardware 
are struggling to keep pace with increasing demands for scalability 
and relevance. 
For these reasons, network science researchers need
scalable generative models for large-scale graphs.
Ideally, these generative models should capture salient features of
the networks being modeled.

Suppose we are given a graph representation of our data set.  We
introduce basic terminology for those unfamiliar with graph theory.
Let $G=(V,E)$ be an undirected, unweighted graph. We let $V$ denote
the set of \emph{vertices} or \emph{nodes} of the graph, and we let
$E$ denote the set of \emph{edges} where $(i,j) \in E$ means there is
an edge between nodes $i$ and $j$ or, equivalently, that nodes $i$ and
$j$ are \emph{adjacent}. We assume that the graph is simple, meaning
that the edges and undirected and unweighted and that there are no
self-edges.  Let $V_i = \set{j | (i,j) \in E}$ denote the set of
nodes that are adjacent to $i$ then the \emph{degree} of node $i$ is $d_i
= |V_i|$. The
\emph{transitivity} or \emph{clustering coefficient} of node $i$ is
defined as $c_i = |\set{(j,k) \in E | j,k \in V_i}| / \binom{d_i}{2}$, and it is a measure of the proportion of triangles that node
$i$ participates in compared to the number of possible triangles
\cite{WaSt98}. Roughly speaking, $c_i$ captures the social cohesion
around node $i$; for instance, in a clique (i.e., a graph in which
every node is adjacent to every other node), every node has $c_i = 1$.

In this paper, we consider how to reproduce two 
fundamental properties of graphs: the degree distribution and the
clustering coefficients by degree~\cite{ViBa12}. 
Let $V_d = \set{i | d_i =d}$ be the set of nodes of degree $d$ and
define $n_d = |V_d|$.  The degree distribution is specified by the
sequence $\ndseq$.  In most real-world networks
representing interaction data, there are a few nodes with high degrees
and many nodes with low degrees, with a smooth transition between. In
other words, the degree distribution is heavy-tailed, and this feature
has long been considered critical in distinguishing
real networks from arbitrary sparse
networks~\cite{BaAl99,ClShNe09,SaGaRoZh11}.
Let $c_d = \frac{1}{n_d} \sum_{i \in V_d} c_i$ be the average
clustering coefficient for nodes of degree $d$. The clustering
coefficients by degree are specified by the sequence $\cdseq$.  
The clustering coefficients of real-world graphs are much
higher than those of random graphs with the same degree distribution
\cite{GiNe02}.  Nonetheless, most generative models fail to match
clustering coefficients of real-world graphs~\cite{SaCaWiZa10}.

In previous work by a subset of the authors, we introduce the Block
Two-Level Erd\H{o}s--R\'enyi (BTER) \cite{SeKoPi12}.  The BTER model
can be tuned to capture both the degree distribution and degree-wise
clustering coefficients for real-world networks.  The inputs to the
BTER model are the sequences $\ndseq$ and $\cdseq$; our goal is to create a graph with similar behavior in these measures.  To do this, BTER works as
follows. The nodes are divided into \emph{affinity blocks}; see
\Fig{bin}. 
The fact that we can infer community structure from local clustering
coefficient measurements is justified in the original BTER paper
\cite{SeKoPi12} and now has additional theoretical justification
since it has recently been shown that triangle-rich graphs resemble
unions of dense blocks \cite{GuRoSe13}.
Each node has an assigned degree in such a way that, in
expectation, the degree distribution matches what it specified by
$\ndseq$. For each node, the links are divided between local links
within the affinity block (Phase 1 in \Fig{phase1}) and global links
that may connect to any node (Phase 2 in \Fig{phase2}). The proportion
of local links depends on the clustering coefficients specified by
$\cdseq$. Edges are generated by choosing endpoints in a
probabilistic way, as is explained in detail further on.
The goal of this paper is to explain the specifics of the model and
provide a scalable implementation. We also provide additional evidence
of the model's usefulness.

\begin{figure*}[thb]
  \centering
  \subfloat[Preprocessing: Distribution of nodes into affinity blocks]{\label{fig:bin}
    {\makebox[.32\textwidth]{\includegraphics[width=.3\textwidth,trim=0 0 0 0]{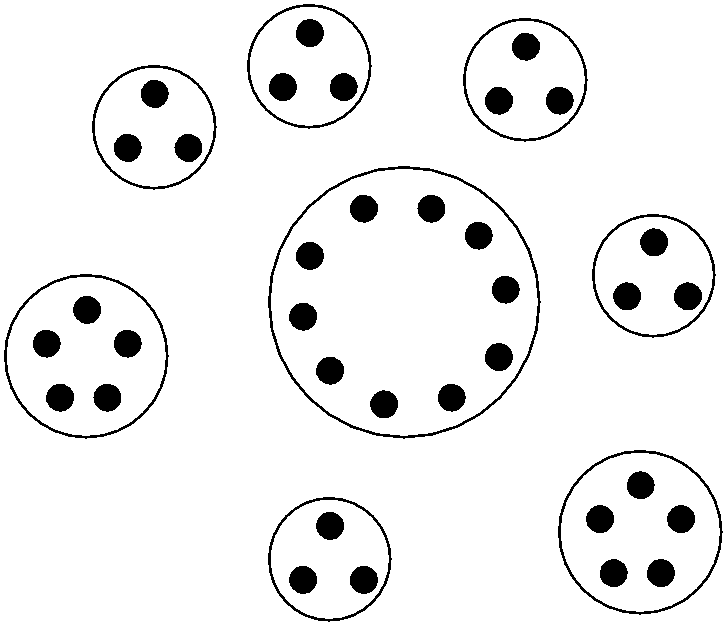}}}
  }
  \subfloat[Phase 1: Local links within each affinity block]{\label{fig:phase1}
    {\makebox[.32\textwidth]{\includegraphics[width=.3\textwidth,trim=0 0 0 0]{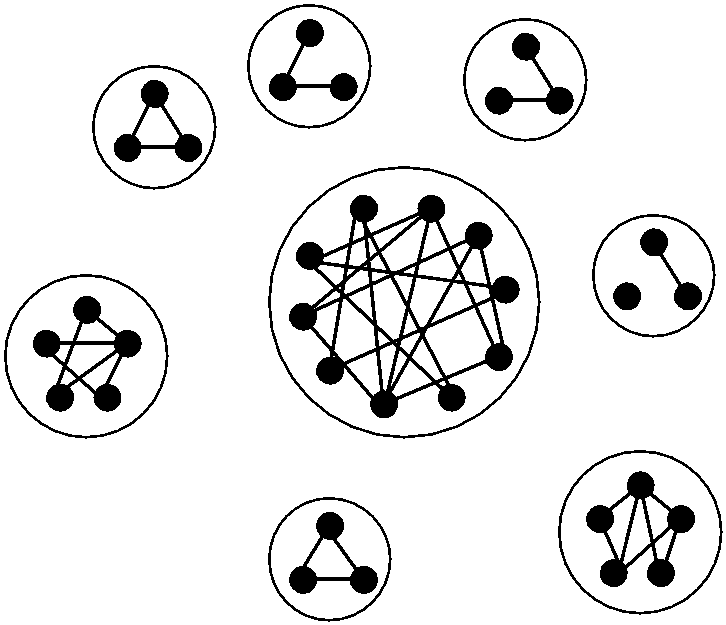}}}
  }
  \subfloat[Phase 2: Global links across affinity blocks]{\label{fig:phase2}
    {\makebox[.32\textwidth]{\includegraphics[width=.3\textwidth,trim=0 0 0 0]{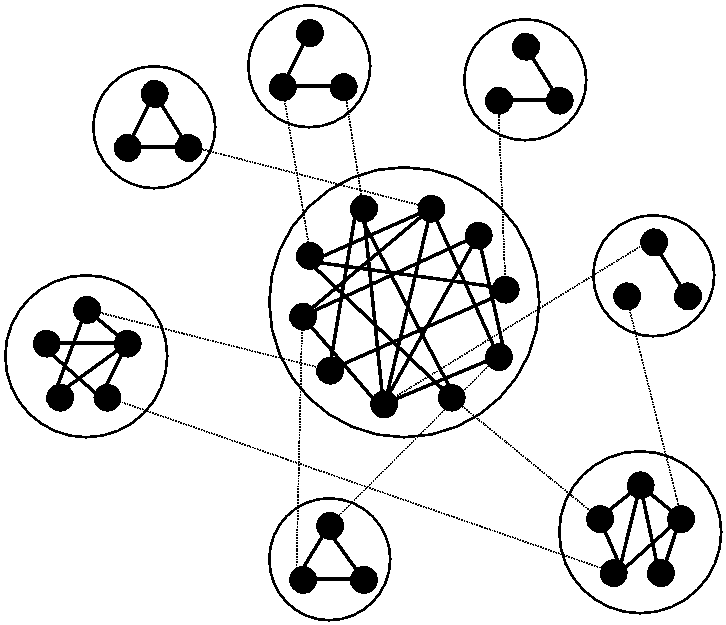}}}
  }
  \caption{BTER model phases \cite{SeKoPi12}.}
  \label{fig:bter}
\end{figure*}

\subsection{Contributions}

The BTER model \cite{SeKoPi12} was previously introduced as a scalable
model that can reproduce degree distributions and clustering
coefficients. The scalability is based on \emph{independent} edge
generation; i.e., there is no knowledge of previously-generated edges
in deciding on the next edge. However, the original paper
\cite{SeKoPi12} did not specify implementation details and moreover
recommended an  ad hoc procedure for some of the parameter
choices. 
In this paper, we make the following contributions:
\begin{asparaitem}
\item %
We provide a detailed reference implementation that clearly explains   how to
choose the BTER parameters to match the specified degree   distribution and
clustering coefficient profile. We note that there   is no iterative
optimization to fit BTER; the parameters of the   model are directly calculated
from the inputs.  The edges are   generated independently and in an arbitrary
order, so the BTER   generative model can potentially be used in streaming
scenarios.
\item %
We present efficient data structures for a scalable implementation,
requiring only $O(\dmax)$    storage and $O(\log \dmax)$ operations per edge, where $\dmax$ is the maximum degree.
Since our approach   generates all edges independently, it can be easily  
parallelized.  We provide examples demonstrating the scalability.
\item %
We demonstrate that BTER gives good approximations to the degree distributions and clustering coefficient behaviors of several large, heavy-tailed, real-world graphs.
We consider examples from the Laboratory for Web  
Algorithms~\cite{LWA}, including a graph with over 130 million nodes   and 4.6
billion edges,  the largest publicly
available graph that we are aware of.
We also compare BTER to competing methods on a pair of   smaller graphs.
\item %
We also consider how BTER may be used for arbitrary benchmarking   purposes,
when there is no target graph to match.  Since the model   requires a degree
distribution and clustering coefficient profile as   input, we focus on how to
generate these.  In particular, we   recommend the generalized log-normal
distribution for the degrees as   an alternative to the standard power law.
\end{asparaitem}

For very large graphs, the inputs can be expensive to compute,
especially the clustering coefficients. However, we have recently
proposed a sampling method that scales to very large graphs
\cite{SePiKo12,KoPiPlSe13}.

\subsection{Related Work}
\label{sec:related}
Since the goal of this paper is to focus on the implementation and
scalability of BTER, we limit our discussion to the most salient related models.
A more thorough discussion of related
work can be found in the paper that originally proposed the BTER model \cite{SeKoPi12}.

The majority of graph models add edges one at a time in a way that
each random edge influences the formation of future edges, making them
inherently serial and therefore unscalable. The classic example is Preferential Attachment
\cite{BaAl99}, but there are a variety of related models, e.g.,
\cite{KuRaRaSi00,LeKlFa07}. These models are more focused on capturing
qualitative properties of graphs and typically are difficult to match
to real-world data \cite{SaCaWiZa10}.  Perhaps the most relevant is
\cite{GuKr09}, which creates a graph with power law degree distribution and then ``rewires'' it
to improve the clustering coefficients.
The Musketeer model starts with a given graph and then does multilevel rewiring to attempt to preserve certain features of the original \cite{GuMeSa12}.
Another related model, the clustering model proposed by Newman
\cite{Ne03a}, assigns ``individuals'' to ``groups'' (a bipartite graph
with individual and group nodes) and then creates a graph of
connections between individuals by assigning connection probabilities
to each group; in other words, each group is modeled as an
Erd\H{o}s--R\'enyi graph.

A widely used model for modeling large-scale graphs is the Stochastic
Kronecker Graph (SKG) model, also known as R-MAT
\cite{ChZhFa04,LeChKlFa10}.  The generation process is easily
parallelized and can scale to very large graphs.  Notably, SKG has
been selected as the generator for the Graph 500 Supercomputer
Benchmark \cite{Graph500} and has been used in a variety of other
studies \cite{DoUrGiG10,MiBlWo10,MoKiNeVi10,MiWr12,GlOw12,FRD-arXiv-1210.5288,MiStBl12,Ke11}.  Unfortunately,
SKG has some drawbacks.
\begin{inparaenum}[(1)]
\item SKG can be extremely expensive to
fit to real data (using KronFit, the SKG parameter fitting algorithm proposed by the SKG inventors), and even then the fit is imperfect
\cite{LeChKlFa10}.
\item
It can generate only lognormal
tails (after suitable
addition of random noise)  \cite{JACM-SKG,SePiKo11}, limiting the
degree distributions that it can capture.
\item
Most importantly, SKG rarely closes wedges so the
clustering coefficients are much smaller than what is produced in real
data \cite{SaCaWiZa10,KoPiPlSe13}.
\end{inparaenum}

Another model of relevance is the Chung-Lu (CL) model \cite{ChLu02,
  ChLu02b, AiChLu01}. It is very similar to the edge-configuration
model of Newman et al.~\cite{NeWaSt02}. Let $d_i$ denote the
\emph{desired degree} for node $i$. In the Chung-Lu model, the probability
of an edge is proportional to the product of the degrees of its endpoints, i.e., the
probability of edge $(i,j)$ is $\propto d_i d_j$. Edges can be 
generated independently by picking endpoints proportional to their
desired degrees. If all degrees are the same, Chung-Lu reduces to the
well-known Erd\H{o}s--R\'enyi model \cite{ErRe60}.  The Chung-Lu model
is often used as a null model; for example, it is the basis of the
modularity metric
\cite{Ne06}. Graphs generated by the Chung-Lu model and the SKG model are, in fact, very similar
\cite{PiSeKo12}. The advantage of the Chung-Lu model is that it can be
better tuned to real-world degree distributions. The disadvantage of
the model is that, like SKG, it rarely closes wedges. 
Chung-Lu occurs as a special case of BTER when Phase 1 is skipped
(see \Sec{bter}).  The Chung-Lu construction is a very important part of BTER
and will be explained in more detail in the next section.

\section{Notation and background}
\label{sec:background}

Let $G=(V,E)$ be an undirected and unweighted graph, where $V$ denotes
the set of vertices and $E$ denotes the set of edges.
We let $n = |V|$ and $m=|E|$ denote the total number of vertices and
edges, respectively.
The set of node $i$'s neighbors and the degree of node $i$ are, respectively,
\begin{displaymath}
  V_i \equiv \set{j | (i,j) \in E} \qtext{and} d_i = |V_i|.
\end{displaymath}
The set of degree-$d$ nodes and the number of degree-$d$ nodes are, respectively,
\begin{displaymath}
  V_d = \set{ i | d_i = d } \qtext{and} n_d = |V_d|.
\end{displaymath}
The set $\set{n_d}$ defines the degree distribution.
Observe that the degree distribution specifies the total number of nodes and edges:
\begin{equation}
  \label{eq:nm}
  n = \sum_d n_d \qtext{and} m = \frac{1}{2} \sum_d d \cdot n_d.
\end{equation}
For convenience, we use the notation $\dmax = \max \set{ d_i | i \in V}$.

\subsection{Clustering coefficients}
We can discuss clustering coefficients in terms of wedges, closed
wedges, and triangles. A wedge is a path of length 2. \Fig{wedges}
shows a \emph{wedge} centered at node $j$, i.e., the path $i$-$j$-$k$
is a wedge.  We say wedge $i$-$j$-$k$ is closed if $(i,k) \in E$;
otherwise, the wedge is called open. A closed wedge forms a triangle,
i.e., a three-cycle.
\begin{figure}[th]
  \centering
    \begin{tikzpicture}[scale=0.3,
      nd/.style={circle,draw,fill=white,minimum size=4mm,inner sep=0pt}]
      \node (1) at (0,0) [nd] {$j$};
      \node (2) at (3,2) [nd] {$i$};
      \node (3) at (3,-2) [nd] {$k$};
      \draw (1) to (2);
      \draw (1) to (3);
      \draw [dashed] (2) to (3);
    \end{tikzpicture} 
  \caption{Wedge centered at node $j$. The wedge is closed if $(i,k)\in E$.}
  \label{fig:wedges}
\end{figure}
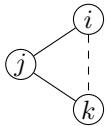
The number of wedges centered at node $i$ is $\binom{d_i}{2} = d_i(d_i-1)/2$. The clustering coefficient at node $i$ is
\begin{displaymath}
  c_i = \frac{\text{\# of closed wedges centered at node $i$}}{\text{\# wedges centered at node $i$}} = \frac{|\set{(j,k) \in E | j,k \in V(i)}|}{\binom{d_i}{2}}.
\end{displaymath}
This is the proportion of closed wedges divided by the total number of wedges.
For BTER, we are specifically interested in \emph{clustering coefficient per
  degree} \cite{WaSt98}, defined as
\begin{displaymath}
  c_d = \frac{\text{\# of closed wedges centered at a node of degree $d$}}{\text{\# wedges centered at a node of degree $d$}} 
  = \frac{1}{n_d} \sum_{i \in V_d} c_i.
\end{displaymath}
To summarize an entire graph, it is often convenient to consider the
\emph{global clustering coefficient} (GCC) \cite{WaSt98,BaWe00} , defined as
\begin{displaymath}
  c = \frac{\text{\# of closed wedges}}{\text{\# wedges}}.
\end{displaymath}
Note that this is \emph{not} the mean of the $c_i$'s. 
Since every triangle corresponds to three closed wedges, 
the number of closed wedges is three times the number of triangles.

\subsection{Erd\H{o}s--R\'enyi  Graphs}
As described in \Sec{bter}, BTER uses Erd\H{o}s--R\'enyi to model its affinity blocks.
An Erd\H{o}s--R\'enyi  graph
\cite{ErRe60} on $n$ vertices with connection probability $\rho$
is a graph such that each 
pair of vertices is independently connected with probability $\rho$. 
We refer to $\rho$ as the \emph{connectivity}.
If $\rho$ is
a constant, we call this a dense Erd\H{o}s--R\'enyi  graph; if $\rho = O(1/n)$, then
we call this a sparse Erd\H{o}s--R\'enyi  graph.

\subsection{Chung-Lu Graphs}
\label{sec:cl}

BTER uses a variation on Chung-Lu for its global links that span across
affinity blocks. Here we explain salient details that are relevant
to the implementation discussion.

The Chung-Lu model \cite{ChLu02, ChLu02b, AiChLu01}  approximates 
the probability that edge $(i,j)$ exists by
\begin{displaymath}
  \Prob{(i,j) \in E} = \frac{d_id_j}{2m},
\end{displaymath}
where $d_i$ indicates the desired degree of node $i$.
We can perform an independent coin flip for each of the possible $n^2$
edges, but this is too expensive.
We consider a ``fast'' version of the Chung-Lu model that generates
edges by choosing endpoints independently, proportional to their
degrees. Rather than doing an independent coin flip for each of $n^2$
possible edges, we can do $2m$ coin flips to pick two random endpoints
per edge. Each endpoint is picked independently such that the
probability of picking node $i$ is
\begin{displaymath}
  \Prob{i} = d_i / 2m.
\end{displaymath}
Hence, the probability (or, more precisely, the expected value) of
edge $(i,j)$ after picking $m$ edges is
\begin{displaymath}
   \Prob{(i,j) \in E} = m \cdot 2 \cdot \Prob{i} \cdot \Prob{j} = \frac{d_id_j}{2m}.
\end{displaymath}
The factor of 2 is because $(i,j)$ and $(j,i)$ are equivalent for an undirected graph.

Let $D_i$ correspond to the random variable that is the degree
of node $i$ for a single graph generated according to this
procedure. 
In the ``fast'' version, it is easy to see that the value of $D_i$ is Poisson distributed with mean $d_i$
\cite{FRD-arXiv-1210.5288}, so $\mathbb{E}(D_i) = d_i$, i.e., the expected value of
$D_i$ is $d_i$. 
Note that Chung-Lu admits non-integral desired degrees. If
the desired degree is $d_i=3.5$, then $\mathbb{E}(D_i) = 3.5$ even
though every realization $D_i$ is integral.

Despite the variation in individual degrees, the 
degree distribution is usually a good match to the desired one because there is spill-over from adjacent
degrees.  For instance, some nodes that were expected to become degree
4 are instead degree 5 and \emph{vice versa}. This will not be the
case if there are gaps in the degree distribution (i.e., $n_3, n_5 \gg
0$ and $n_4 = 0$). Also, degree-1 nodes may pose a particular problem.
According to the calculations in \cite{FRD-arXiv-1210.5288},
approximately 36\% of the pool of potential degree-1 nodes will not be
selected (i.e., have degree zero) and another 28\% will have degree 2 or
larger. To counteract these problems, \cite{FRD-arXiv-1210.5288}
proposes to increase the pool of potential degree-1 nodes while
keeping the total probability of potential degree-1 nodes
constant. 
Specifically, we ``blowup'' the set of degree-1 nodes by a factor $\beta \geq
1$. Hence, we add $(\beta -1)\cdot n_1$ nodes with desired degree
$d_1$, but we adjust the probability of picking any individual
degree-1 node so that
\begin{displaymath}
  \Prob{i} =
  \begin{cases}
    d_i / 2m & \text{if } d_i \geq 2,\\
    \beta^{-1}/2m & \text{if } d_i = 1.
  \end{cases}
\end{displaymath}
The BTER algorithm has a similar issue with degree-1 nodes, so the
reference implementation includes the option to specify a blowup
factor, $\beta$.

Finally, we note that since each endpoint and each edge is picked
independently, a graph generated according to fast Chung-Lu may contain
self-edges and repeat edges. We have ignored these details in the
discussion above because the practical impact has been small in our
experiments (we simply discard such edges).

\section{BTER Generative Graph Model}
\label{sec:bter}

We give a high-level overview of BTER in \Sec{bter:overview} and discuss the
challenges for a scalable implementation in \Sec{get-scale}. The remainder of
the section addresses the proposed solutions to these challenges and presents the implementation.

\subsection{The BTER Model} 
\label{sec:bter:overview}

BTER is based on the premise that a graph with a
heavy-tailed degree distribution and high clustering coefficients must contain
dense Erd\H{o}s--R\'enyi blocks; moreover, the distribution of the sizes of
those groups follows the same type of distribution of the degrees  \cite{SeKoPi12}.  Therefore,  BTER
organizes nodes into affinity blocks such that nodes within the same
affinity block have a much higher chance of being connected than nodes at
random, but BTER also behaves like the Chung-Lu model in that it is able to match an
arbitrary degree distribution.

The BTER model requires two user-specified inputs:
\begin{inparaenum}[(1)]
\item the desired degree distribution,
  $\ndseq$,  and
\item the desired clustering coefficients by degree, $\cdseq$.
\end{inparaenum}
These quantities may be measurements from an existing graph or set
arbitrarily, e.g., for benchmark purposes.
(We note that the original description in \cite{SeKoPi12} did not take
the original clustering coefficients but rather a function to
determine the connectivity.)
The desired number of nodes and edges can be computed from the degree distribution per \Eqn{nm}. 
There are three main steps to the graph generation, as described below.
The steps are depicted in \Fig{bter}.

\paragraph{Preprocessing} Imagine starting with $n$ isolated
vertices. Each vertex is assigned a degree, based on the degree
distribution $\{n_d\}$. So we arbitrarily assign $n_1$ vertices to
have degree $1$, $n_2$ to have to degree $2$, etc. We then partition
the $n$ vertices into affinity blocks.  For reasons that will be explained in the details that follow, an affinity block ideally
contains $d+1$ vertices that are assigned degree $d$. The idea is that each affinity block can potentially form a clique, in which case every node in that block has a clustering coefficient of 1.  
Note that there are many small blocks with
vertices of low degree, and a few large blocks of high degree
vertices.  At this stage, no edges have been added.

\paragraph{Phase 1} This phase adds edges \emph{within} each affinity
block. Each block is a dense Erd\H{o}s-R\'enyi graph, where the
density depends on the size of the block. For a block involving 
degree-$d$ vertices, the density is determined based on $c_d$. 
We show how to choose the connectivity within each block to ensure that
each vertex achieves its desired clustering coefficient 
and is not incident to more edges  than its desired
degree (in expectation).

\paragraph{Phase 2} This phase adds edges between the blocks. Consider
some vertex $i$ with an assigned degree $d_i$.  Suppose it is already
incident to $d'_i$ edges from Phase 1. We set $e_i = d_i - d'_i$ to be
the \emph{excess degree}\footnote{This definition of excess degree should not be confused with the ``excess degree distribution'' of Newman~\cite{Ne07}.} of $i$. We must create $e_i$ edges incident
to vertex $i$ to satisfy its degree requirement. We construct a
Chung-Lu graph with degree sequence $\set{e_i}_{i=1}^n$ to
complete the graph construction.

\subsection{Developing a scalable implementation} \label{sec:get-scale}

Our goal is to show that it is possible to have a highly scalable
implementation of the BTER method. 
The main goal is to have \emph{independent edge insertions} so that
the edge generation can be parallelized.

As stated, Phase 2 edge insertions must happen
after Phase 1, because we need to know the excess degrees. We parallelize this process
by computing the \emph{expected} excess degree. Given all the input parameters,
we can precompute the expected excess degree for any vertex (this requires compact
representations and data structures) during the preprocessing. 
From
this, we can precompute the total number of Phase~1 and Phase 2 edges. 

To perform a parallel edge insertion, we first decide randomly whether this should be
a Phase 1 or Phase 2 edge. For a Phase 1 edge, we select a
random affinity block (with the appropriate 
probability) and create an edge between two distinct randomly selected block members. For a Phase 2 edge, we perform
a Chung-Lu edge insertion based on expected excess degrees.
Because every edge is generated independently, there will be
duplicates, but these are discarded in the final graph.

Given the structure of parallel edge insertion, the main challenges in
developing a scalable implementation are as follows:
\begin{compactitem}
 \item \textbf{Preprocessing data structures.} A na\"ive implementation of the preprocessing
step would require $O(n)$ variables and storage by storing the Phase 1 and Phase 2 degree of every node. We design compact
representations and data structures for the affinity blocks. This 
contains all the relevant	information with minimal storage.
\item \textbf{Repeats in Phase 1.} Independent edge generation in 
Phase 1 leads to many repeated edges without enough distinct edges, 
and this affects the overall degree distribution when edge
repeats are removed. We show how to determine the
number of extra Phase 1 edges to be inserted
to rectify this.
\item \textbf{The degree-1 problem of Phase 2.} 
A parallel
implementation of Phase 2 results in numerous degree-1 
vertices becoming isolated. We use a fix for this discussed in \Sec{cl}, which is different than the one
proposed in the original BTER paper \cite{SeKoPi12}.
\end{compactitem}
Once these issues are addressed, we have an embarrassingly parallel edge generation algorithm that
requires only $O(\log \dmax)$ work per edge.
The remainder of this section gives an in-depth but informal presentation of our implementation. 
Detailed algorithm specifications  at pseudo-code level are also provided.  

\subsection{Preprocessing}
\label{sec:preprocessing}
Let $d_i$ denote the (desired) total degree of vertex $i$ and $w_i$ denote its (desired) excess degree.
For convenience, the nodes are indexed by increasing degree \emph{except} for
degree-1 nodes, which are indexed last. Hence, if
$d_i, d_j \geq 2$ and $i < j$, then $d_i \leq d_j$. As an example, see the numbering
in \Fig{example}.

\begin{figure}[htb]
  \centering
%
%
%
\def\hv{0.5} %
\def\vv{9.5} %
\def\hb{0.1} %
\def\vb{0.35} %

\newcommand\funky[4]
{
    \filldraw[rounded corners]
    (\hb,\vv-#1-1) -- 
    (\hb,\vv-#1+\vb-0.8) -- 
    (#2-1+\hb,\vv-#1+\vb-0.8) -- 
    (#2-1+\hb,\vv-#1+\vb) -- 
    (#2+#3-1-\hb,\vv-#1+\vb) -- 
    (#2+#3-1-\hb,\vv-#1+\vb-0.9) --
    (#4-\hb,\vv-#1+\vb-0.9) --
    (#4-\hb,\vv-#1-1-\vb) --
    (\hb,\vv-#1-1-\vb) --
    (\hb,\vv-#1-1);
}

\newcommand\blocks[3]
{
  \foreach \s in {1,...,#3}
  {
    \filldraw[rounded corners] 
    (\s + \s*#1 - #1  - 1 + \hb + #2, \vv - #1 -\vb)
    rectangle +(#1 + 1 - 2*\hb, 2*\vb);
  }
}

\newcommand{\drawnodes}[3]
{
  \foreach \s in {#2,...,#3}
  {
    \node[vertex] at (\s-\hv-#2+1,\vv-#1) {\s};
  }
}

  \tikzstyle{vertex}=[draw,circle,fill=white,minimum size=4mm,inner sep=0]
  \begin{tikzpicture}[x=0.5cm,y=0.7cm]\footnotesize

    \begin{scope}[color=green!50!white]
      \blocks{2}{0}{6}
      \funky{2}{19}{2}{1}
    \end{scope}

    \begin{scope}[color=red!50!white]
      \blocks{3}{1}{2}
      \funky{3}{10}{1}{3}
    \end{scope}

    \begin{scope}[color=blue!50!white]
      \funky{4}{4}{3}{2}
    \end{scope}

    \begin{scope}[color=orange!50!white]
      \funky{5}{3}{2}{3}
      \funky{6}{1}{3}{1}
    \end{scope}

    \begin{scope}[color=purple!50!white]
      \funky{7}{2}{1}{1}
      \funky{8}{1}{1}{1}
    \end{scope}

    \drawnodes{1}{48}{65}
    \node at (19-\hv,\vv-1) {$\cdots$};
    \node[vertex] at (20-\hv,\vv-1) {73};
    \drawnodes{2}{1}{20}
    \drawnodes{3}{21}{30}
    \drawnodes{4}{31}{36}
    \drawnodes{5}{37}{40}
    \drawnodes{6}{41}{43}
    \drawnodes{7}{44}{45}
    \drawnodes{8}{46}{46}
    \drawnodes{9}{47}{47}

    \foreach \s in {1,...,9}
    {
      \node[left] at (0,\vv-\s) {Degree \s};
    }

  \end{tikzpicture} 
%
%
%

%
  \caption{Example of affinity blocks and groups for a graph of 73 nodes. Connected nodes denote the same affinity block. The same color denotes the same group membership.}
  \label{fig:example}
\end{figure}

In the preprocessing phase, we assign nodes to affinity blocks. We let $b_i$ denote the block assignment of node $i$.
For the assignment to affinity blocks, degree-1 nodes
are ignored.  The remaining nodes are assigned to
affinity blocks in order (of degree). A \emph{homogeneous} affinity block has
$d+1$ nodes of degree $d$. In \Fig{example}, the blocks are denoted by
colored ovals, and 4-5-6 is a
homogeneous block. The vast majority of (low-degree) nodes are
assigned to homogeneous affinity blocks. However, there are not always
enough nodes of degree $d$ to fill in a homogeneous block; therefore,
we also have a few (at most $\dmax$) \emph{heterogeneous} affinity
blocks with nodes of different degrees. For instance, in \Fig{example}, 19-20-21 is
a heterogeneous block.

For Phase 1, we divide the data into blocks. We let $i_b$ denote the starting index of block $b$, $d_b$ denote the minimum
degree of block $b$, $\rho_b$ denote its desired connectivity, $n_b$ denote the
number of nodes, $m_b$ denote
the desired number of edges in the affinity block, and $w_b$ denote the block \emph{weight} which is the number of edges to be inserted that accounts for expected repeats (see \Sec{phase2}).
Not all this information needs to be saved. To generate Phase 1 edges, we only need $i_b$, $n_b$, and $w_b$ to know the nodes that are involved with each block and how many edges to insert.
The process will insert $w_b$ edges which will, in expectation, produce $m_b$ unique edges so that each node within the block will have expected local (i.e., inside the affinity block) degree $m_b / {n_b \choose 2}$.
However, these three values are not what is stored since we can store fewer values for more efficiency.

We can compress the data further since all affinity blocks of the same size and
minimum degree can be grouped together into an \emph{affinity group} --- all
blocks in the same group share the same block size
and weight. 
In \Fig{example}, all nodes with the same color are in the same
affinity group, e.g., 1-21 are in the same affinity group, likewise
nodes 22-33, etc.
The information needed to store an affinity group boils
down to 4 items of information: $i_g$, the starting index of the
group; $b_g$, the number of blocks in the group; $n_g$, the size of each block; and
$w_g$, the total weight of all blocks in the group. The maximum number of groups is bounded by
$\dmax$, so we store at most $4\cdot\dmax$ values.

Phase 2 needs to build a Chung-Lu model using the excess degrees, $\set{e_i}$. We note that these values are computed in advance and are fixed through the entire process.
We can exploit the homogeneity of nodes of the same degree to save space for this calculation as well.
In a block where all nodes
have the same degree, we say the nodes are \emph{bulk nodes}.  In a
block with nodes of differing degrees, all nodes with degree equal to
the minimum degree are still bulk nodes. We refer to the remaining nodes as
 \emph{filler nodes}. In \Fig{example}, nodes 1-20 are degree-2
bulk nodes, nodes 22-30 are degree-3 bulk nodes, nodes 34-36 are
degree-4 bulk nodes, and so on. Node 21 is a degree-3 filler node, nodes
31-33 are degree-4 filler nodes, etc. It is possible to have either no
bulk nodes or no filler nodes for a given degree. In \Fig{example},
there are no filler degree-2 nodes and no bulk degree-6 nodes. 
Observe all bulk nodes of degree
$d$ (for any $d$) are in blocks of the same size and connectivity (i.e., same internal degree);
therefore, they all have the same excess degree. The filler nodes of
degree $d$ (for any $d$) participate in at most one block and so all
have the same excess degree. This means that there are two possible
values for excess degree for the set of nodes with desired degree
$d$. Hence, to generate Phase 2 links we need just 5 values for degree $d$: $\ndfill$ and $\ndbulk$, the number of
filler and bulk nodes; $\wdfill$ and $\wdbulk$, the total weight for filler nodes and for bulks nodes, which is exactly the total excess degree for each;
and the starting index $i_d$. Technically, the starting index can be
computed from $\set{n_d}$, but it reduces the work to store these indices
explicitly. However, we actually do not store $\ndbulk$ since it can be computed using $n_d$ and $\ndfill$. 
Additionally, because of how they are used, it is more convenient to store $w_d = \wdfill + \wdbulk$ and $r_d = \wdfill / w_d$ than $\wdfill$ and $\wdbulk$.
Hence, the total working storage for Phase 2 information
is $5\cdot\dmax$ values.

The total storage (including inputs) needed by the generation routine is $10
\cdot \dmax$ values (the sequence $\set{c_d}$ is not retained). It is possible to
modify the core data structures  to store only  the  distinct degrees instead of
maintaining a continuum of degrees through $\dmax$. This would change the
storage requirement to $O(\duniq)$ instead of $O(\dmax)$,  where $\duniq$ is the
number of distinct degrees in the graph.   However, we present our ideas  based
$O(\dmax)$ storage for clarity of presentation.   

For convenience, notation is described in \Tab{variables}. The node-
and block-level variables are not used in the algorithms.

\newcolumntype{m}{>{$}c<{$}}
\newcommand\grouptitle[2]{\multicolumn{4}{|c|}{\textbf{\em #1} $#2$}}
\begin{table}[htbp]
 \caption{Description of variables. A $\bigstar$ symbol in the second
  column indicates that this variable is explicitly computed and stored
  for use by the sampling procedure; a $\times$ symbol means the variable is not explicitly computed. The last column gives  the formula to derive it from
  the stored values.}
  \label{tab:variables}
  \centering\small
  \begin{tabular}{|m|m|l|l|}
    \hline
    \grouptitle{Scalars}{} \\ \hline
    n && Total number of nodes & $n = \sum n_d$\\ \hline
    m && Total number of edges & $m = \frac{1}{2} \sum d \cdot n_d$ \\ \hline
    w && Total number of edges to insert & $w = w^{(1)} + w^{(2)}$ \\ \hline
    \dmax & \bigstar & Largest (desired) degree &  \\ \hline
    \bmax & \times & Total number of affinity blocks & $\bmax = \sum_g b_g$ \\ \hline
    \gmax & \bigstar & Total number of affinity groups & $\gmax \leq  \dmax$ \\ \hline
    \beta & \bigstar & Blowup factor for degree-1 vertices & \\ \hline
    \grouptitle{Node Level}{i=1,\dots,n} \\ \hline
    d_i & \times & Degree (desired) of node $i$ & \\ \hline
    b_i & \times & Block id for degree $i$ & \\ \hline
    w_i & \times & Excess degree of node $i$ & $w_i = \frac{1}{2} [d_i -
(\rho_{b_i} \cdot d_{b_i})]$ \\ \hline
    \grouptitle{Block Level}{b=1,\dots,\bmax} \\ \hline
    d_b & \times & Minimum degree in block $b$ & \\ \hline
    \rho_b & \times & Connectivity of degree $b$ & $\rho_b = \sqrt[3]{c_{d_b}}$ \\ \hline
    n_b & \times & Number of nodes in block $b$ & $n_b = d_b + 1$ \\ \hline
    m_b & \times & Number of unique edges in block $b$ & $m_b = \rho_b \binom{n_b}{2}$ \\ \hline
    w_b & \times & Weight of block $b$ & $w_b = \binom{n_b}{2} \ln(1/(1-\rho_b))$ \\ \hline
    \grouptitle{Degree Level}{d=1,\dots,\dmax} \\ \hline
    n_d & \bigstar & Number of nodes of degree $d$ & \\ \hline
    c_d & \bigstar & Mean clustering coefficient for nodes of degree $d$ & \\ \hline
    i_d & \bigstar & Index of first degree of degree $d$ & \\ \hline
    n'_d & & Number of nodes of degree greater than $d$ & $n'_d = \sum_{d' > d} n'_d$ \\ \hline
    \ndfill & \bigstar & Number of fill nodes of degree $d$ & \\ \hline
    \ndbulk & & Number of bulk nodes of degree $d$ & $\ndbulk = n_d - \ndfill$ \\ \hline
    w_d & \bigstar & Excess degree of nodes of degree $d$ & \\ \hline
    r_d & \bigstar & Ratio of fill excess degree for degree $d$ & \\ \hline
    \wdfill & & Excess degree of fill nodes of degree $d$ & $\wdfill = r_d \cdot w_d$ \\ \hline
    \wdbulk & & Excess degree of bulk nodes of degree $d$ & $\wdbulk - w_d - \wdfill$ \\ \hline
    \grouptitle{Group Level}{g=1,\dots,\gmax} \\ \hline
    i_g & \bigstar & Index of first node in group $g$ & \\ \hline
    b_g & \bigstar & Number of affinity blocks in group $g$ & \\ \hline
    n_g & \bigstar & Number of nodes per block in group $g$ & \\ \hline
    w_g & \bigstar & Weight of group $g$ (including duplicate edges) & \\ \hline
    \grouptitle{Phase Level}{k=1,2} \\ \hline
    w^{(k)} & & Weight of phase $k$ & $w^{(1)} = \sum_g w_g$\\ 
    & & & $w^{(2)} = \sum w_d$ \\ \hline
  \end{tabular}
\end{table}

\subsection{Preprocessing Algorithm}
The BTER Setup procedure is described in \Alg{setup}. The inputs are the degree
distribution, $\set{n_d}$; the clustering coefficients per degree,
$\set{c_d}$; and the blowup factor for degree-1 nodes, $\beta$.

The method precomputes the index for the first node of each degree,
$\set{i_d}$, and the number of nodes with degree greater than the degree
$d$, $\set{n'_d}$. The latter is not saved after the preprocessing phase.

The degree-1 nodes are handled specially.  All degree-1 nodes are designated as
``fill'' nodes. The number of  candidate degree-1
nodes may be increased using the blowup factor, $\beta$. However, if
the blowup is used, the majority of the (desired) degree-1 nodes will
ultimately have degree 0 and can be removed in post-processing.

The main loop walks through each degree, determining the information
for Phases 1 and 2. 

It first allocates degree-$d$
nodes to
be fill nodes for the last incomplete block, if needed. The number
of nodes necessary to  complete the last incomplete block in the previous group is denoted by
$\nfillblk$. The excess degree of any fill nodes depends on the
internal degree of the last incomplete block, denoted by
$\degblk$. The excess degree is used to determine the weight of the
degree-$d$ fill nodes for phase 2, $\wdfill$.

If more nodes of degree $d$ remain, these are allocated as bulk
nodes and a new group is formed. 
The number of bulk nodes of degree $d$ is denoted by $\ndbulk$.
For the new group, we determine the index of the first node, the
number of blocks, and the size of each block.
The very last block of the very
last group is special because the remaining nodes may not be enough to fill it. For simplicity and because it is often the case for
heavy-tailed networks, we assume the last group contains only a one
block. This makes handling it as a special case simpler.
We compute
excess degree for these bulk nodes and then the corresponding weight
of degree-$d$ bulk nodes for Phase 2, $\wdbulk$. 
We also compute the weight of the group, $w_g$ according to a special formula as justified in \Sec{phase2}. 
Finally, we compute the number of nodes needed to fill out the final block in the group currently
being processed, $\nfillblk$.

Rather than storing $\wdfill$ and $\wdbulk$ directly, it is easier (for
the edge generation phase) to
have their sum, $w_d$, and the ratio of fill nodes, $r_d$.
Likewise, we do not return $\ndbulk$ since it can be easily recomputed
using $n_d$ and $\ndfill$. We do return $i_d$, but this could be
omitted and recomputed if that were more efficient (e.g., reducing
communication to workers in a parallel setting). Finally, we no longer need to keep
$\set{c_d}$ after the preprocessing is complete.

\begin{algorithm}[p]
  \caption{BTER Setup}
  \label{alg:setup}
  \begin{algorithmic}[1]\small
    \Procedure{bter\_setup}{$\set{n_d}$, $\set{c_d}$, $\beta$}
    \Statex
    \Statex \emph{Number nodes from least degree to greatest, except degree-1 nodes are last}
    \State $i_2 \gets 1$
    \For{$d=3,\dots,\dmax$}
    \State $i_d \gets i_{d-1} + n_{d-1}$
    \EndFor
    \State $i_1 \gets i_{\dmax} + n_{\dmax}$
    \Statex
    \Statex \emph{Compute number of nodes with degree greater than $d$}
    \For{$d=1,\dots,\dmax$}
    \State $n'_d \gets \sum_{d' > d} n'_d$ 
    \EndFor

    \Statex
    \Statex \emph{Handle degree-1 nodes}
    \State \label{line:blowup}
    $\ndfill[1] \gets \beta \cdot n_1$, $w_1 \gets \frac{1}{2} n_1$, $\rdfill[1] \gets 1$
    \Statex

    \Statex \emph{Main loop}
    \State $g \gets 0$, $\nfillblk \gets 0$, $\degblk \gets 0$
    \For{$d =2,\dots,\dmax$} 
    \If{$\nfillblk > 0$}
    \Comment{Try to fill incomplete block from current group}
    \State $\ndfill \gets \min(\nfillblk, n_d)$  
    \State $\nfillblk \gets \nfillblk - \ndfill$
    \State $\wdfill \gets \frac{1}{2} \ndfill  (d - \degblk)$
    \Else
    \State $\ndfill \gets 0$, $\wdfill \gets 0$
    \EndIf
    \State $\ndbulk \gets n_d - \ndfill$
    \If{$\ndbulk > 0$} 
    \Comment{Create a new group for degree-$d$ bulk nodes}
    \State $g \gets g + 1$
    \State $i_g \gets i_d + \ndfill$
    \State $b_g \gets \ceil{\ndbulk / (d+1)}$ 
    \State $n_g \gets d+1$ 
    \If{$b_g \cdot (d+1) > (n'_d + \ndbulk)$} 
    \Comment{Special handing of last group}
    \State \textbf{if} {$b_g \neq 1$} \textbf{then throw error}
    \State $n_g \gets (n'_d + \ndbulk)$ 
    \EndIf
    \State $\rho_* \gets \sqrt[3]{c_d}$
    \State $\degblk \gets (n_g - 1) \cdot \rho_*$
    \State $\wdbulk \gets \frac{1}{2} \ndbulk \cdot (d - \degblk)$
    \State $w_g \gets b_g \cdot \frac{1}{2} n_g(n_g-1) \cdot \log(1/1-\rho_*)$ \label{line:setup-wl}
    \State $\nfillblk \gets (b_g \cdot n_g) - \ndbulk$
    \Else
    \State $\wdbulk \gets 0$
    \EndIf
    \State $w_d \gets \wdfill + \wdbulk$, $r_d \gets \wdfill / w_d$

    \EndFor

    \Statex

    \State \Return $\set{i_d}, \set{w_d}, \set{r_d}, \set{\ndfill}, \set{w_g}, \set{i_g}, \set{b_g}, \set{n_g}$
    
    \EndProcedure

  \end{algorithmic}
\end{algorithm}

\subsection{Phase 1}
\label{sec:phase1}

Phase 1 creates intra-block links.    Each affinity block is modeled as
an Erd\H{o}s--R\'enyi graph. An overwhelming majority of the triangles are formed in this phase, and thus we pick the Erd\H{o}s--R\'enyi constant, $\rho$, for the block to match the target clustering coefficient   $c$.  A vertex of degree $d$ and clustering coefficient $c$  is incident to $c \cdot \binom{d}{2}$ triangles.   Assume this vertex is  grouped with other vertices of degree $d$ into a block with $d+1$ vertices, which holds  for all homogeneous blocks.  If we build an Erd\H{o}s--R\'enyi  graph of this block with parameter $\rho$, then this vertex is expected to be incident to $\binom{d}{2}\rho^3$ triangles.  Solving for $\rho$ yields $\rho = \sqrt[3]{c}$. 
Therefore, for block $b$, the connectivity  is $\rho_b = \sqrt[3]{c_{d_b}}$ where $d_b$ denotes the minimum degree in the block (since most blocks are homogeneous, this choice works well). Note that the  clustering coefficients  of vertices  will be higher  if we only consider  the affinity blocks. This is to compensate for the edges  that will be  added in Phase 2 to increase  the number of wedges, likely without contributing any triangles.

The difficulty in Phase 1 is that we expect a preponderance of repeat
edges because edges are generated independently.  Consider affinity block $b$ with $n_b$ nodes and
connectivity $\rho_b$, meaning that each node in block $b$ wants
internal degree $\rho_b \cdot d_b$. BTER wants approximately $m_b =
\rho_b \binom{n_b}{2}$ \emph{distinct} edges in block $b$.  Determining the number of draws
with replacement to get a desired number of distinct items can be cast as 
a \emph{coupon collector problem}. Specifically, the coupon collector problem is as follows: ``Suppose we have a box with $x$ distinct coupons. We draw a coupon and return it to the box (sampling with replacement). How many draws do we need to find $y$ distinct coupons?'' Our problem is slightly different than the standard problem since we want to draw $y$ distinct edges where $y = m_b$ may not be integral; this is fine since the goal is to achieve $m_b$ distinct edges in expectation.
We have developed a good approximation for the expected number of 
edges that need to be inserted:
\begin{equation}\label{eq:cc}
  w_b = \binom{n_b}{2} \ln(1/(1-\rho_b)).
\end{equation}
The proof is provided in \App{coupon_collector}.

We illustrate the utility of equation \Eqn{cc} with an example with $n_b=10$ nodes and
connectivity $\rho_b=0.5$, corresponding to $m_b=22.5$ edges, on
average. In this case, the formula predicts that we need to do
$w_b=31.1916$ draws, in expectation, to see the desired number of
unique edges, in expectation.
We do 10,000
random experiments as follows. For $i=1,\dots,10,000$, the random
variable $X_i \sim \text{Poisson}(w_b)$ is the number of items \emph{drawn}
from the $\binom{n_b}{2} = 45$ possible edges, and $Y_i$ is the number of
those items that are \emph{unique}. A histogram of the $Y_i$ values is shown
in \Fig{coupon_collector}. The average number of unique items is exactly  the desired value.
\begin{figure}[htbp]
  \centering
  \includegraphics[height=2.5in]{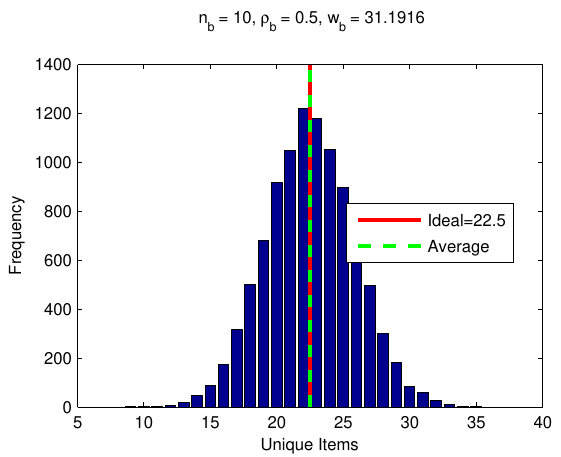}
  \caption{Distribution of the number of unique edges on 10,000 random
  trials.}
  \label{fig:coupon_collector}
\end{figure}

From equation \Eqn{cc}, we can determine the number of extra edges needed in Phase 1. Specifically, we
insert $w^{(1)} = \sum_b w_b$ edges to get a total of
$m^{(1)} = \sum_b m_b$ unique Phase 1 edges.
We let $w_g$ be the sum of the weights of all its constituent blocks.
For generating a Phase 1 edge, the process for a single edge proceeds as follows:
\begin{compactenum}
  \item Pick an affinity group randomly, where the probability of picking group $g$ is proportional to its weight $w_g$. 
  \item Pick a block within the affinity group uniformly at random
    (all blocks within the same group have the same weight).
  \item Pick two nodes  uniformly at random \emph{without replacement}\footnote{The terminology ``without replacement'' means that the first node is selected uniformly at random from the $n_b$ nodes in block $b$ and the second nodes is selected uniformly at random from the \emph{remaining} $n_b-1$ nodes in block $b$.}
    from the selected block --- these two nodes form an edge.
\end{compactenum}
The first step is a weighted sampling step and requires $O(\log \gmax)$ work,
where $\gmax \leq \dmax$ is the total number of affinity groups. The
other 2 steps are constant time operations.

\subsection{Phase 2}
\label{sec:phase2}
Phase 2 is simply applying the  Chung-Lu model on the expected excess degrees.
In creating an edge, we choose two nodes
independently. Those nodes are chosen proportional to their excess
degree. For node $i$ in group $b$, let $w_i = \frac{1}{2} [d_i -
(\rho_b \cdot d_b)]$ denote half its excess 
degree. The total number of edges that should be inserted
in Phase 2 is $w^{(2)} = \sum_i w_i$. 
Ignoring duplicate edges (which are fairly rare in our experiments), we have $m^{(2)} = w^{(2)}$ Phase 2 edges.

Let $\ndfill$ and $\ndbulk$ be the number of filler and bulk nodes of
degree $d$,
let $w_d = \sum_{i \in V_d} w_i$ be the weight of all degree-$d$ nodes,
and let $r_d$ be the proportion that are filler nodes.
Inserting a Phase 2 edge proceeds as follows:
\begin{compactenum}
\item Pick degree $d$, where the probability of picking degree $d$ is proportional to $w_d$.
\item Pick between filler and bulk nodes by selecting a uniform random number $x \in [0,1]$ and choosing filler if $x < r_d$ and bulk otherwise.
\item Pick a filler or bulk node (as determined in Step 2)
  of degree $d$, uniformly at random.
\end{compactenum}
The first step is a weighted sampling step and required
$O(\log(\dmax))$ work, while the other two steps are constant time operations.

One complication in Phase 2 is that getting the correct number
of degree-1 nodes poses a problem --- approximately 36\% of the pool of
potential degree-1 nodes will not be selected and another 28\% will
have degree 2 or larger. A fix for this problem has been proposed in
\cite{FRD-arXiv-1210.5288}, which involves increasing the pool of
degree-1 nodes, without changing the  the expected number of edges  that will be connected to these vertices.  This increase in the  pool size is  controlled by  the ``blowup factor,''  $\beta \geq 1$.  This
is included in the setup described in \Alg{setup} (\Line{blowup}).

\subsection{Independent Edge Generation}
\label{sec:dedup}

Lastly, we pull everything together to explain the independent edge
generation.  We insert a total of $w = w^{(1)} + w^{(2)}$ edges,
flipping a weighted coin for each edge to determine if it is Phase 1
or Phase 2. We expect to generate a total of $m = m^{(1)} + m^{(2)}$ edges.

Generating the edges is extremely inexpensive: $O(\log(\dmax))$ per
edge. The expensive step is de-duplication where extra copies of repeated edges are removed. The same difficulty exists
for the current Graph 500 (SKG) benchmark. Some may argue that
duplicate edges are a useful feature since real data also has
duplicates, but it is not clear that the 
duplication rates are similar to those observed in real data. 

\subsection{BTER Sampling Implementation}
\label{sec:implementation}

BTER edge generation is shown in \Alg{sample}. 
The procedure \textsc{Random\_Sample} does a weighted sampling
according to a specified discrete distribution. For $p$ bins, the cost
is $O(\log(p))$.
For each edge, we
randomly select between the phases using a weighted coin. 
A Phase 1 edge requires one sample from a discrete distribution of size
$\gmax$ and three additional random values drawn uniformly from
$[0,1]$. A Phase 2 edges requires two samples from a discrete distribution of size
$\dmax$ and four additional random values drawn uniformly from
$[0,1]$. Since $\gmax \leq \dmax$, an upper bound on the cost per edge is the cost of
one discrete random sample on a distribution of size $\dmax$ plus four
random values drawn uniformly from
$[0,1]$.

In \Alg{sample}, we generate each edge independently. It
may also be possible to ``bulk'' the computations by first determining
the total number of edges for each phase and perform the computation
for each phase separately. Within each phase, the procedure itself can
be easily vectorized  to boost runtime performance, as in MATLAB.

\begin{algorithm}[htbp]
  \caption{BTER Sample}
  \label{alg:sample}
  \begin{algorithmic}[1]\small
    \Procedure{bter\_sample}{$\set{n_d}, \set{i_d}, \set{w_d},
      \set{r_d}, \set{\ndfill}, \set{w_g}, \set{i_g}, \set{b_g},
      \set{n_g}$} 
    \State $w^{(1)} \gets \sum_g w_g$, 
    $w^{(2)} \gets \sum w_d$, $w \gets w^{(1)} + w^{(2)}$
    \State $E \gets \emptyset$
    \For{$j=1,\dots,w$}
    \State $r \sim U[0,1]$
    \If{$r < w^{(1)}/w$}
    \State $E \gets E \; \cup $ \textsc{bter\_sample\_phase1}($\set{w_g}, \set{i_g}, \set{b_g},
      \set{n_g}$)
    \Else
    \State $E \gets E \; \cup $ \textsc{bter\_sample\_phase2}($\set{w_d}, \set{r_d}, \set{n_d},
      \set{\ndfill}, \set{i_d}$)
    \EndIf
    \EndFor
    \State \Return $E$
    \EndProcedure
    \Statex

    \Procedure{bter\_sample\_phase1}{$\set{w_g}, \set{i_g}, \set{b_g},
      \set{n_g}$}
    \State $g \gets $ \textsc{random\_sample}($\set{w_g}$)
    \label{line:sample-l}
    \Comment{Choose group}
    \State $r_1 \sim U[0,1]$,
    $\delta = i_g + \lfloor r_1 \cdot b_g \rfloor \cdot n_g$
    \label{line:sample-delta}
    \Comment{Choose block and compute its offset}
    \State $r_2 \sim U[0,1]$,
    $i \gets \lfloor r_2 \cdot n_g \rfloor + \delta$
    \label{line:sample-i}
    \Comment{Choose 1st node}
    \State $r_3 \sim U[0,1]$,
    $j \gets \lfloor r_3 \cdot (n_g-1) \rfloor + \delta$
    \label{line:sample-ii}
    \Comment{Choose 2nd node}
    \If{$j \geq i$}
    \State $j \gets j + 1$
    \EndIf \label{line:sample-ii-end}
    \State \Return $(i,j)$
    \EndProcedure
    \Statex

    \Procedure{bter\_sample\_phase2}{$\set{w_d}, \set{r_d}, \set{n_d},
      \set{\ndfill}, \set{i_d}$}
    \State $i \gets $ \textsc{bter\_sample\_phase2\_node}($\set{w_d}, \set{r_d}, \set{n_d},
      \set{\ndfill}, \set{i_d}$)
    \State $j \gets $ \textsc{bter\_sample\_phase2\_node}($\set{w_d}, \set{r_d}, \set{n_d},
      \set{\ndfill}, \set{i_d}$)
    \State \Return $(i,j)$
    \EndProcedure
    
    \Statex 

    \Procedure{bter\_sample\_phase2\_node}{$\set{w_d}, \set{r_d}, \set{n_d},
      \set{\ndfill}, \set{i_d}$}
    \State $d \gets $ \textsc{random\_sample}($\set{w_d}$)
    \Comment{Choose degree}
    \State $r_1 \sim U[0,1]$, $r_2 \sim U[0,1]$
    \If{$r_1 < r_d$}
    \State $i \gets \lfloor r_2 \cdot \ndfill \rfloor + i_d $
    \Comment{Fill node}
    \Else
    \State $i \gets \lfloor r_2 \cdot (n_d - \ndfill) \rfloor + (i_d +
    \ndfill$)
    \Comment{Bulk node}
    \EndIf
    \State \Return $i$
    \EndProcedure
  \end{algorithmic}
\end{algorithm}

\subsection{Edge Deduplication}
Any method can be used for deduplication. In general, the simplest
procedure is to hash the edges in such a way that $(i,j)$ and $(j,i)$
hash to the same key.  Then it is easy enough to sort each bucket to
remove duplicates. In a parallel environment, since we are hashing by
edge and not vertex, there should not be load balancing problems. In
fact, hashing by a single endpoint is not recommended because of the
heavy-tailed nature of the graph.

\subsection{Implementations}
\label{sec:implementations}
We have a reference implementation in MATLAB that is available at \url{http://www.sandia.gov/~tgkolda/feastpack/}.
We have also implemented the method in
Hadoop MapReduce and use this version in some of our experiments. 
Since the BTER algorithm generates edges independently, map tasks can perform
all the work of creating edges and reduce tasks simply remove duplicate edges;
hence, the implementation runs as a single MapReduce job.
Each map task is given the desired degree and clustering coefficient, which is
sufficient to compute affinity blocks and sampling probabilities.  Each map
task uses a different seed for random number generation used in creating edges.
Map tasks are assigned a fixed number of edges to generate.  The default is
one million edges, so a graph of $w$ edges requires $w / 10^6$ map tasks.
There is no HDFS input file for the map tasks; instead, we wrote a subclass of
$\textsc{org.apache.hadoop.mapreduce.InputFormat}$
that causes a map task to generate a given number of records.
For each edge, the map emits a key-value pair consisting of a hash value for
the edge (based on the two endpoints) and the endpoints.  Note that there is no
assignment of specific nodes to specific map or reduce tasks.
The reducer tasks collect edges with the same hash value and remove any duplicates
before emitting the final list of all edges.  We enable Hadoop compression
between the map and reduce phases for faster performance.

\section{Numerical Comparisons}
\label{sec:realdata}

We consider the performance of BTER on various real-world data sets, including
presently the largest publicly-available graph  modeled in the sense of matching
degree distribution. We provide additional plots in the appendices; specifically, log-binned data as well as cumulative degree distributions.

\subsection{Small data}

In \Fig{smalldata}, we present comparisons  of BTER with the state-of-the-art in scalable generative models: SKG  \cite{ChZhFa04,LeChKlFa10} and Chung-Lu~\cite{AiChLu01,ChLu02, ChLu02b,
  FRD-arXiv-1210.5288} on two small data sets available from
SNAP~\cite{SNAP}. 
We use SKG because it is the current choice as the Graph 500 generator \cite{Graph500}. We omit details here but instead refer the reader to the references listed above for the details of the method and to  \cite{JACM-SKG,SePiKo11} for  and in-depth  analysis and discussion of problems with SKG.
The Chung-Lu model has been described in \Sec{cl}.
We treat all edges as
undirected and remove any duplicate edges and loops.
The graph ca-AstroPh is  a collaboration
network based on 124 months of data from the astrophysics section of
the arXiv preprint server; it has 9,987 nodes and 25,973 edges. 
The graph soc-Epinions1 is a who-trusts-whom online social (review) network from the Epinions
website with 75,879 nodes and 405,740 edges.

The parameters of SKG are from \cite{LeChKlFa10}.
The input to Chung-Lu is the degree distribution of the real graph 
and a blowup factor of $\beta=10$ \cite{FRD-arXiv-1210.5288}.
The inputs to BTER are the degree distribution and clustering
coefficients per degree (computed exactly by counting all wedges and triangles)
of the real graph and a blowup factor of $\beta=10$.
Timings are not reported as they are negligible  for all three methods.

\begin{figure}[htbp]
  \centering
  \subfloat[Degree distribution for ca-HepTh]
  {\label{fig:ca-HepTh-dd}\includegraphics[width=0.33\textwidth]{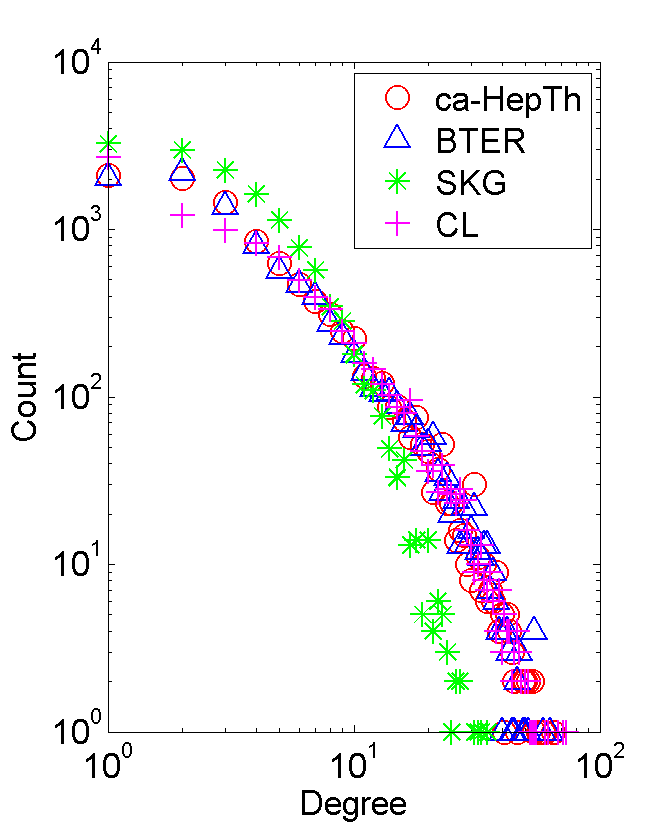}}
  \subfloat[Clustering coefficients for ca-HepTh]
  {\label{fig:ca-HepTh-cc}\includegraphics[width=0.33\textwidth]{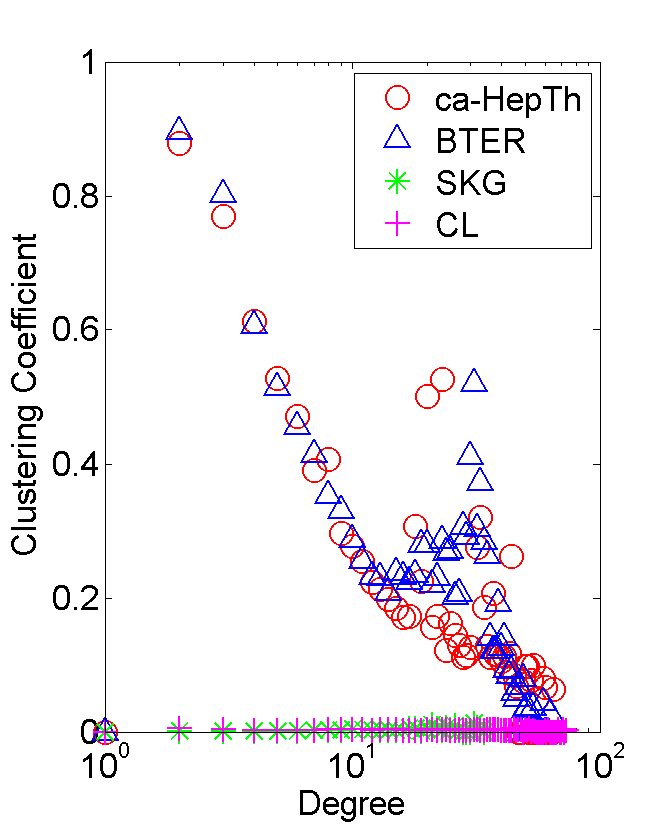}}
  \subfloat[Leading adjacency matrix eigenvalues for ca-HepTh]
  {\label{fig:ca-HepTh-eigs}\includegraphics[width=0.33\textwidth]{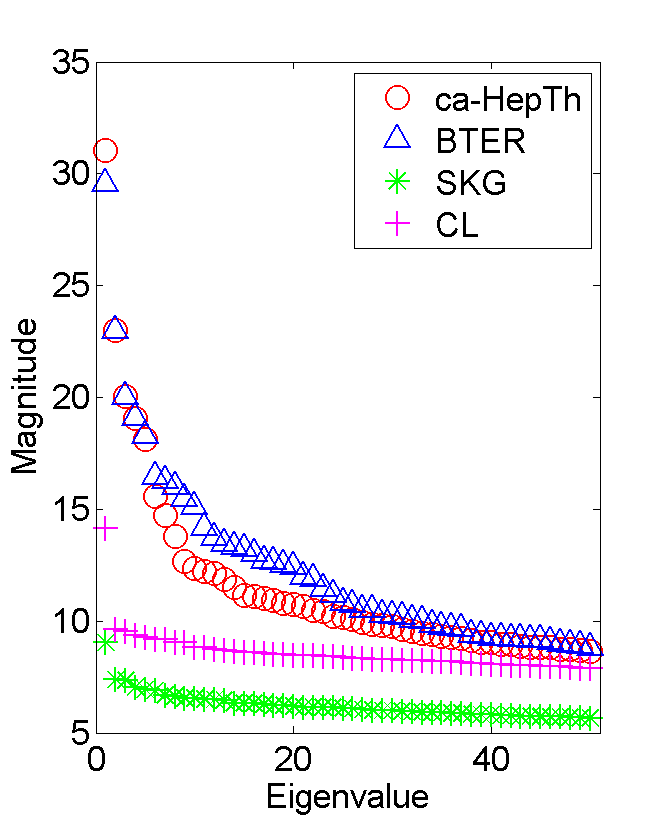}}\\
  \subfloat[Degree distribution for soc-Epinions1]
  {\label{fig:soc-Epinions1-dd}\includegraphics[width=0.33\textwidth]{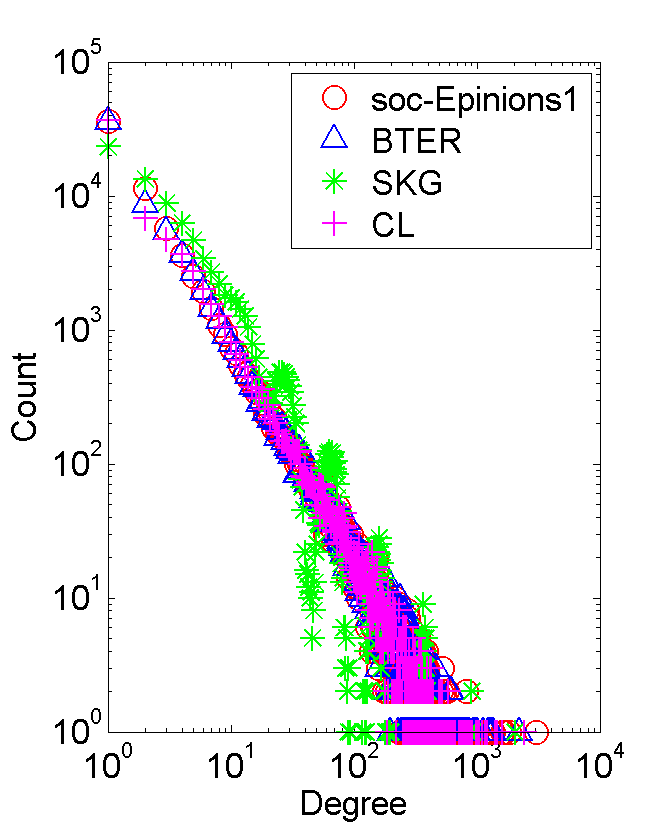}}
  \subfloat[Clustering coefficients for soc-Epinions1]
  {\label{fig:soc-Epinions1-cc}\includegraphics[width=0.33\textwidth]{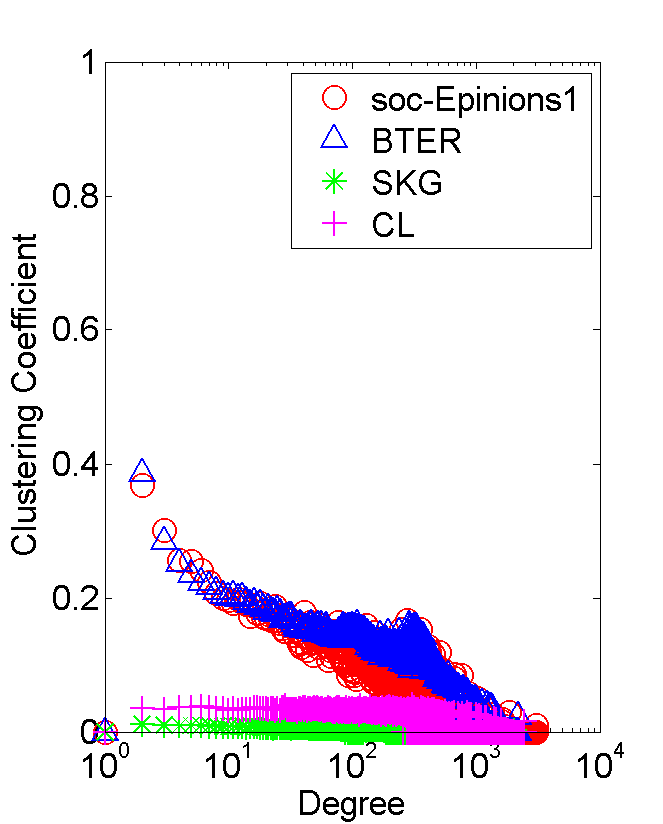}}
  \subfloat[Leading adjacency matrix eigenvalues for soc-Epinions1]
  {\label{fig:soc-Epinions1-eigs}\includegraphics[width=0.33\textwidth]{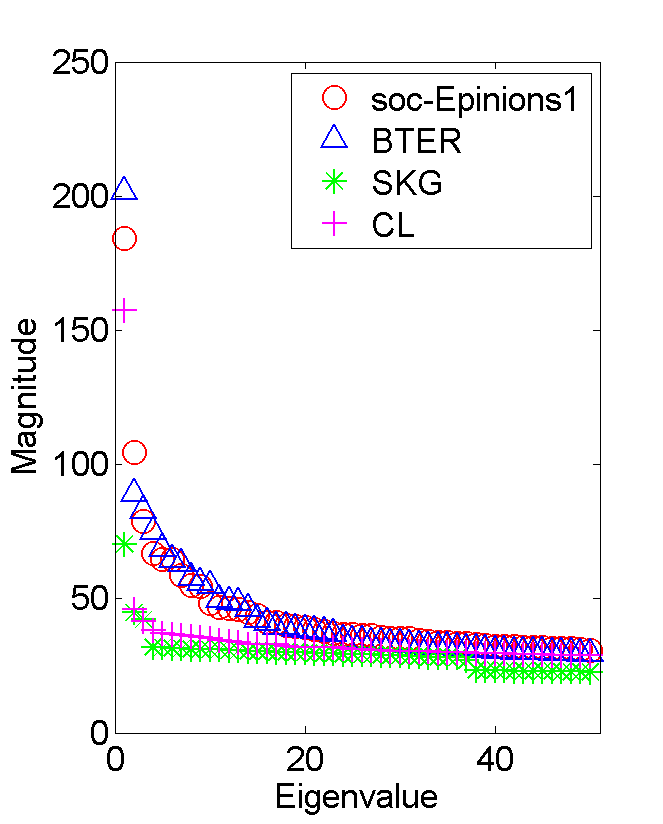}}
  \caption{Comparison of Chung-Lu (CL), SKG, and BTER on small graphs.}
  \label{fig:smalldata}
\end{figure}

\paragraph{Degree distribution}
The degree distributions for the original graphs and the models are
shown in \FigsTwo{ca-HepTh-dd}{soc-Epinions1-dd}.
SKG is known to have oscillations in the degree distribution
\cite{JACM-SKG,SePiKo11}, and these oscillations are easily visible  in
\Fig{soc-Epinions1-dd}. The oscillations are correctable with appropriate
addition of noise \cite{JACM-SKG,SePiKo11} (not shown), but even then it tends to
overestimate the low degree nodes and miss the highest degree nodes.
In contrast, both Chung-Lu and BTER closely match the real data.

\paragraph{Clustering coefficients}
The clustering coefficients per degree for the original graphs and the models are
shown in \FigsTwo{ca-HepTh-dd}{soc-Epinions1-cc}.
The SKG graph model has no inherent mechanism for closing triangles
and creating a community structure. Though a few triangles may close at
random, they are insufficient for the SKG-generated graph to match the
clustering coefficients in the real data. 
The situation for Chung-Lu is similar to that for SKG --- there is no reason
for wedges to close.
BTER, on the other hand, provides a much closer match to the real data.

\paragraph{Eigenvalues of adjacency matrix}
We show the top 50 leading eigenvalues (in magnitude) of the
adjacency matrix in \FigsTwo{ca-HepTh-eigs}{soc-Epinions1-eigs}.  BTER
provides a much closer match to the real data ---
especially the first few eigenvalues. 
Under certain circumstances,
matching the degree distribution should produce a match in eigenvalues
\cite{MiPa02}. However, based on the observation that the eigenvalues of a graph with community structure are larger than those of the Chung-Lu graph with the same degree distribution, we conjecture that graphs with community structure
require that triangle structure also be matched to obtain a good fit for the
eigenvalues.

\subsection{Large data}

We demonstrate that BTER is able to fit large-scale real-world
data. We do not compare to SKG because it is not possible to fit the
parameters for such large graphs.  We do not compare to the Chung-Lu model
because we can easily explain the performance: its match in terms of
the degree distribution is nearly identical to that of BTER, and its
clustering coefficients are close to zero, for the small data.
The data sets are described in \Tab{data}. We treat all edges as
undirected and remove any duplicate edges and loops. 
We obtained real-world graphs from the Laboratory for Web Algorithms
\cite{LWA}, which compressed the graphs using
LLP and WebGraph \cite{BoVi04,BoRoSaVi11}.
Briefly, the networks are described as follows.
\begin{compactitem}
\item amazon-2008 \cite{BoVi04,BoRoSaVi11}: A graph describing similarity among books as
  reported by the Amazon store.
\item ljournal-2008 \cite{ChKuLaMi09,BoVi04,BoRoSaVi11}: Nodes represent users on
  LiveJournal. Node $x$ connects to node $y$ if $x$ registered $y$ as
  a friend.
\item hollywood-2011 \cite{BoVi04,BoRoSaVi11}: This is a graph of actors. Two
  actors are joined by an edge whenever they appear in a movie together.
\item twitter-2010 \cite{KwLePaMo10,BoVi04,BoRoSaVi11}: Nodes are Twitter users, and
  node $x$ links to node $y$ if $y$ follows $x$. 
\item uk-union-2006-06-2007-05 (shortened to
  uk-union)
  \cite{BoSaVi08,BoVi04,BoRoSaVi11}: Links between web pages on the .uk domain. We
  ignore the time labeling on the links.
\end{compactitem}
To the best of our knowledge, uk-union is the largest publicly available graph.

\begin{table}[htbp]
 \caption{Network characteristics of original and BTER-generated graphs.}
  \label{tab:graphs}
  \centering\small
\subfloat[Large data set properties.]{\label{tab:data}
\begin{tabular}{|l|r|r|r|r|r|} \hline
    \textbf{Graph}  & \ColTitle{$|V|$} & \ColTitle{$|E|$} & 
    \ColTitle{$\dmax$} & \ColTitle{$\davg$} & \ColTitle{GCC} \\ \hline
    amazon-2008 &   1M &     4M &   1,077 &  10 & 0.260 \\ \hline
  ljournal-2008 &   5M &    50M &  19,432 &  18 & 0.124 \\ \hline
 hollywood-2011 &   2M &   114M &  13,107 & 115 & 0.175 \\ \hline
   twitter-2010 &  40M & 1,202M & 2,997,487 &  60 & 0.001 \\ \hline
       uk-union & 122M & 4,659M & 6,366,528 &  76 & 0.007 \\ \hline
  \end{tabular}
}

\subfloat[Properties of BTER-generated graphs, including generation and edge deduplication time.]{
\label{tab:bter}\small
\begin{tabular}{|l|r|r|r|r|r|r|r|} \hline
    \textbf{Graph}  & \ColTitle{$|V|$} & \ColTitle{$|E|$} & \ColTitle{$\dmax$} & 
    \ColTitle{$\davg$} & \ColTitle{GCC} & \ColTitle{Gen.} & \ColTitle{Dedup.}\\ \hline
    amazon-2008 &   1M &     4M &   1,052 &  10 & 0.253 &  2.27s &  9.25s \\ \hline
  ljournal-2008 &   5M &    49M &  18,510 &  19 & 0.127 & 33.81s & 126.40s \\ \hline
 hollywood-2011 &   2M &   114M &  11,676 & 115 & 0.180 & 88.54s & 362.25s \\ \hline
   twitter-2010 &  38M & 1,135M & 1,635,823 &  59 & 0.004 & \multicolumn{2}{c|}{230s} \\ \hline
       uk-union & 120M & 4,405M & 1,497,950 &  73 & 0.111 & \multicolumn{2}{c|}{1350s} \\ \hline
  \end{tabular}
}
 \end{table}

The smaller graphs (amazon-2008, ljournal-2008, hollywood-2011) are
those with up to roughly 100M edges. These can be easily processed
using MATLAB on an SGI Altix UV 10 with 32 cores (4 Xeon 8-core 2.0GHz
processors) and 512 GB DDR3 memory. None of the parallel capabilities
of MATLAB are enabled for these studies. To give a sense of the memory
requirements, storing the hollywood-2011 graph as a sparse matrix in
MATLAB requires 3.4GB of storage.
The larger graphs (twitter-2010, uk-union) each have over 1B edges.
These are processed on a
Hadoop cluster with 32 compute nodes. Each compute node has an
Intel i7 930 CPU at 2.8GHz (4 physical cores, HyperThreading enabled),
12 GB of memory, and 4 2TB SATA disks. All experiments were run using
Apache Hadoop version 0.20.203.0\footnote{\url{http://hadoop.apache.org/}}.
The results in \Tab{bter} were obtained using 132 map and 32 reduce tasks.

The inputs to BTER are the degree distribution and clustering coefficients by degree. 
(We used a blowup of $\beta=1$ for the experiments reported here.)
Computing the degree distribution is straightforward.
However, for the clustering coefficients calculations, we used the sampling
approach as implemented in \cite{KoPiPlSe13} with 2000 samples per degree,
so the expected error is $\epsilon=0.05$ at a confidence level of
99.9\%. Sampling was not required for the smaller graphs, but we
used it in all experiments for consistency.

\paragraph{BTER Timing}
\Tab{bter} shows the details and timings for the graphs produced by
BTER. Observe the close match in the characteristics of the graphs in
terms of number of nodes, number of edges, maximum degree, average
degree, and global clustering coefficient.  For the smaller graphs, we
are able to separate the edge generation and deduplication time. The
generation time is not strictly proportional to the number of desired
edges because we have to generate extra edges for Phase 1 to account
for possible duplicates (see \Sec{phase1}). The parallelism of Hadoop
yields a large advantage in terms of time. The twitter-2010 graph has
10 times more edges than hollywood-2011, but it takes less than half
the time to do the computation on the 32-node Hadoop cluster.

\paragraph{Degree Distribution}
\Fig{degdist-realdata} illustrates the match between the real data and
the BTER graph. BTER cannot easily match discontinuities in the degree
distribution because of the randomness in creating edges. The
issue is that nodes generally do not get \emph{exactly} the desired
degree; in the realization of the graph, observed degrees may deviate
by one or two from the expected degree. For a smooth degree
distribution, neighboring degrees cancel the effect of one on another.
For discontinuous distributions, the BTER degree distribution is a
``smoothed'' version. This is evident, for instance, in the
amazon-2008 data where we can see a smoothing effect on the sharp
discontinuity near degree 10. 

\begin{figure}[htbp]
  \centering
  \subfloat[amazon-2008]{\includegraphics[width=0.33\textwidth]{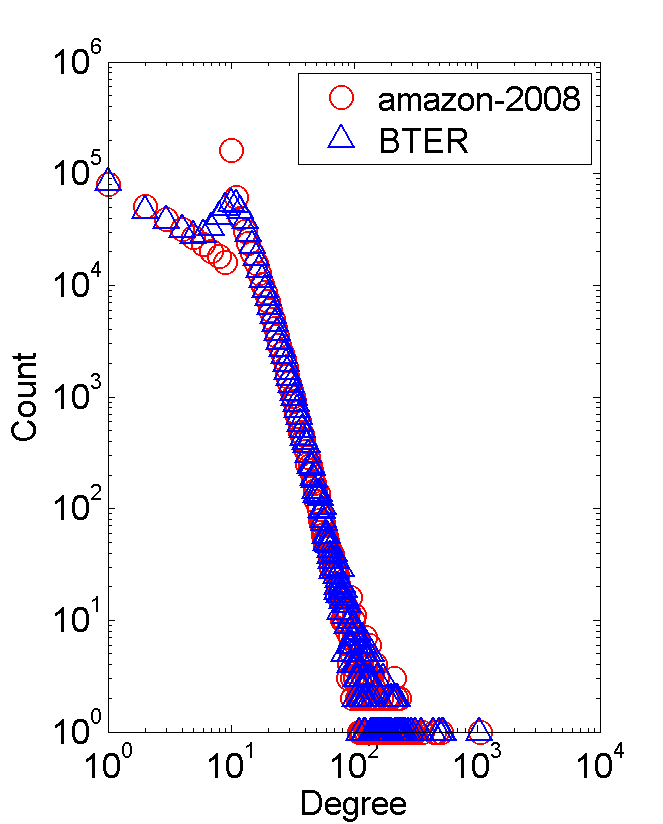}}
  \subfloat[ljournal-2008]{\includegraphics[width=0.33\textwidth]{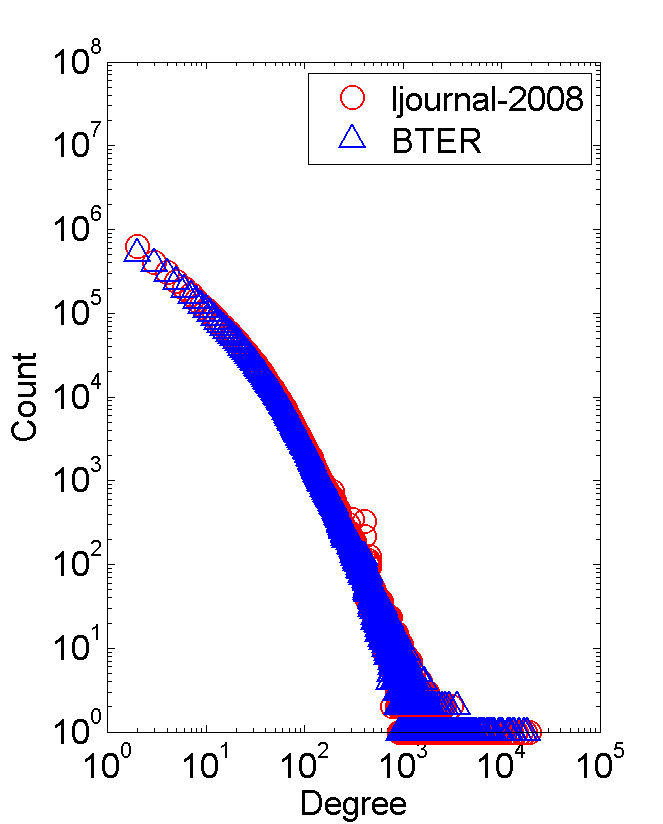}}
  \subfloat[hollywood-2011]{\includegraphics[width=0.33\textwidth]{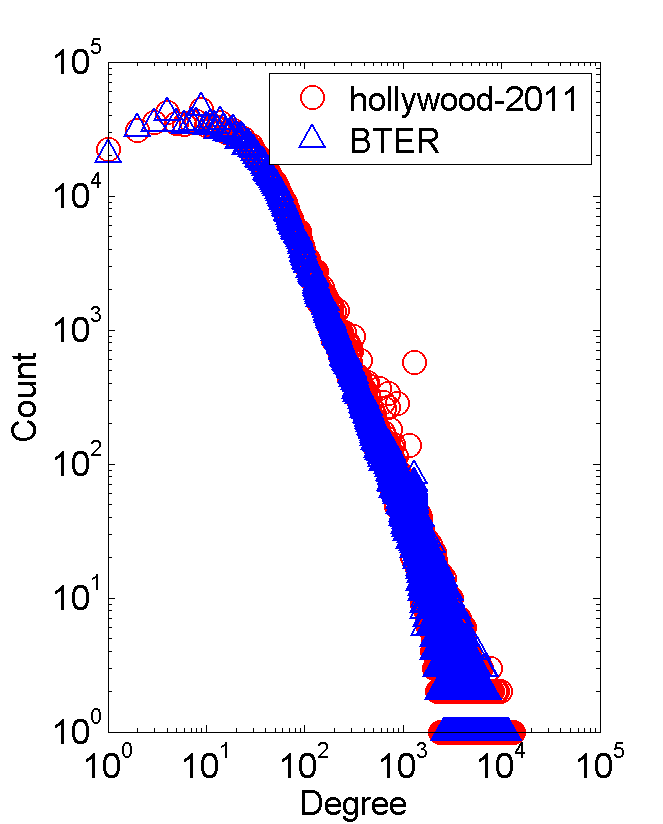}}\\
  \subfloat[twitter-2010]{\includegraphics[width=0.33\textwidth]{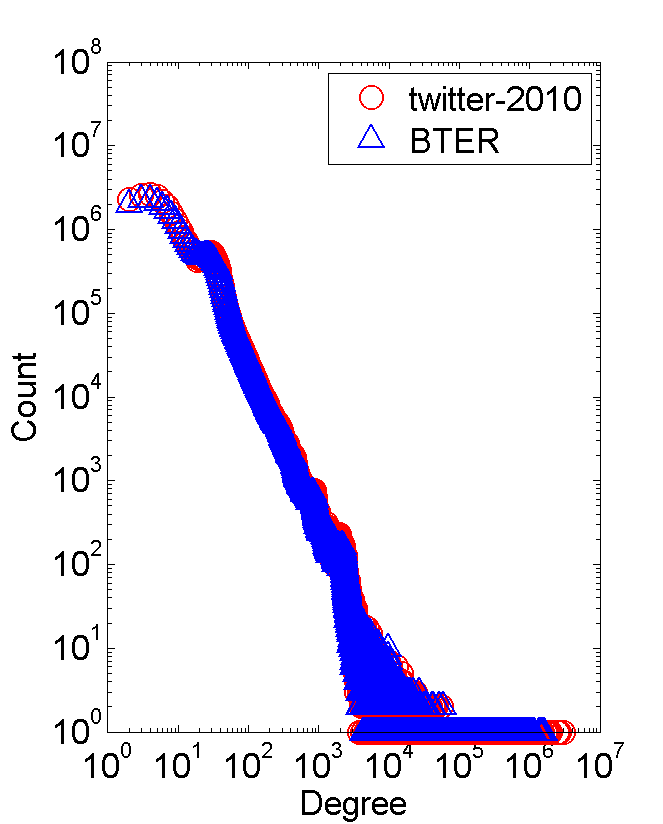}}
  \subfloat[uk-union]{\includegraphics[width=0.33\textwidth]{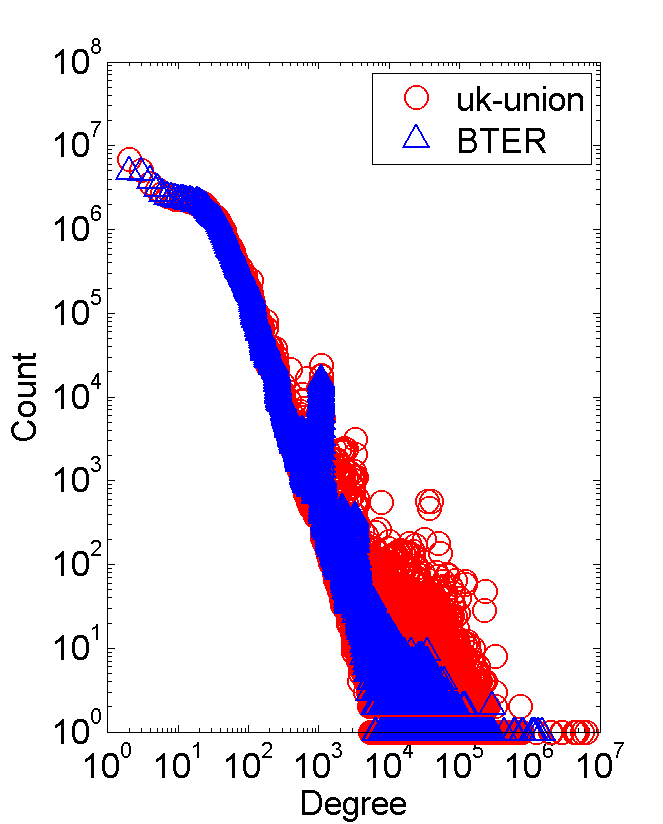}}
  \caption{Degree distributions of original and BTER-generated graphs.}
  \label{fig:degdist-realdata}
\end{figure}

\paragraph{Clustering Coefficients}
BTER's strength is its ability to match clustering coefficients and
therefore create community structure. Most degree distributions are heavy-tailed
and have a relatively consistent structure. The same is not
true for clustering coefficients. Different profiles can potentially
lead to graphs with fundamentally different
structures. \Fig{cc-realdata} shows the clustering coefficients of the
real data and the BTER-generated graphs. There is a very close match.

\begin{figure}[htbp]
  \centering
  \subfloat[amazon-2008]{\includegraphics[width=0.33\textwidth]{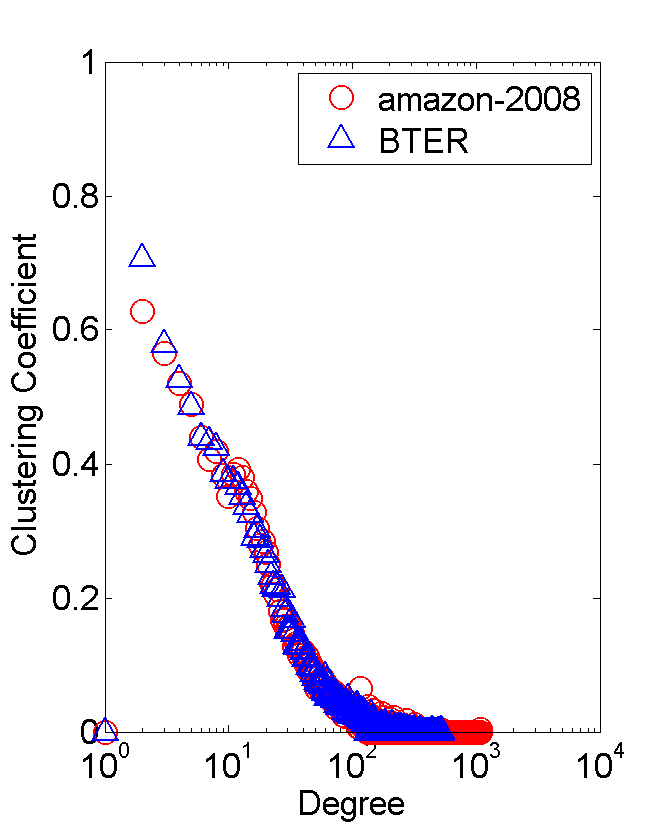}}
  \subfloat[ljournal-2008]{\includegraphics[width=0.33\textwidth]{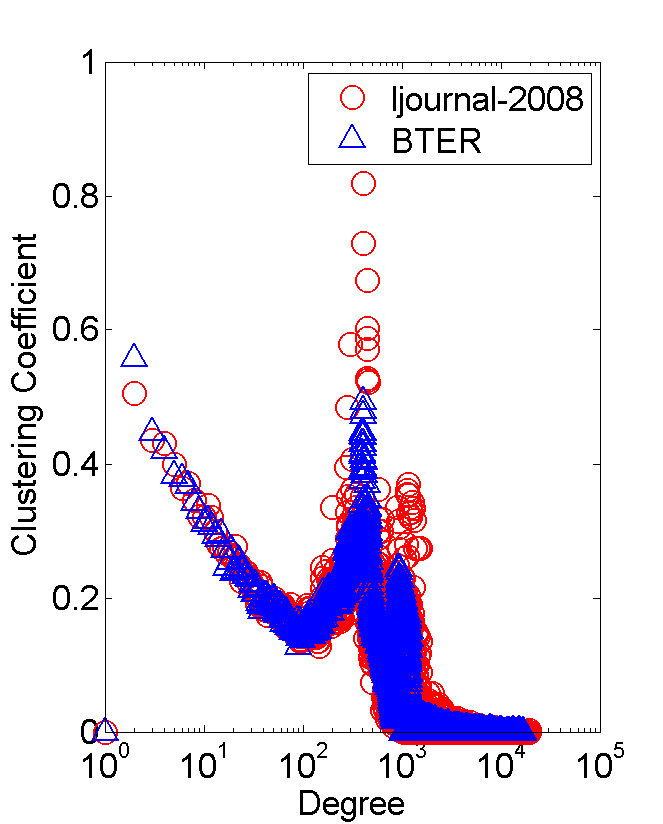}}
  \subfloat[hollywood-2011]{\includegraphics[width=0.33\textwidth]{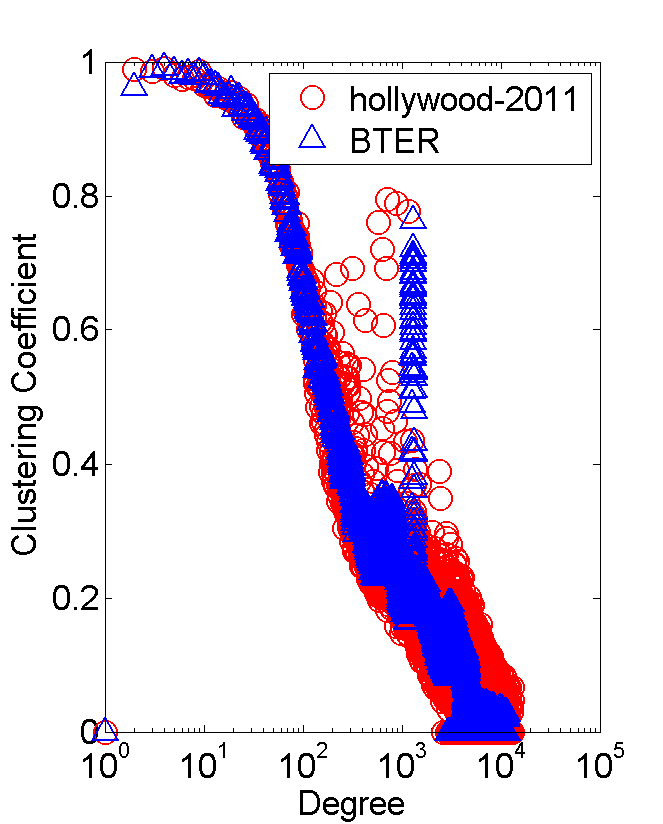}}\\
  \subfloat[twitter-2010]{\includegraphics[width=0.33\textwidth]{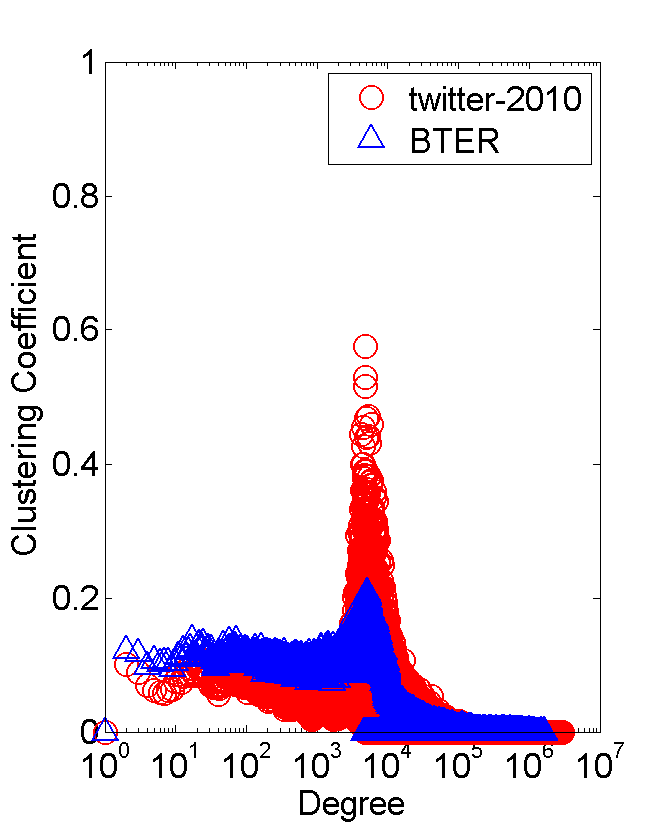}}
  \subfloat[uk-union]{\includegraphics[width=0.33\textwidth]{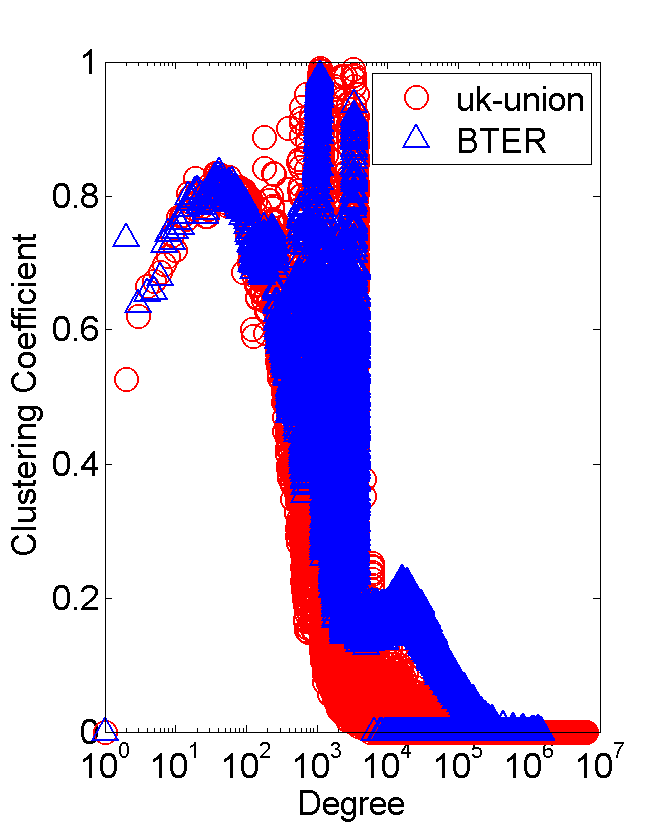}}
  \caption{Clustering coefficients of original and BTER-generated graphs.}
  \label{fig:cc-realdata}
\end{figure}

\section{Proposed Benchmark Parameters and Scalability} \label{sec:benchmark}
Thus far we have considered how BTER can be used to match real-world data.  For
benchmarking purposes, where there is no specific graph to be matched, ``ideal''
profiles for degree distribution and clustering coefficient by degree are
required. In this section, we propose some possibilities, noting that these are
tunable for various testing scenarios, i.e., specifying an average degree, a
maximum clustering coefficient, etc.
Generating an artificial degree distribution can be problematic. For instance,
using a straight power law distribution can lead to impossible  situations, like
choosing a degree greater than the number of nodes. This necessitates a discrete
distribution, which engenders its own problems.
After proposing methods for working around these problems, 
we use the proposed benchmark in a scalability study for the MapReduce implementation of BTER.

\subsection{Idealized Degree Distribution}
\label{sec:idealdeg}

It has been hypothesized that degree distribution of real-world
networks follow a power law (PL) degree distribution, i.e., 
\begin{displaymath}
  n_d \propto d^{-\gamma},
\end{displaymath}
for some parameter $\gamma$ \cite{BaAl99}. However, our observation is
that power law distributions are difficult to use as a model --- a
point that is discussed in more detail below.  It has been suggested
that power laws are not necessarily the best descriptors for real-world
networks \cite{SaGaRoZh11,BiFaKo01}. Finally, proving (in a
statistical sense) that a single observed degree distribution is power
law is difficult \cite{ClShNe09}.

For benchmarking purposes, our goal is to specify an ideal average degree,
$\davgideal$, and an absolute bound on maximum degree, $\dmaxideal$. Let $f(d)$ define the desired
proportionality of degree $d$, e.g., $f(d) = d^{-\gamma}$ for a power law distribution. We then create a \emph{discrete}
distribution on $d=1,\dots,\dmaxideal$ as
\begin{displaymath}
  \Prob{D = d} = \frac{f(d)}{\sum_{d'=1}^{\dmaxideal} f(d')}.
\end{displaymath}
Ideally, the average degree is equal to $\davgideal$ and the probability
of having degree $\dmaxideal$ is sufficiently small, i.e.,
\begin{displaymath}
  \davgideal = \sum_{d=1}^{\dmaxideal} d \cdot f(d)
  \qtext{and}
  \Prob{D = \dmaxideal} < \epsilon_{\rm tol},
\end{displaymath}
where $\epsilon_{\rm tol}$ is small enough such that $n \cdot
\epsilon_{\rm tol} \ll 1$ (where $n$ is the number of nodes). 
For the power law distribution, it can be difficult to find a value for
$\gamma$ that yields a high enough average degree and a low enough
probability of choosing $\dmaxideal$.  Hence, we propose instead a
generalized log-normal (GLN) distribution, i.e.,
\begin{displaymath}
  n_d \propto \exp\left[-\left(\frac{\log d}{\alpha}\right)^{\delta}\right],
\end{displaymath}
for some parameters $\alpha$ and $\delta$. 
The appendices includes example degree distributions that vary $\alpha$ and $\delta$.
The shape of the distribution is typical of the real-world graphs shown in \Sec{realdata}. 

We consider two scenarios, both with $n=10^7$ nodes. We do a parameter search on $\alpha$ and $\delta$ (\texttt{fminsearch} in MATLAB) to locate the optimal parameters. 
A function \texttt{degdist\_param\_search}
that finds the optimal parameters for either discrete GLN
or discrete PL for user-specified values of $\davg$ and $\dmax$ is
included in the reference code to be released at a future date.

\paragraph{Scenario 1 for Degree Distribution Fitting} 
In the first scenario, the targets are $\davgideal = 16$ and $\dmaxideal = 10^{6}$.
For discrete PL, the optimal parameter is $\gamma=1.911$ with $\davg=16$ and $\Prob{D=\dmaxideal} = 1.97 \times 10^{-12}$.
For  discrete GLN, the optimal parameters are $\alpha=1.988$ and $\delta=2.079$ with $\davg=16$ and $\Prob{D=\dmaxideal}=4.14 \times 10^{-26}$.
Realizations of the two distributions are pictured in \Fig{degdistcompare1}.
For this scenario, both degree distributions are reasonable in that
there is no sharp dropoff as we get close to the maximum allowable
degree, $\dmaxideal$.

\paragraph{Scenario 2 for Degree Distribution Fitting} 
In the second scenario, the targets are $\davgideal = 64$ and $\dmaxideal = 10^{5}$.
For PL, the optimal parameter is $\gamma=1.668$ with $\davg=64$ but
$\Prob{D=\dmaxideal} = 2.16 \times 10^{-9}$ (fairly large).
For discrete GLN, the optimal parameters are $\alpha=2.171$ and $\delta=1.877$ with $\davg=64$ and $\Prob{D=\dmaxideal}=8.35 \times 10^{-12}$.
Realizations of the two distributions are pictured in \Fig{degdistcompare2}.
In this scenario, the problem with power law becomes apparent --- near $\dmaxideal$, there are still many degrees with \emph{multiple} nodes so that the cutoff is extremely abrupt. In comparison, discrete GLN fades more naturally to the desired maximum degree.

\begin{figure}[thbp]
  \centering
  \subfloat[Scenario 1: $\davgideal = 16$ and $\dmaxideal = 10^{6}$]
  {\label{fig:degdistcompare1}\includegraphics[width=0.33\textwidth]{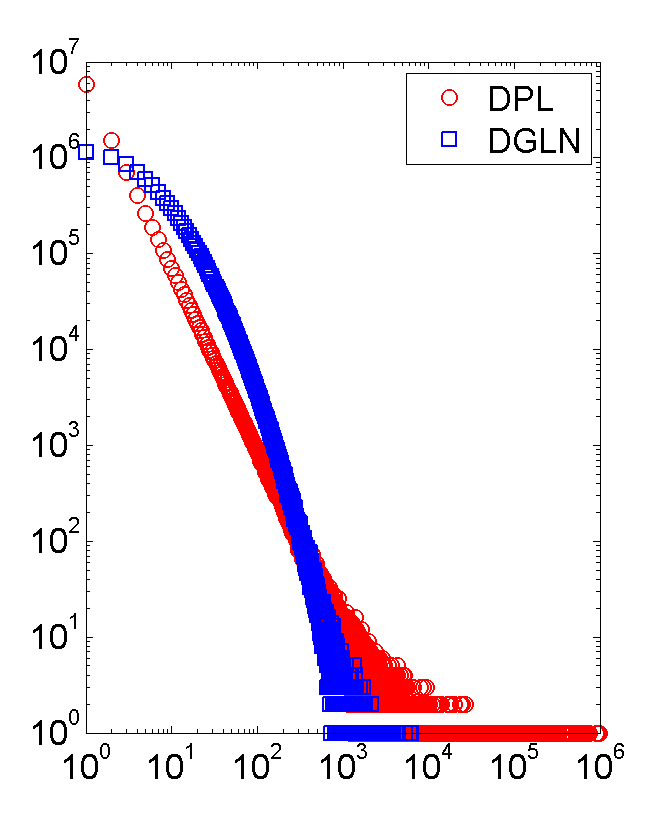}}
  ~~~~~
  \subfloat[Scenario 2: $\davgideal = 64$ and $\dmaxideal = 10^{5}$]
  {\label{fig:degdistcompare2}\includegraphics[width=0.33\textwidth]{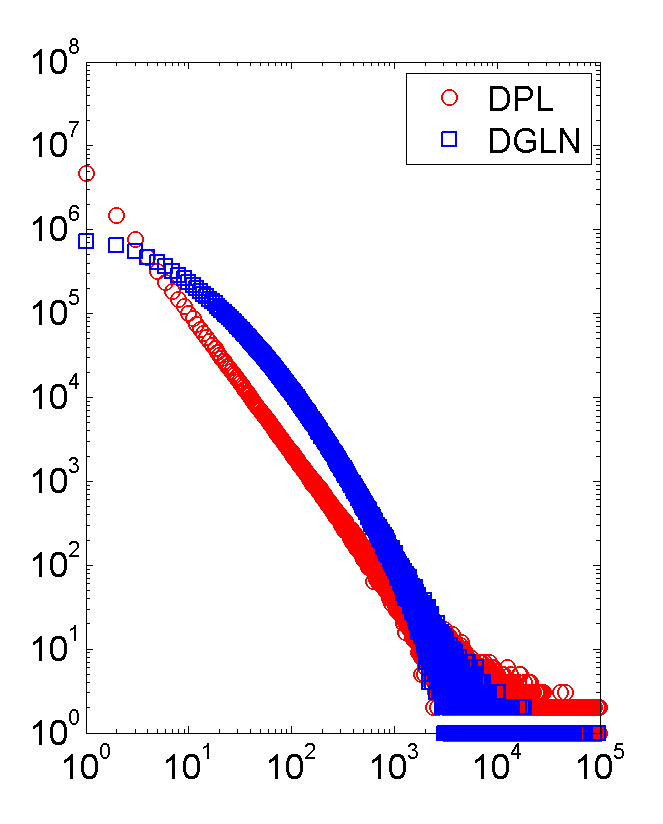}}
  \caption{Example degree distributions from discrete power law (DPL)
    and discrete generalized log normal (DGLN) for $n=10^{7}$ nodes.}
  \label{fig:degdistcompare}
\end{figure}

\subsection{Idealized Clustering Coefficients}
\label{sec:idealcc}

As there is no definitive structure to clustering coefficients, we
propose a simple parameterized curve that has some
similarity to real data observations.

Let $\set{n_d}$ define the specified degree distribution and $\dmax$
be the maximum degree such that $n_d > 0$. We define  $\bar c_d$, the mean value for
$c_d$, as 
\begin{displaymath}
  \bar c_d = c_{\rm max} \exp(-(d-1)\cdot {\xi}) \qtext{for} d \geq 2,
\end{displaymath}
where $c_{\rm max}$ and $\xi$ are parameters. If $c_{\rm max}$ is
specified, then a simple parameter search can be used to fit $\xi$ to
a target global clustering coefficient; code to fit the data is
included in the reference code.  The final values for $\set{c_d}$
are selected as
\begin{displaymath}
  c_d \sim \mathcal{N}(\bar c_d, \min\{10^{-2}, \bar c_d/2\}).
\end{displaymath}
The randomness could, of course, be omitted.

\subsection{Example Graphs}

We generate two example graphs per the scenarios below.  \Tab{ideal}
lists the network characteristics and \Fig{ideal} shows the target and
BTER-generated degree distributions and clustering coefficients.

\begin{table}[htbp]\small
  \caption{Network characteristics of BTER-generated graphs for benchmarking.}
  \label{tab:ideal}
  \centering
\begin{tabular}{|l|r|r|r|r|r|r|r|} \hline
    \textbf{Graph}  & \ColTitle{$|V|$} & \ColTitle{$|E|$} & \ColTitle{$\dmax$} & 
    \ColTitle{$\davg$} & \ColTitle{GCC} & \ColTitle{Gen.} & \ColTitle{Dedup.}\\ \hline
        Scenario 1 &   1M &    35M &  28,643 &  72 & 0.406 & 35.11s & 117.18s \\ \hline
        Scenario 2 &   1M &     8M &   2,594 &  17 & 0.104 &  5.07s & 20.66s \\ \hline
  \end{tabular}
\end{table}

\begin{figure}[htbp]
  \centering
  \subfloat[Degree distribution for Scenario 1]{\includegraphics[width=0.33\textwidth]{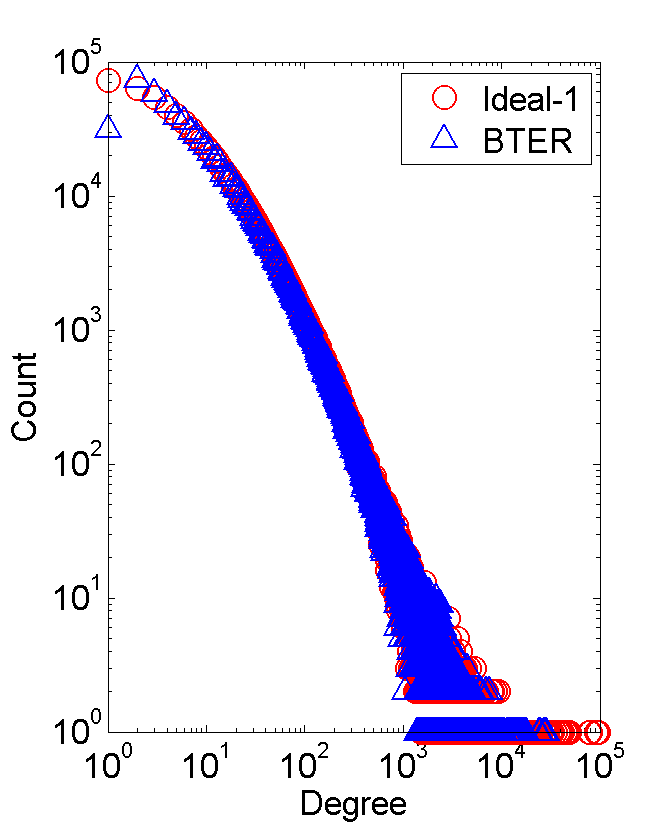}}
  \subfloat[Degree distribution for Scenario 2]{\includegraphics[width=0.33\textwidth]{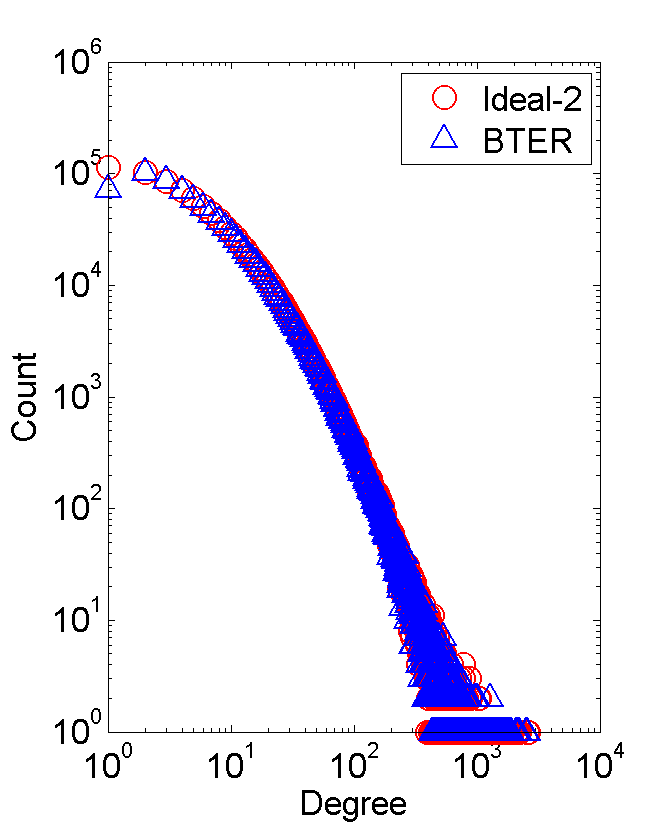}}\\
  \subfloat[Clustering coefficients for Scenario 1]{\includegraphics[width=0.33\textwidth]{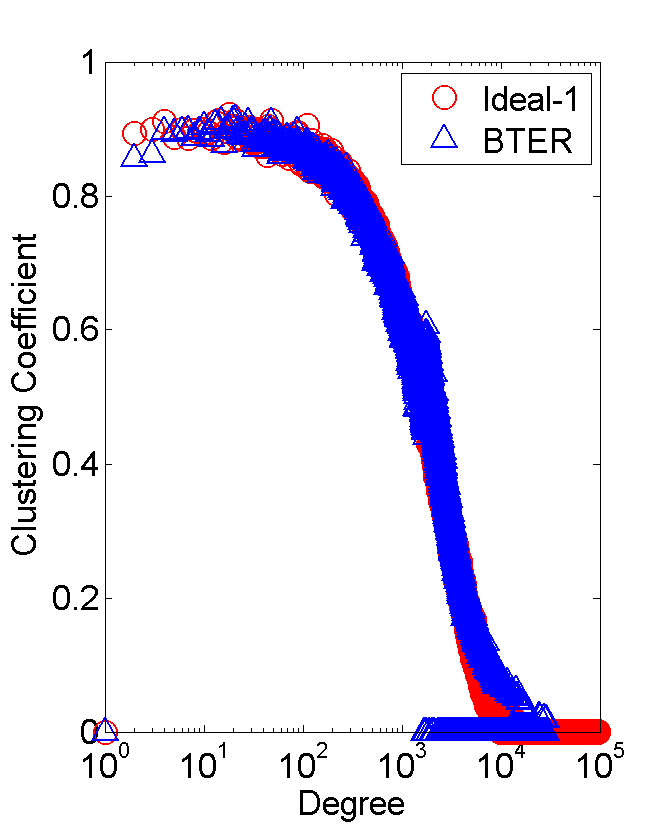}}
  \subfloat[Clustering coefficients for Scenario 2]{\includegraphics[width=0.33\textwidth]{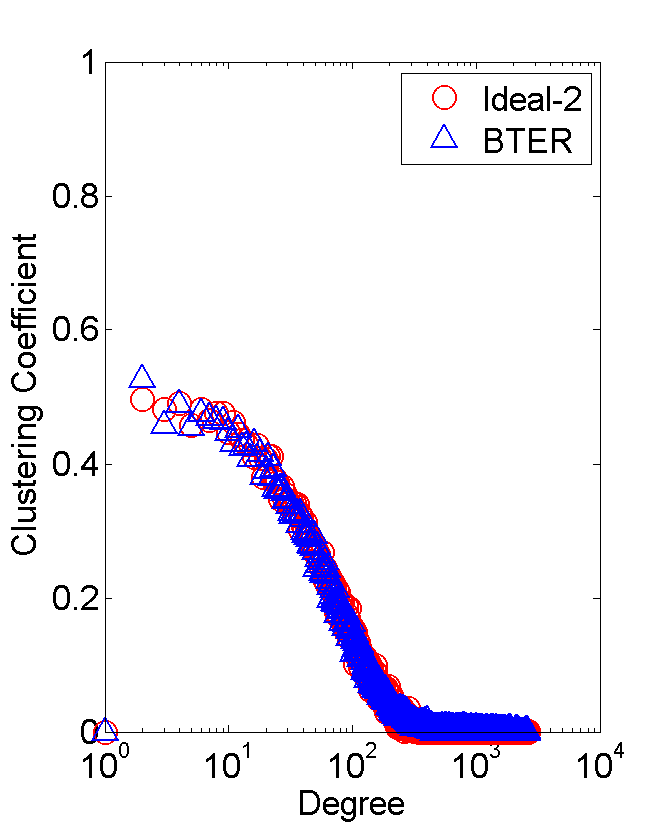}}
  \caption{Target distributions and results of BTER-generated graphs.}
  \label{fig:ideal}
\end{figure}

\paragraph{Scenario 1}
For the first set-up, we selected $\davgideal = 75$ and $\dmaxideal=100,000$ to
define the degree distribution. The parameter search selected
$\alpha=2.14$ and $\delta=1.83$. For the clustering coefficients, we
set $c_{\rm max} = 0.9$ and a target GCC of 0.15. The parameter
search selected $\xi=3.59 \times 10^{-4}$ for defining the clustering
coefficient profile.

\paragraph{Scenario 2}
For the second set-up, we selected $\davgideal = 16$ and $\dmaxideal=10,000$ to
define the degree distribution. The parameter search selected
$\alpha=1.98$ and $\delta=2.08$. For the clustering coefficients, we
set $c_{\rm max} = 0.5$ and a target GCC of 0.10. The parameter
search selected $\xi=0.01$ for defining the clustering coefficient profile.

\subsection{Scalability Test}
\label{sec:scalability}
We use the DGLN distribution to create target distributions that define a
series of graphs of different sizes but similar community structure.
From these, we generate example graphs with our Hadoop MapReduce implementation
of BTER and demonstrate scalability of our code.

\Tab{scalability} describes characteristics of the test
graphs.  We selected $\davgideal = 32$ for all graphs and maintained $\dmaxideal$
proportional to $\sqrt{|V|}$, a relation that we often observe in social
network graphs.
The last two columns of the table list the number of generated
(phase 1 and 2) and unique edges measured from full graphs realized by BTER.

Target degree distributions were created using the formulas and functions
presented in \Sec{idealdeg}.
Optimal parameters $\alpha$ and $\delta$ were
computed using MATLAB function \texttt{degdist\_param\_search} (part of the
reference code).  The function takes $\davgideal$ and $\dmaxideal$ as inputs,
plus the $\Prob{D=\dmaxideal}$ which we set to $0.001 / |V|$.
Then we computed a discrete GLN from $\alpha$, $\delta$, and the number
of nodes $|V|$ listed in the second column of \Tab{scalability}.
We created clustering coefficient distributions as in \Sec{idealcc},
setting $c_{\rm max} = 0.5$ and $\mbox{GCC} = 0.15$ for all graphs.

\begin{table}[htbp]\small
  \caption{Graph characteristics for testing BTER scalability.}
  \label{tab:scalability}
  \centering
  \begin{tabular}{|c|r|r|r|r|r|r|r|} \hline
    \textbf{Graph}  & \ColTitle{$|V|$} & \ColTitle{$\dmaxideal$} & 
      \ColTitle{$\davgideal$} & \ColTitle{gen $|E|$} & \ColTitle{unique $|E|$}
    \\ \hline
  1 &   1M &  50k & 32 &    25M &    16M  \\
  2 &   2M &  70k & 32 &    50M &    32M  \\
  3 &   4M & 100k & 32 &   101M &    64M  \\
  4 &   8M & 140k & 32 &   201M &   128M  \\
  5 &  16M & 200k & 32 &   403M &   256M  \\
  6 &  32M & 280k & 32 &   808M &   512M  \\
  7 &  64M & 400k & 32 & 1,614M & 1,024M  \\
  8 & 128M & 570k & 32 & 3,233M & 2,047M  \\
  9 & 256M & 800k & 32 & 6,467M & 4,096M  \\
    \hline
  \end{tabular}
\end{table}

The MapReduce implementation of BTER was executed on a 32-node cluster
running Apache Hadoop version 0.20.203.0.
Each compute node contains a quad-core processor and four
hard drives configured to run independently (no RAID striping).  Hadoop
was configured for a maximum of 128 simultaneous map tasks, effectively
pairing each core with a hard drive for maximum I/O throughput.  A MapReduce
job can therefore launch up to 128 mappers in parallel.  Hadoop was also
configured for a maximum of 128 simultaneous reduce tasks.  BTER uses the
reduce phase to remove duplicate edges, so we expect this to compete for
machine resources with
BTER map tasks that generate phase 1 and phase 2 edges.

\Fig{scalability1} renders the data in \Tab{scalability},
showing how the computational work varies across test graphs.
To assess scalability in the weak sense, we set the workload of each map task
in BTER to be 1 million edges; hence, Graph \#1 executes with 1 map task and
Graph \#9 executes with 256.  The number of reduce tasks was set to the number
of map tasks, up to the configured Hadoop limit of 128.

\Fig{scalability2} plots BTER execution time (wall clock time) as a
function of the number of vertices requested.  Perfect scalability would
appear as a horizontal line.  We observe excellent scalability up to 32 million
vertices (32 map tasks), and then a significant dropoff in parallel performance.
However, map task scalability is excellent through 128 tasks, the full
capacity of the cluster.  Hence, BTER edge generation in the map phase is
fully scalable to the available hardware.  Removal of duplicate edges in the
reduce phase is less scalable.  Here the edges must be sorted on the map task
node, shuffled across the network to reduce task nodes, and merged at
the reducers.
Close examination of Hadoop log files and network traffic suggest that
the reduce phase suffers from bandwidth limitations during the shuffle
and from spillover to disk during the merge.  We conjecture that deduplication
scalability could be improved on a cluster with larger network bandwidth
and more physical memory per compute node.

\begin{figure}[thbp]
  \centering
  \subfloat[Test graph sizes]
  {\label{fig:scalability1}
   \includegraphics[width=0.33\textwidth]{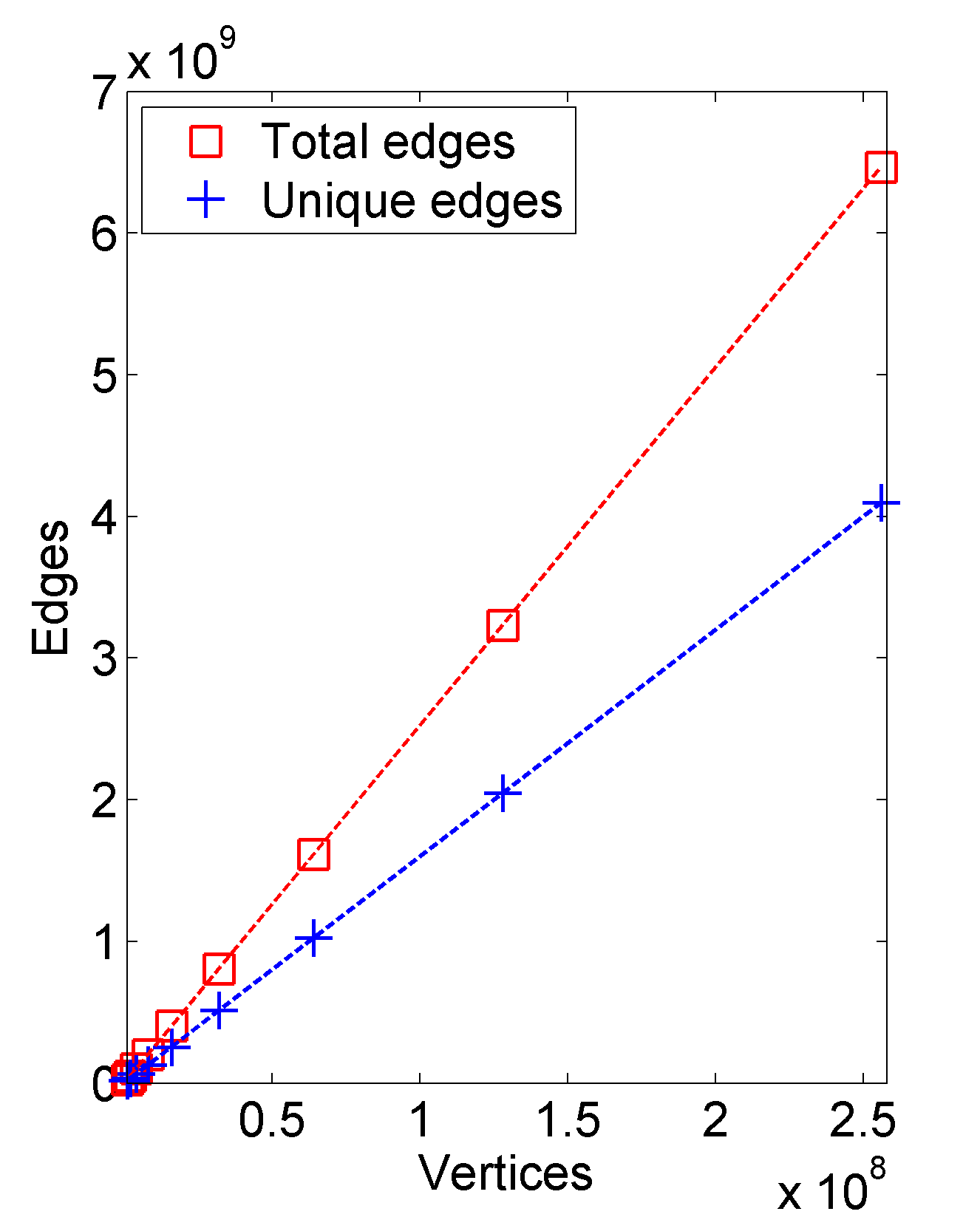} }
  ~~~~~
  \subfloat[BTER execution time in seconds]
  {\label{fig:scalability2}
   \includegraphics[width=0.33\textwidth]{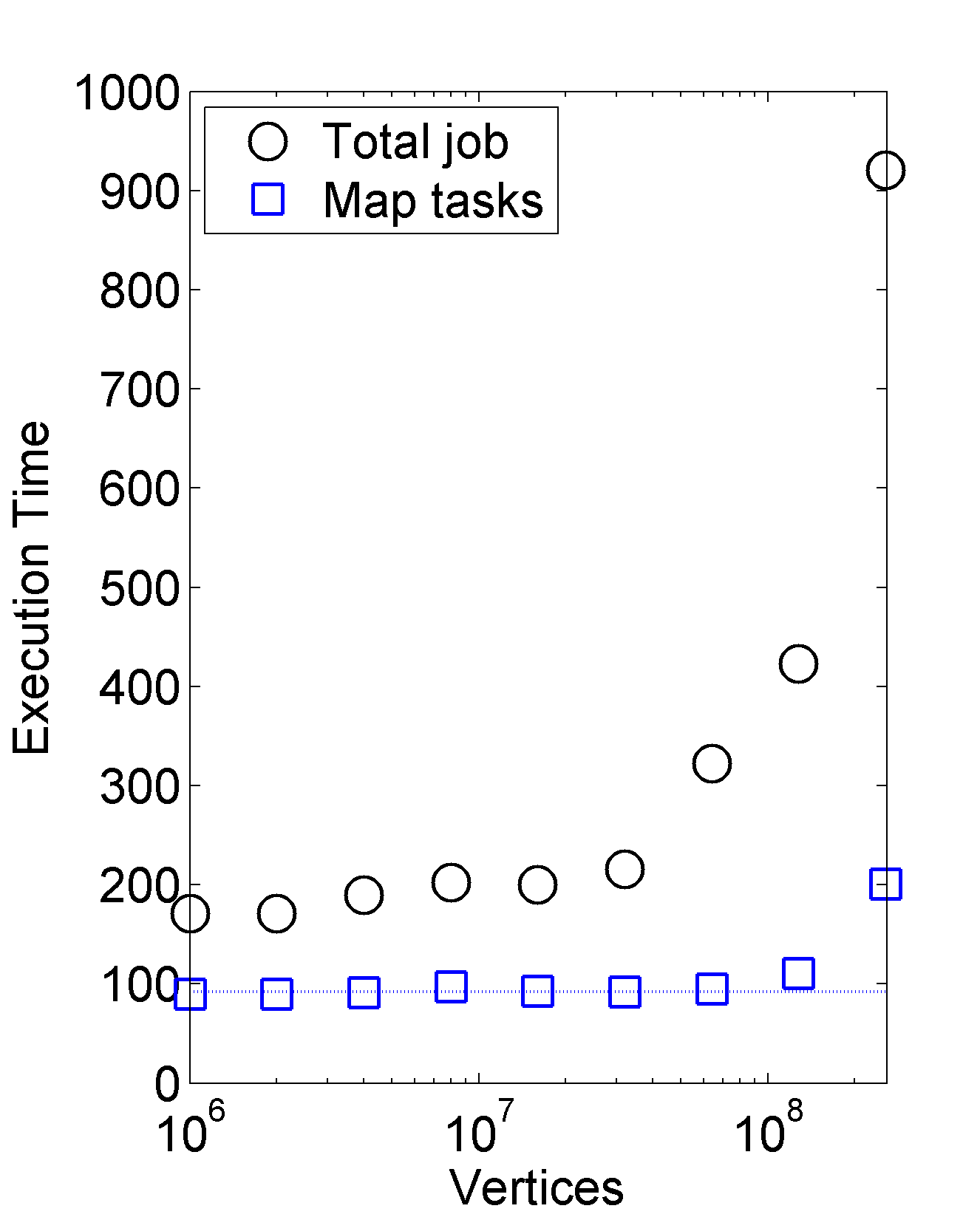} }

  \caption{Scalability of BTER MapReduce implementation.}
  \label{fig:scalability}
\end{figure}

We also show an example of the generated graph in \Fig{scalability-example}.
This is for Graph \#9 with 256M vertices and 4B edges. We give log-binned results.
The BTER model is fairly close to the intended distributions for both
the degree distribution and the clustering coefficient by degree.
\begin{figure}[thbp]
  \centering
  \subfloat[Degree distribution (log-binned)]
  {\label{fig:scalability-dd}
   \includegraphics[width=0.33\textwidth]{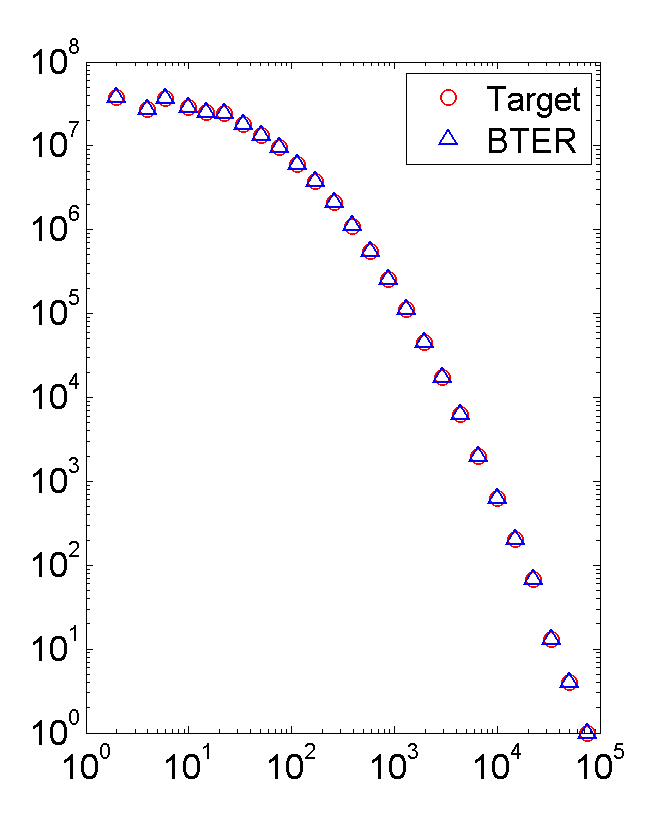}}
  ~~~~~
  \subfloat[Clustering coefficients (log-binned)]
  {\label{fig:scalability-cc}
   \includegraphics[width=0.33\textwidth]{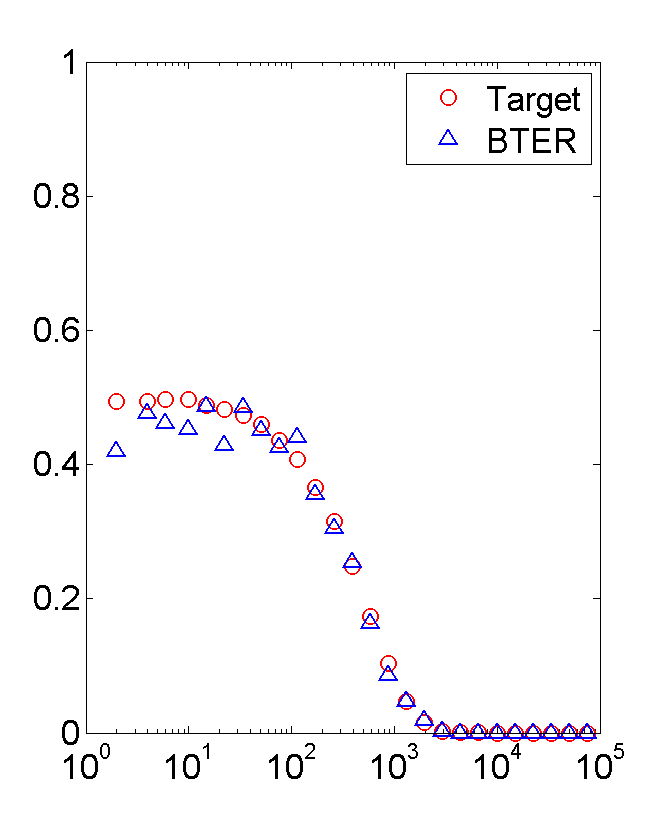}}

  \caption{Example target and actual degree distributions for 256M nodes using proposed idealized distributions.}
  \label{fig:scalability-example}
\end{figure}

\section{Conclusions and Future Work}
\label{sec:conclusions}
This paper demonstrates that the BTER generative model is useful
for modeling massive networks, especially as compared with other scalable models. 
We provide a detailed algorithm
along with analysis explaining the workings
of the method. The original paper on BTER \cite{SeKoPi12} provided
none of the implementation details and, in fact, did not directly use
the clustering coefficient data but rather estimated it via a
function. Here we give precise details on the implementation, which
is nontrivial due to issues such as repeat edges. We are able to build
a model of a graph with 120M nodes and 4.4B edges in less than 25
minutes on a 32-node Hadoop cluster. 

The development of a realistic graph model is an important step in
developing effective ``null'' models that nonetheless share the
properties of real-world networks. Such models will be useful in
detecting anomalies, statistical sampling, and community detection.
For example, the BTER model does not have larger communities beyond
the affinity blocks, whereas we might expect that real-world graphs
have a richer structure such as a hierarchy or other complex behavior.

The proposed  BTER model, along with the proposed degree and clustering coefficient distributions, may also boost benchmarking efforts in graph processing.  
The proposed degree distributions capture the essence of degree distributions that we see in practice and  generate realistic distributions even at large scales (whereas power law has a reputation of generating a few degrees that are much larger than observed in practice).  Moreover, the proposed distribution allows us to modify both the average and the maximum degree, which is critical for benchmarking.  The proposed clustering coefficient curves implicitly embed triangle structure into the graphs, which is a critical feature that distinguishes real graphs form arbitrary sparse graphs. And finally, the proposed generation algorithm scales to extremely large graphs because it generates edges in parallel.

Of course, we should list the limitations of the BTER model. First and foremost,
we only consider simple graphs. More complex models would be needed for directed
and/or weighted graphs. For instance, even capturing the degree distributions of
a directed graph is a challenge \cite{FRD-arXiv-1210.5288}. We also ignore
complex community structure, such as bipartite or near-bipartite community
structure as well as hierarchical structure, which has been observed in real-world applications \cite{Ne06,ClMoNe08}.
Such structure could potentially be incorporated by adjusting the way that affinity blocks are linked. Currently, affinity blocks link only within themselves and then randomly to other nodes in the graphs. To simulate bipartite structure, the affinity blocks could be paired. To generate hierarchical structure, the affinity blocks could be arranged in that way with excess degree being biased towards blocks that are closer in the hierarchy.
We do not incorporate node and edge types which would be relevant, for instance,
in an e-commerce graph that represents users and items, connected via purchases
and ratings of items by users. Finally, although we can generate edges in a
streaming fashion, the BTER model has no real concept of evolving parameters in time.
All of these issues are topics for future studies.

\appendix

\section{Coupon Collector Derivation}
\label{sec:coupon_collector}
Consider a universe $U$ of objects/coupons, and suppose we pick objects uniformly at random with
replacement from $U$. The following theorem proves the bound used in
equation~(\ref{eq:cc}), when $U$ is the
set of possible pairs in an affinity block (so $|U| = \binom{n_b}{2}$).
This is a simple take on the standard coupon collector problem, where we wish to pick
up all distinct coupons. (We follow the analysis of Section 3.6.1 in~\cite{MoRa}).

\begin{theorem} For a given $\rho \in (0,1)$, the expected number of independent draws required to select $\rho |U|$
distinct coupons from $U$ is $|U|\ln(1/(1-\rho)) + O(1)$.
\end{theorem}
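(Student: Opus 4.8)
The plan is to treat this as a sequence of independent geometric waiting times, one for each new distinct coupon, and then sum their expectations. Write $N = |U|$ and let $k = \ceil{\rho N}$ be the target number of distinct coupons; the rounding of $\rho N$ to an integer will only affect the final count by $O(1)$, which I will verify at the end. Partition the draws into stages, where stage $j$ begins the moment exactly $j-1$ distinct coupons have been observed and ends when the $j$-th distinct coupon first appears. Let $T_j$ denote the number of draws made during stage $j$, so the total is $T = \sum_{j=1}^{k} T_j$.

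First I would observe that, conditioned on having already collected $j-1$ distinct coupons, each subsequent draw is a fresh coupon with probability $(N-(j-1))/N$, independently of the past. Hence $T_j$ is geometric with that success probability and $\EX[T_j] = N/(N-j+1)$. By linearity of expectation,
\begin{equation*}
  \EX[T] = \sum_{j=1}^{k} \frac{N}{N-j+1} = N \sum_{i=N-k+1}^{N} \frac{1}{i} = N\,(H_N - H_{N-k}),
\end{equation*}
where $H_n = \sum_{i=1}^{n} 1/i$ is the $n$-th harmonic number and the middle equality uses the substitution $i = N-j+1$.

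Next I would invoke the standard asymptotic $H_n = \ln n + \gamma + O(1/n)$. Substituting $k = \rho N$, so that $N-k = (1-\rho)N$, gives
\begin{equation*}
  \EX[T] = N\bigl(\ln N - \ln\big((1-\rho)N\big)\bigr) + N\cdot O\!\left(\frac{1}{(1-\rho)N}\right) = N\ln\frac{1}{1-\rho} + O(1),
\end{equation*}
since the logarithmic terms collapse to $\ln\bigl(1/(1-\rho)\bigr)$. Recalling $N = |U|$ then yields the claimed $|U|\ln(1/(1-\rho)) + O(1)$.

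The only delicate point is the bookkeeping on the error term: one must confirm that the $O(1/n)$ corrections from the two harmonic numbers, each multiplied by the prefactor $N$, really stay bounded. Here $H_N$ contributes $N\cdot O(1/N) = O(1)$, while $H_{N-k}$ contributes $N\cdot O(1/((1-\rho)N)) = O(1/(1-\rho))$, which is $O(1)$ because $\rho$ is a fixed constant in $(0,1)$; likewise replacing $\rho N$ by $\ceil{\rho N}$ adds at most one extra stage of expected size $O(1/(1-\rho)) = O(1)$. I expect this accounting of the constants — rather than the geometric decomposition, which is routine — to be the step most in need of care.
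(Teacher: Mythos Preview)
Your proposal is correct and follows essentially the same approach as the paper: decompose the total into geometric waiting times $T_j$ for each new coupon, sum the expectations $N/(N-j+1)$ by linearity, rewrite as a difference of harmonic numbers, and apply the asymptotic $H_n = \ln n + \gamma + O(1/n)$. The paper simply assumes $\rho|U|$ is an integer rather than tracking the ceiling, so your accounting of the rounding and of the $O(1/(1-\rho))$ error contributions is, if anything, slightly more careful than the original.
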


\begin{proof} For convenience, we assume  that $\rho|U|$ is an integer. Consider a sequence of draws. Let $X_i$ (for integer $0 \leq i < \rho |U|$)
be the random variable denoting the number of draws required to get one more (distinct) coupon after $i$ distinct coupons
have been collected. Observe that the quantity of interest is $\EX[\sum_{i < \rho |U|} X_i]$, which by linearity
of expectation is $\sum_{i < \rho|U|} \EX[X_i]$. (The usual coupon collector analyses consider this sum 
for $\rho = 1$.)

When $i$ distinct coupons have already been collected, the probability that a single draw gives a new coupon is exactly $1-i/|U|$.
Think of this as probability of ``failure." The number of draws required for a success (new coupon) follows
a geometric distribution (Chap VI.8 of~\cite{Fel50}) and the mean of this is $1/(1-i/|U|) = |U|/(|U|-i)$.
Using this bound, the expected total number of draws can be expressed as follows:
\begin{align*}
	\sum_{i < \rho|U|} \EX[X_i] & = \sum_{i < \rho|U|} \frac{|U|}{|U|-i} \\
	& = |U| \Big[\sum_{i \leq |U|} \frac{1}{i} - \sum_{i \leq (1-\rho)|U|} \frac{1}{i}\Big] \\
	& = |U| \Big[ \ln |U| - \ln((1-\rho)|U|) + O(1/|U|) \Big] = |U|\ln(1/(1-\rho)) + O(1)
\end{align*}
(We use 
the standard bound for the Harmonic sum: $\sum_{i \leq r} 1/i = \ln r + \gamma + O(1/r)$, where
$\gamma$ is Euler-Mascheroni constant.)
\end{proof}

\newcommand{\BinText}[1]{(Reproduces \Fig{#1} in the original paper with log-binned plots.)}
\newcommand{\CummText}[1]{(Reproduces \Fig{#1} in the original paper with cumulative plots.)}

\section{Plots with Log-Binned Data}

Figures \ref{fig:smalldata-binned} -- \ref{fig:ideal-binned} reproduce degree distribution and clustering coefficient plots in the original paper using logarithmic binning of the degrees. The $k$th degree bin covers the range $\set{b_k,\dots,b_{k+1}-1}$ where
\begin{displaymath}
  b_k = \left\lceil \frac{\omega^{k-1}-1}{\omega-1} \right\rceil + 1.
\end{displaymath}
We use $\omega = 1.5$, which mean that the bin grows in size by 50\% at each step. For example, the first few $b_k$ values are
\begin{displaymath}
  \set{1,2,4,6,10,15,22,34,51,76,115,172,259,389,583,875,1313,1970,2955,
\dots}
\end{displaymath}
For the degree distributions, the data within each bin is summed.
For the clustering coefficients by degree, the data within each bin is averaged.

\begin{figure}[thbp]
  \centering
  \subfloat[Degree distribution for ca-HepTh]
  {\includegraphics[width=0.25\textwidth]{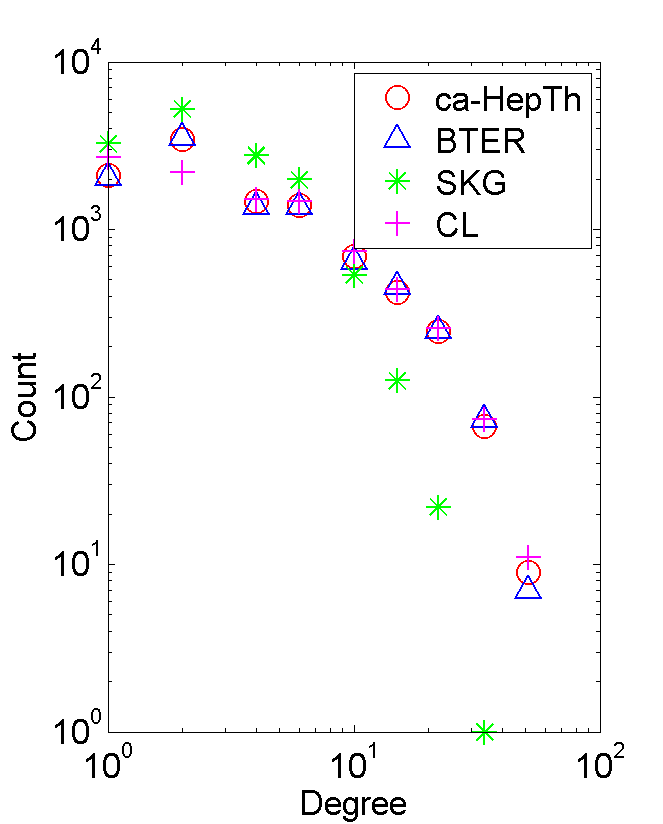}}
  \subfloat[Clustering coefficients for ca-HepTh]
  {\includegraphics[width=0.25\textwidth]{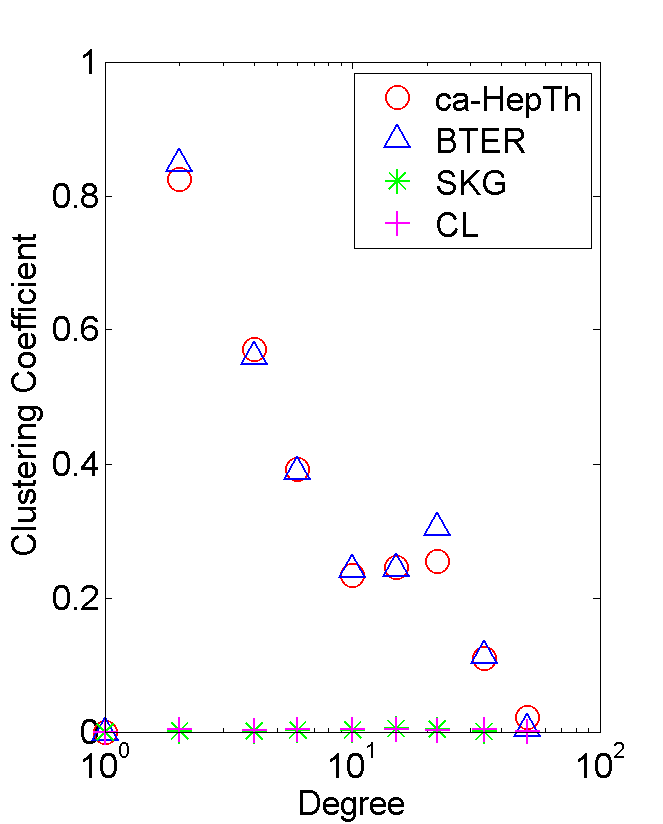}}
  \subfloat[Degree distribution for soc-Epinions1]
  {\includegraphics[width=0.25\textwidth]{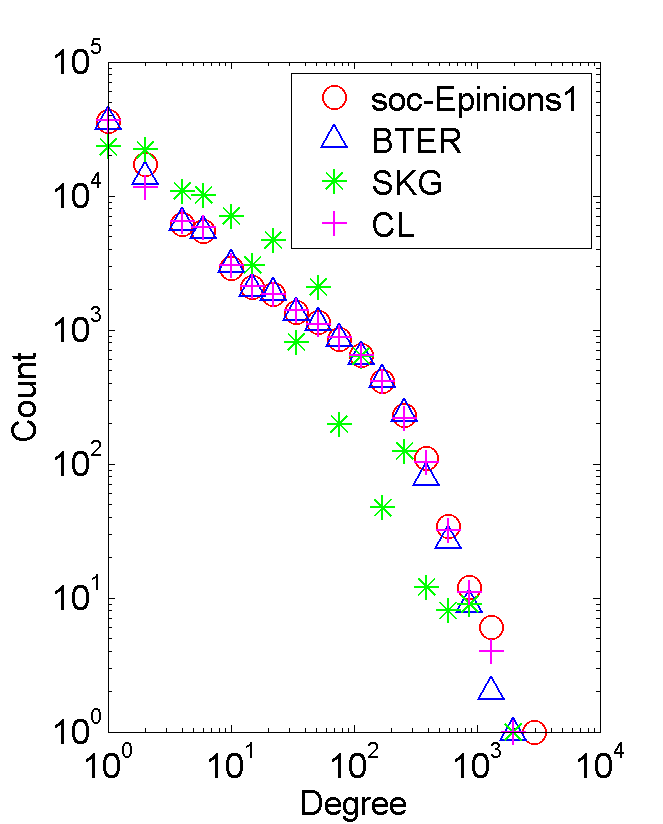}}
  \subfloat[Clustering coefficients for soc-Epinions1]
  {\includegraphics[width=0.25\textwidth]{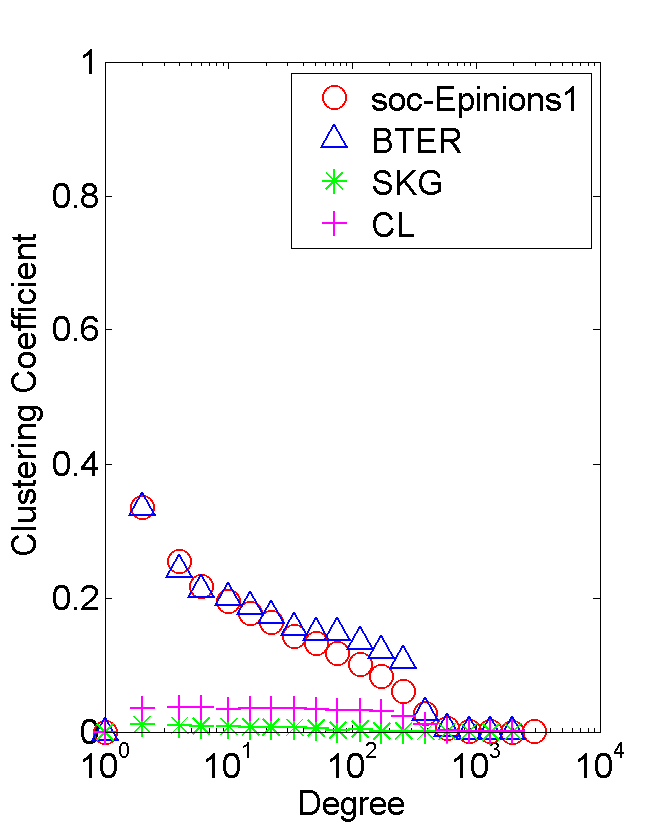}}
  \caption{Comparison of Chung-Lu (CL), SKG, and BTER on small graphs. \BinText{smalldata}}
  \label{fig:smalldata-binned}
\end{figure}

\begin{figure}[htbp]
  \centering
  \subfloat[amazon-2008]{\includegraphics[width=0.33\textwidth]{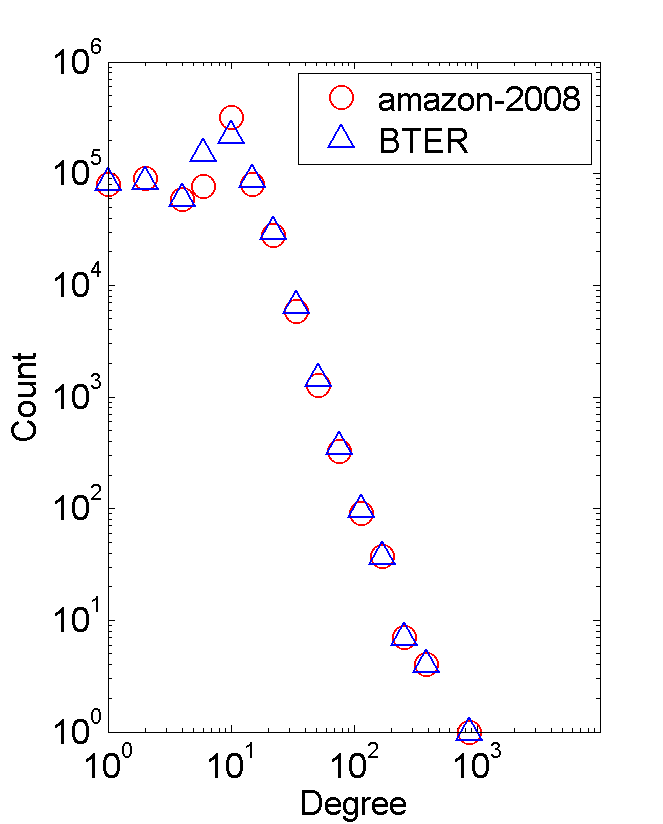}}
  \subfloat[ljournal-2008]{\includegraphics[width=0.33\textwidth]{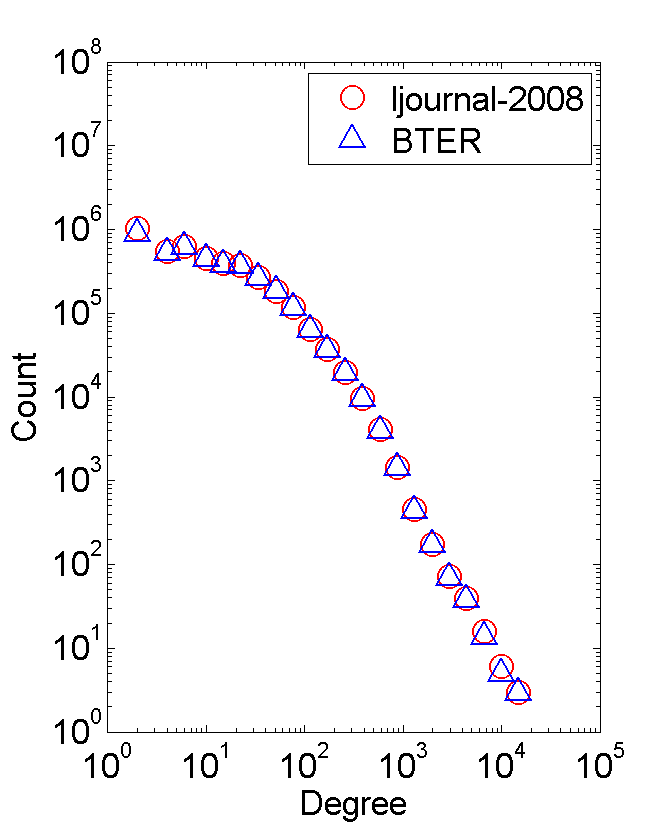}}
  \subfloat[hollywood-2011]{\includegraphics[width=0.33\textwidth]{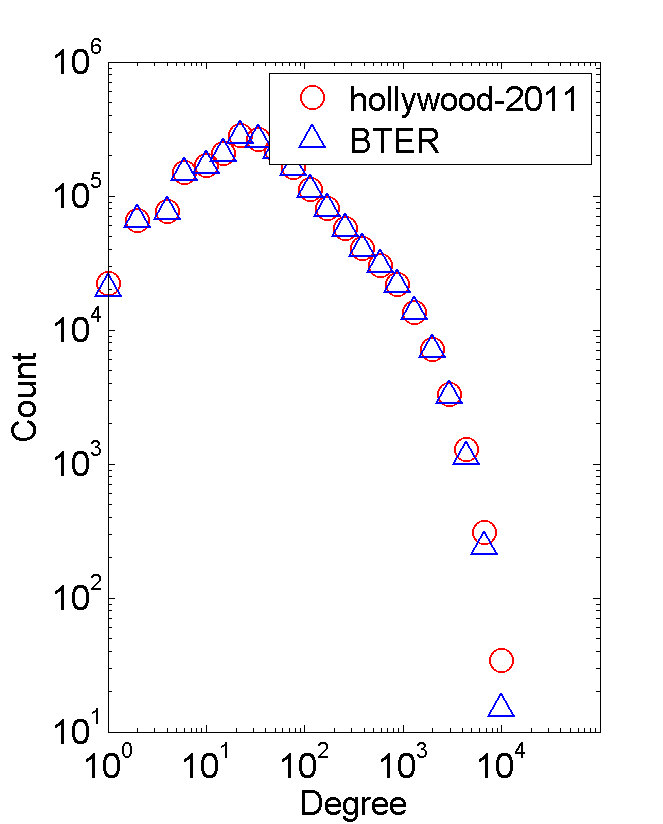}}\\
  \subfloat[twitter-2010]{\includegraphics[width=0.33\textwidth]{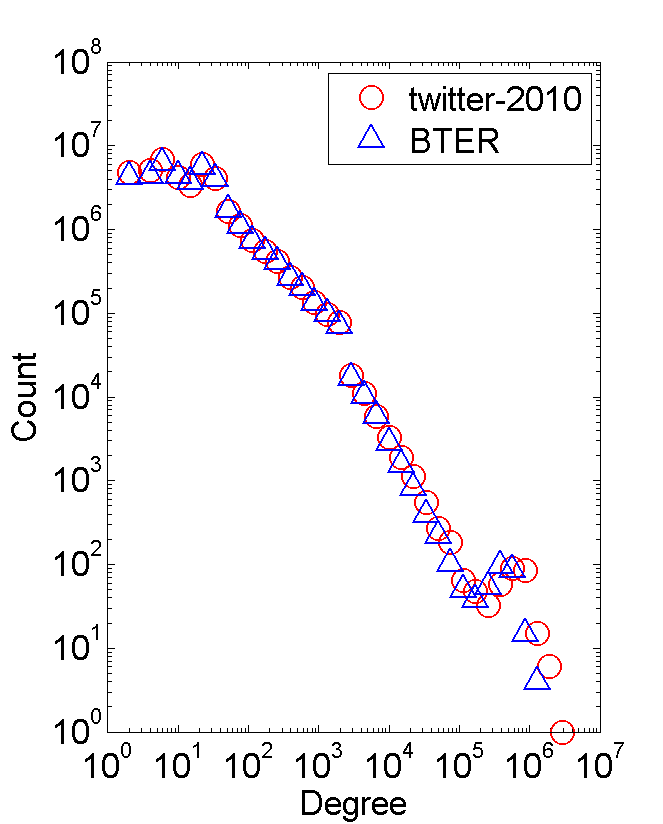}}
  \subfloat[uk-union]{\includegraphics[width=0.33\textwidth]{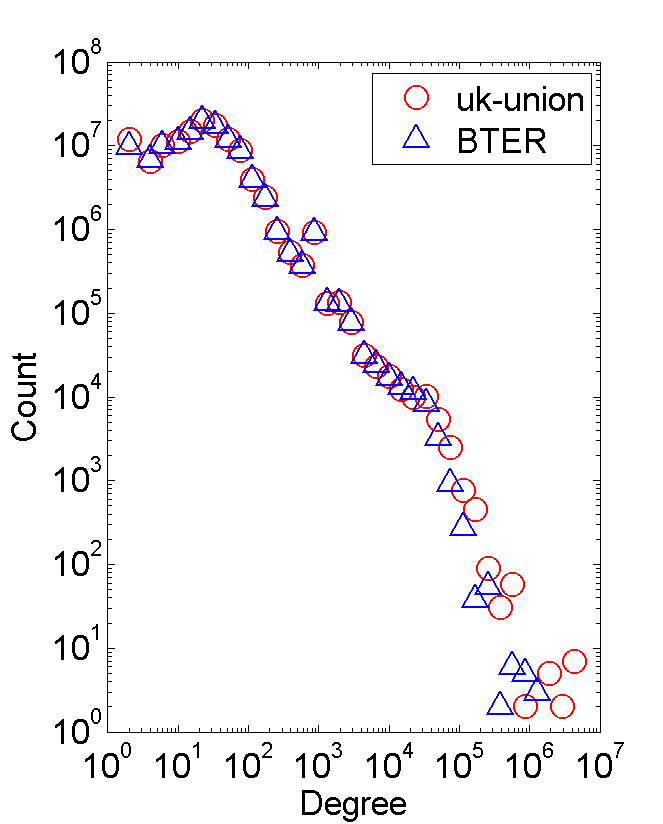}}
  \caption{Degree distributions of original and BTER-generated graphs. \BinText{degdist-realdata}}
  \label{fig:degdist-realdata-binned}
\end{figure}

\begin{figure}[htbp]
  \centering
  \subfloat[amazon-2008]{\includegraphics[width=0.33\textwidth]{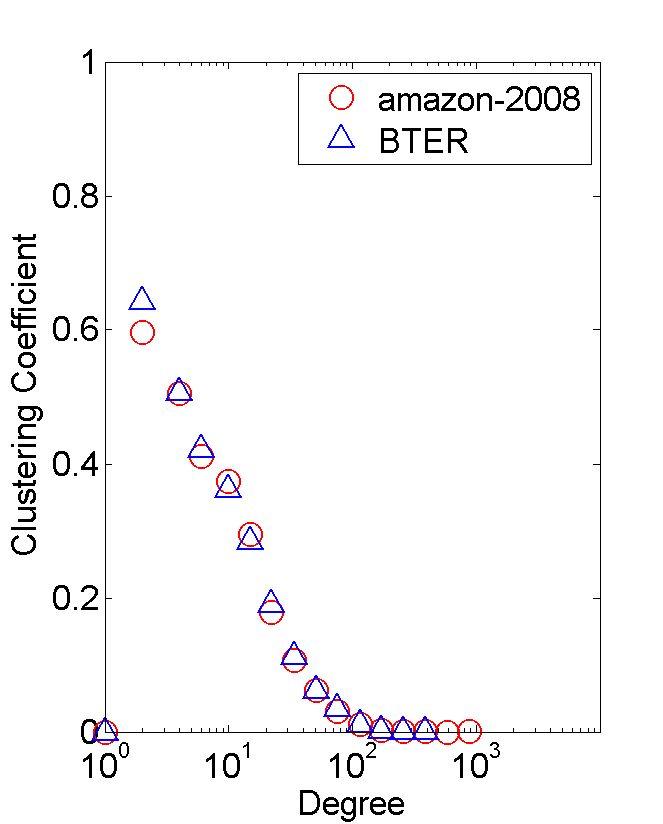}}
  \subfloat[ljournal-2008]{\includegraphics[width=0.33\textwidth]{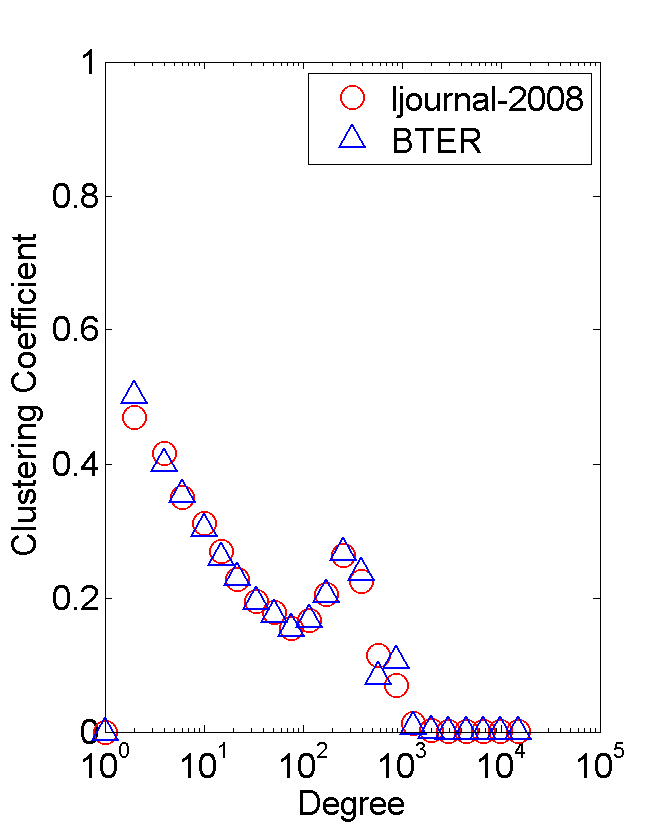}}
  \subfloat[hollywood-2011]{\includegraphics[width=0.33\textwidth]{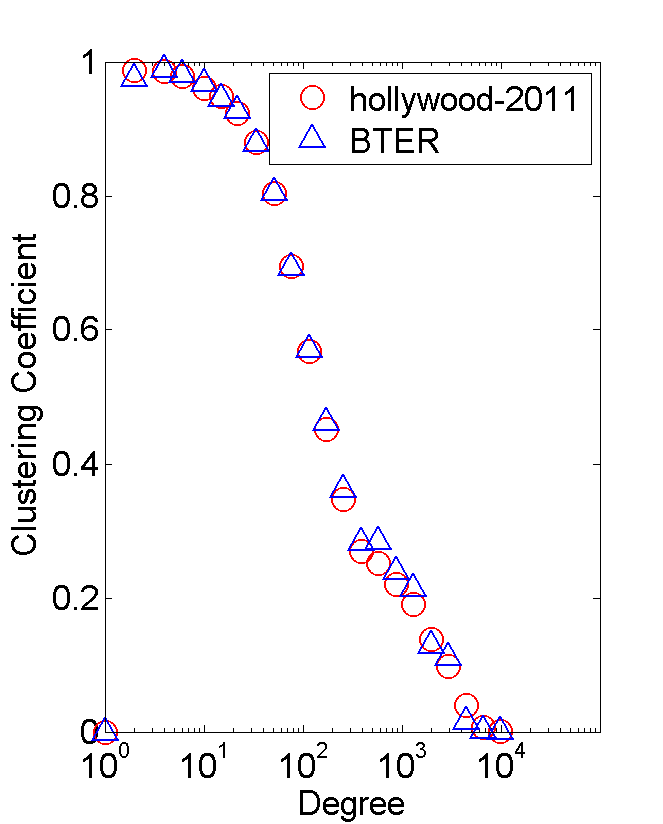}}\\
  \subfloat[twitter-2010]{\includegraphics[width=0.33\textwidth]{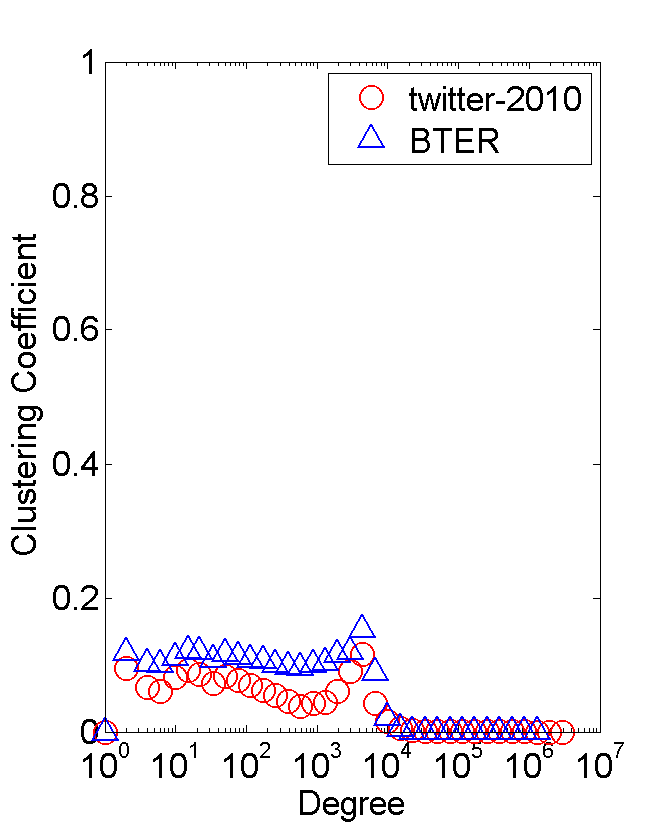}}
  \subfloat[uk-union]{\includegraphics[width=0.33\textwidth]{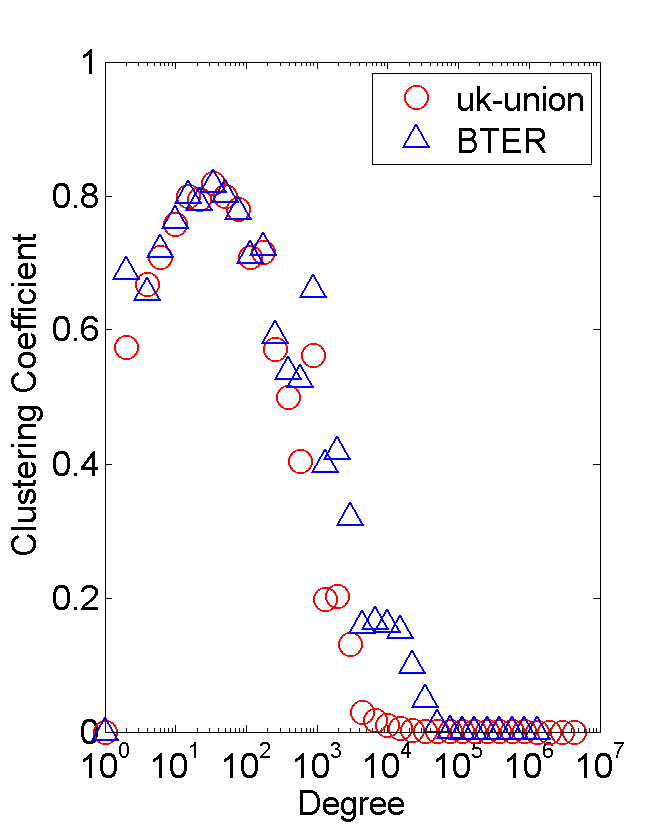}}
  \caption{Clustering coefficients of original and BTER-generated graphs. \BinText{cc-realdata}}
  \label{fig:cc-realdata-binned}
\end{figure}

\begin{figure}[htbp]
  \centering
  \subfloat[Degree distribution for Scenario 1]{\includegraphics[width=0.33\textwidth]{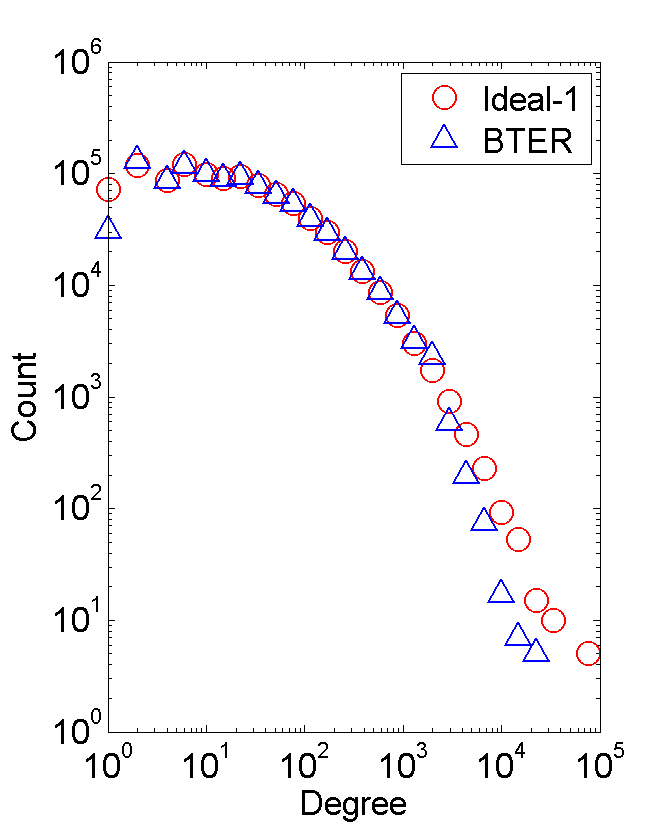}}
  \subfloat[Degree distribution for Scenario 2]{\includegraphics[width=0.33\textwidth]{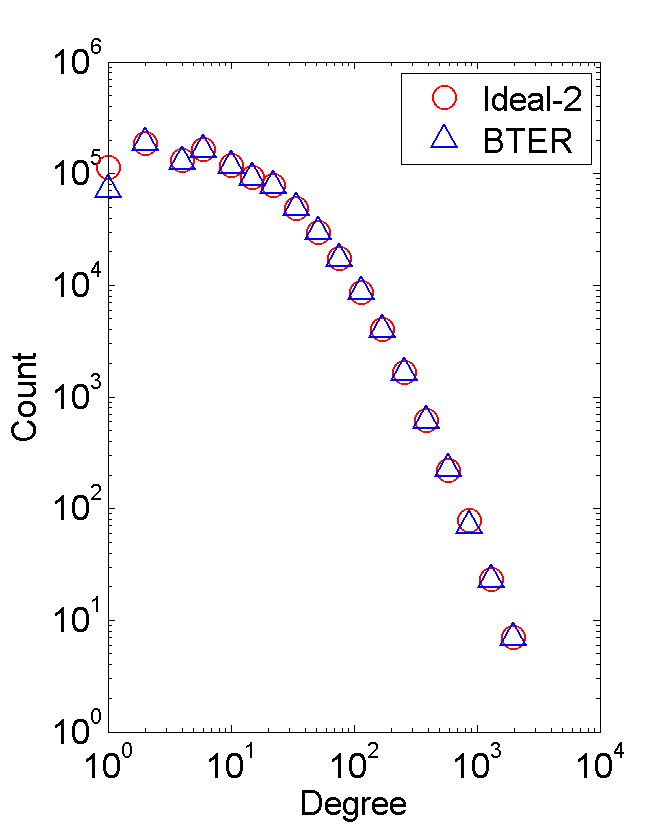}}\\
  \subfloat[Clustering coefficients for Scenario 1]{\includegraphics[width=0.33\textwidth]{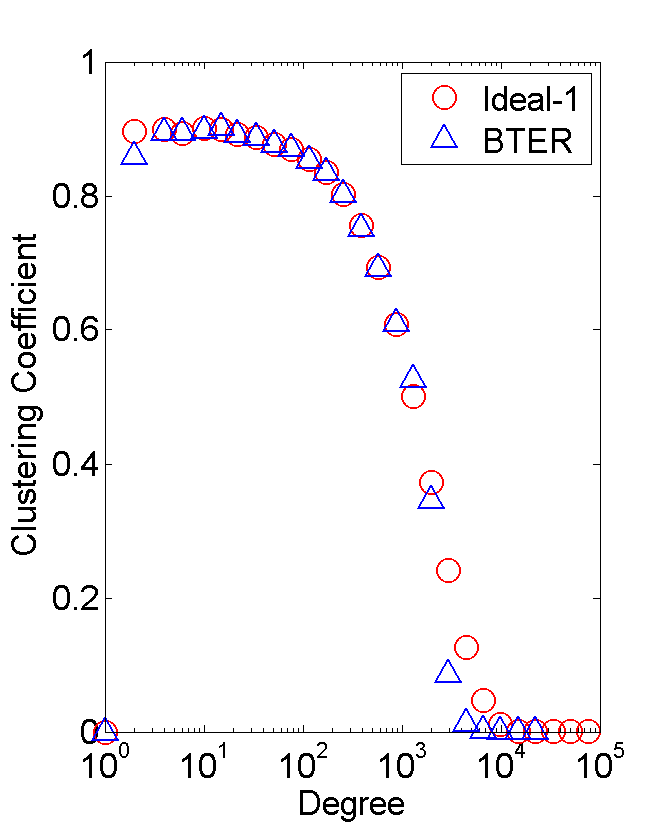}}
  \subfloat[Clustering coefficients for Scenario 2]{\includegraphics[width=0.33\textwidth]{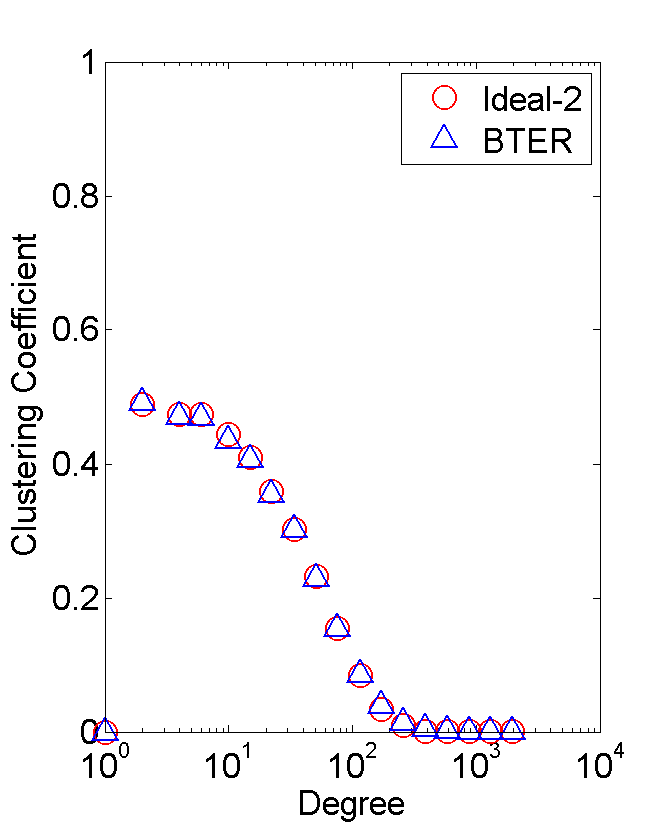}}
  \caption{Target distributions and results of BTER-generated graphs. \BinText{ideal}}
  \label{fig:ideal-binned}
\end{figure}

\section{Cumulative Degree Distributions}

Figures \ref{fig:smalldata-cumm} -- \ref{fig:ideal-cumm} reproduce degree distribution plots in the original paper using cumulative data.

\begin{figure}[thbp]
  \centering
  \subfloat[Degree distribution for ca-HepTh]
  {\includegraphics[width=0.33\textwidth]{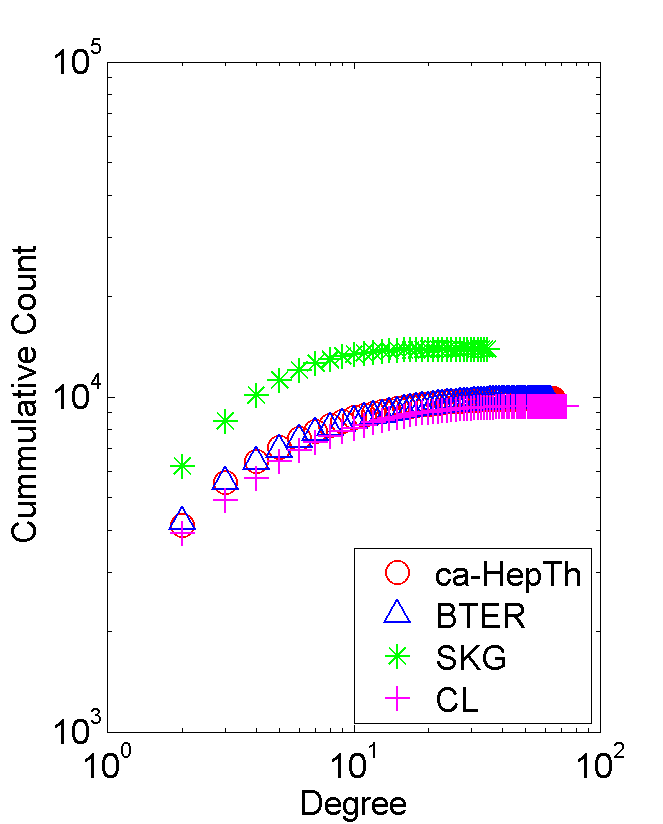}}
  \subfloat[Degree distribution for soc-Epinions1]
  {\includegraphics[width=0.33\textwidth]{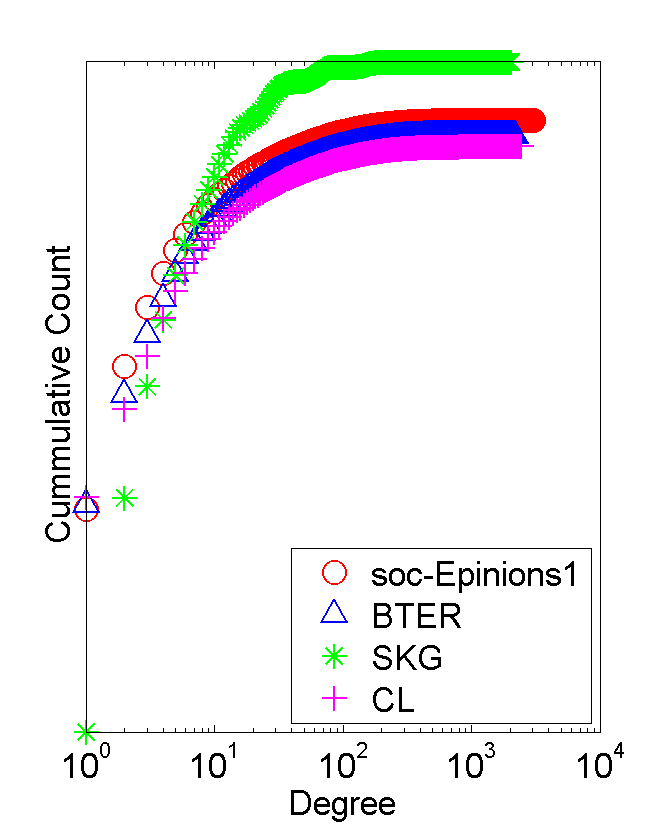}}
  \caption{Comparison of Chung-Lu (CL), SKG, and BTER on small graphs. \CummText{smalldata}}
  \label{fig:smalldata-cumm}
\end{figure}

\begin{figure}[htbp]
  \centering
  \subfloat[amazon-2008]{\includegraphics[width=0.33\textwidth]{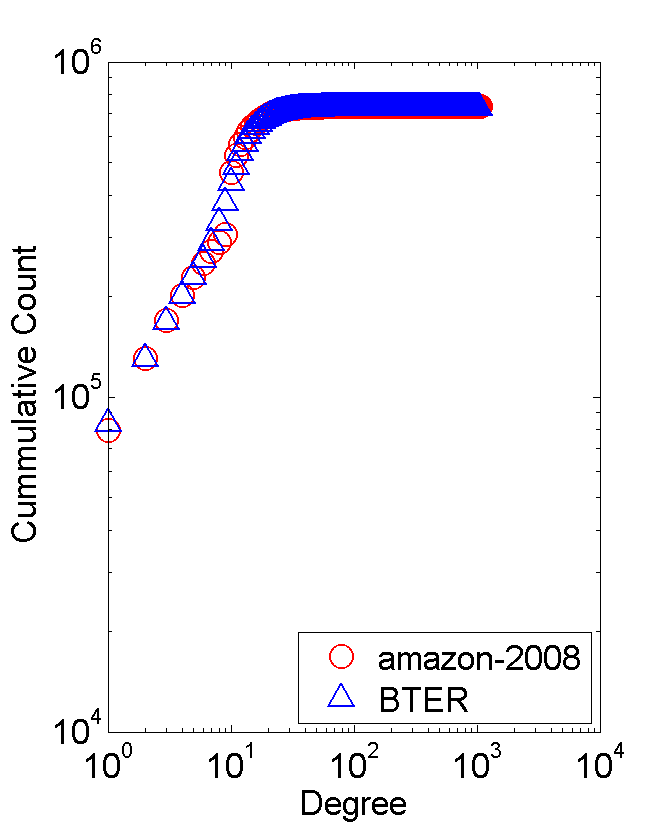}}
  \subfloat[ljournal-2008]{\includegraphics[width=0.33\textwidth]{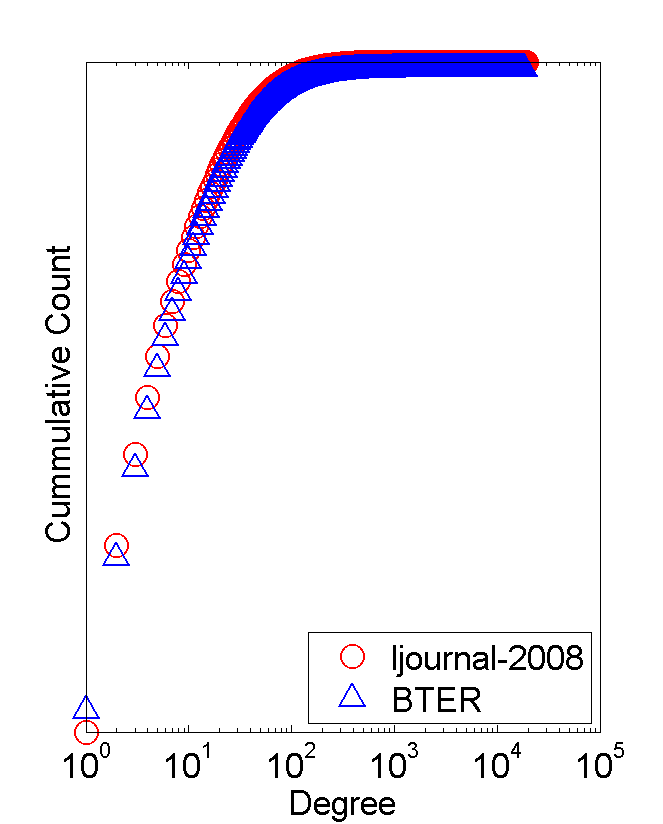}}
  \subfloat[hollywood-2011]{\includegraphics[width=0.33\textwidth]{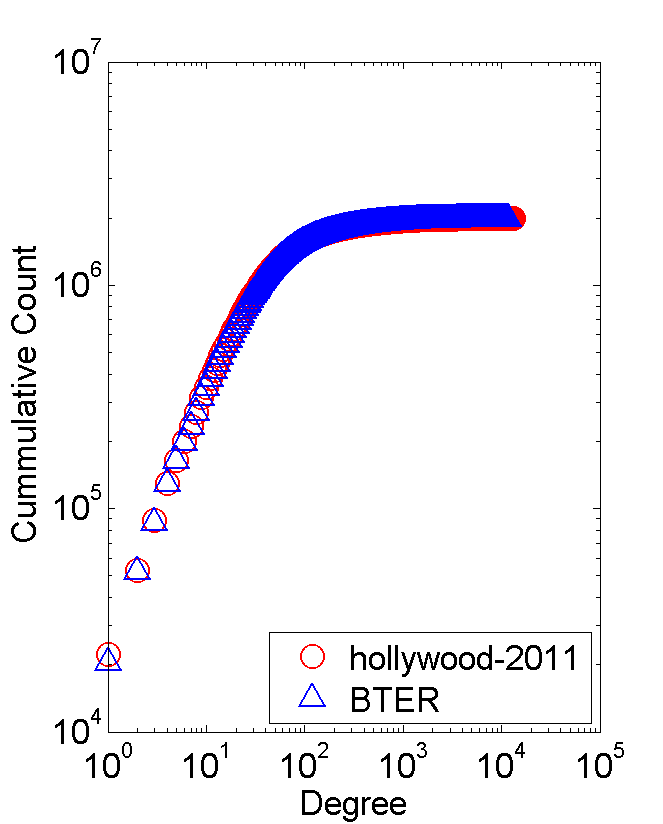}}\\
  \subfloat[twitter-2010]{\includegraphics[width=0.33\textwidth]{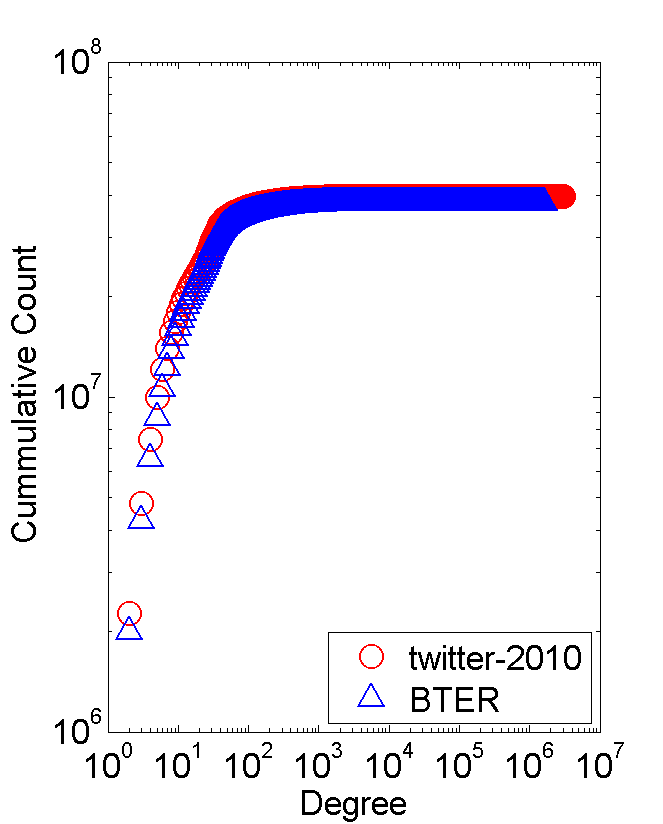}}
  \subfloat[uk-union]{\includegraphics[width=0.33\textwidth]{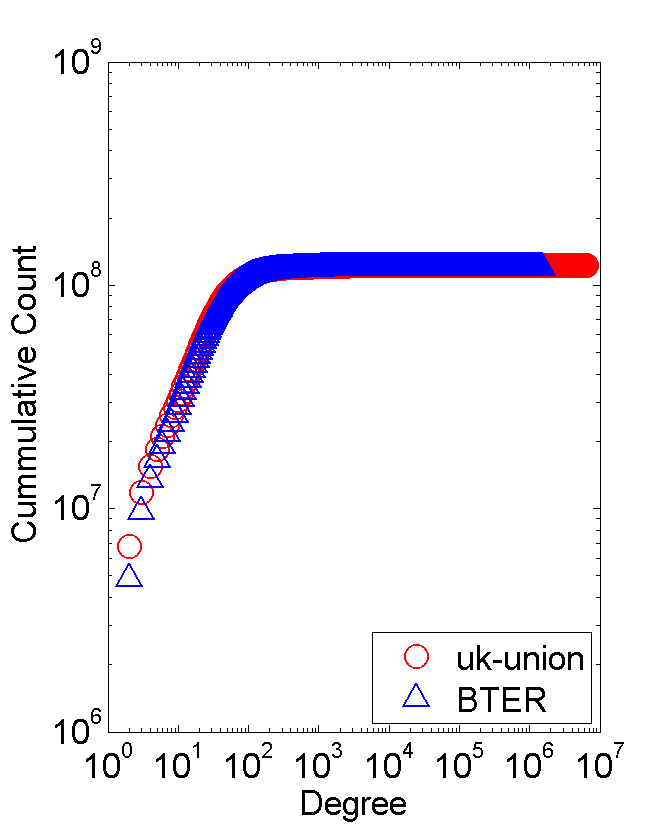}}
  \caption{Cumulative degree distributions of original and BTER-generated graphs. \CummText{degdist-realdata}}
  \label{fig:degdist-realdata-cumm}
\end{figure}

\begin{figure}[htbp]
  \centering
  \subfloat[Degree distribution for Scenario 1]{\includegraphics[width=0.33\textwidth]{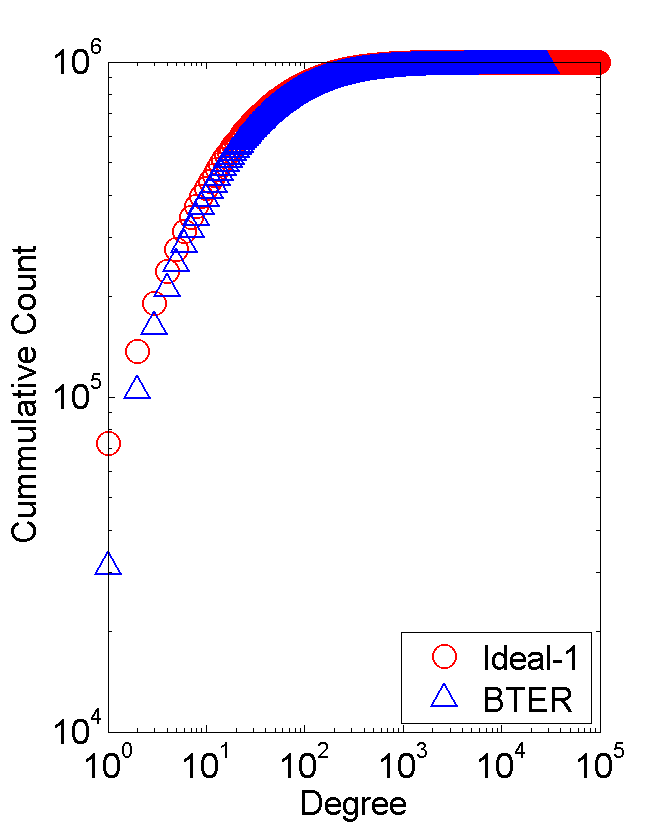}}
  \subfloat[Degree distribution for Scenario 2]{\includegraphics[width=0.33\textwidth]{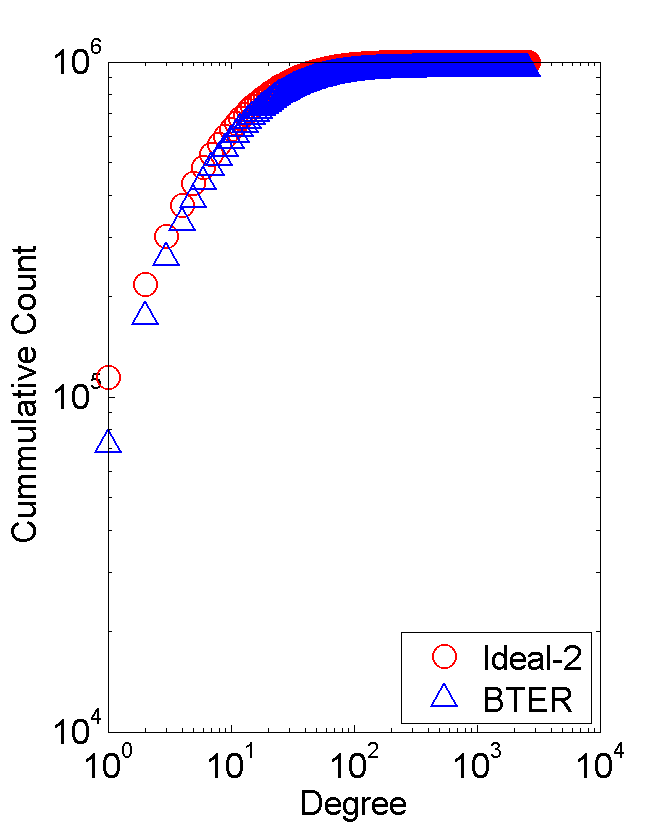}}\\
  \caption{Cumulative target distributions and results of BTER-generated graphs. \CummText{ideal}}
  \label{fig:ideal-cumm}
\end{figure}

\section{Parameter Study for Generalized Log-Normal Distribution}

In \Fig{dgln-samples}, we generate some example degree distributions using different parameters for the DGLN distribution, described in \Sec{idealdeg} or the paper. Here we use a maximum degree of $d^*=10^6$ and the number of nodes is $n=10^7$.

\begin{figure}[htbp]
  \centering
  \subfloat[$\alpha=2$ and $\delta$ varies]{\includegraphics[width=0.33\textwidth]{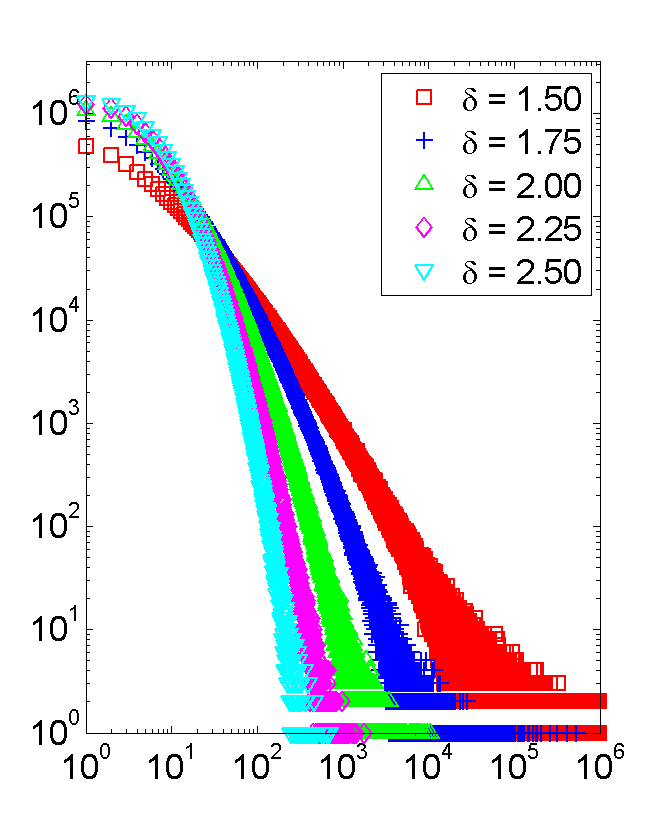}}
  \subfloat[$\alpha$ varies and $\delta=2$]{\includegraphics[width=0.33\textwidth]{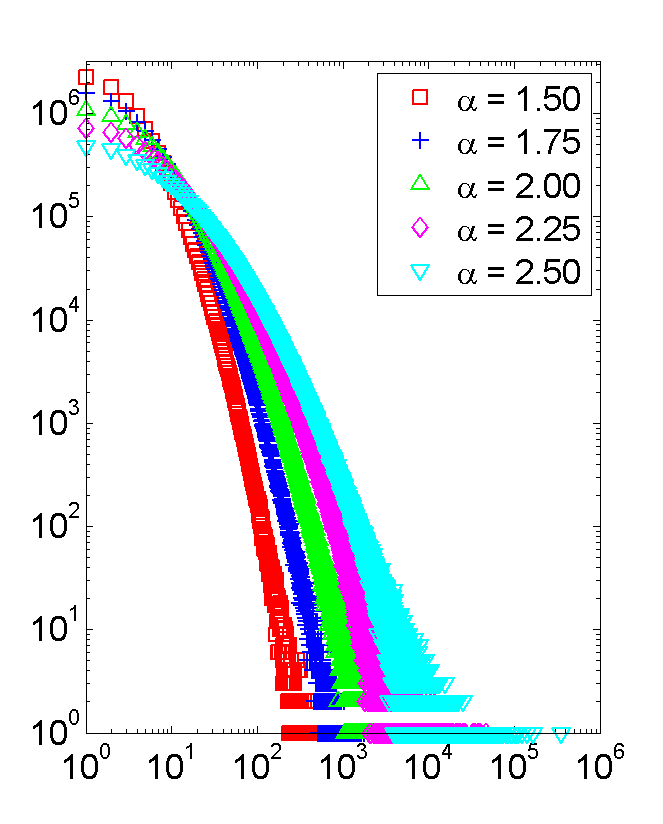}}\\
  \caption{Sample degree distributions using different parameters for DGLN.}
  \label{fig:dgln-samples}
\end{figure}

\bibliographystyle{siammod}

\begin{thebibliography}{10}

\bibitem{AiChLu01}
{\sc W.~Aiello, F.~Chung, and L.~Lu},
  \href{http://projecteuclid.org/euclid.em/999188420}{{\em A random graph model
  for power law graphs}}, Experimental Mathematics, 10 (2001), pp.~53--66.

\bibitem{BaAl99}
{\sc A.-L. Barab\'asi and R.~Albert},
  \href{http://dx.doi.org/10.1126/science.286.5439.509}{{\em Emergence of
  scaling in random networks}}, Science, 286 (1999), pp.~509--512.

\bibitem{BaWe00}
{\sc A.~Barrat and M.~Weigt},
  \href{http://dx.doi.org/10.1007/s100510050067}{{\em On the properties of
  small-world network models}}, The European Physical Journal B - Condensed
  Matter and Complex Systems, 13 (2000), pp.~547--560.

\bibitem{BiFaKo01}
{\sc Z.~Bi, C.~Faloutsos, and F.~Korn},
  \href{http://dx.doi.org/10.1145/502512.502521}{{\em The ``{DGX}" distribution
  for mining massive, skewed data}}, in KDD '01: Proceedings of the seventh ACM
  SIGKDD international conference on Knowledge discovery and data mining, ACM,
  2001, pp.~17--26.

\bibitem{BoRoSaVi11}
{\sc P.~Boldi, M.~Rosa, M.~Santini, and S.~Vigna},
  \href{http://dx.doi.org/10.1145/1963405.1963488}{{\em Layered label
  propagation: A multiresolution coordinate-free ordering for compressing
  social networks}}, in WWW'11: Proceedings of the 20th international
  conference on World Wide Web, ACM Press, 2011.

\bibitem{BoSaVi08}
{\sc P.~Boldi, M.~Santini, and S.~Vigna},
  \href{http://dx.doi.org/10.1145/1480506.1480511}{{\em A large time-aware
  graph}}, SIGIR Forum, 42 (2008), pp.~33--38.

\bibitem{BoVi04}
{\sc P.~Boldi and S.~Vigna},
  \href{http://dx.doi.org/http://doi.acm.org/10.1145/988672.988752}{{\em The
  webgraph framework {I}: Compression techniques}}, in WWW'04: Proceedings of
  the 13th International Conference on World Wide Web, ACM Press, 2004,
  pp.~595--602.

\bibitem{ChZhFa04}
{\sc D.~Chakrabarti, Y.~Zhan, and C.~Faloutsos},
  \href{http://siam.org/proceedings/datamining/2004/dm04_043chakrabartid.pdf}{{\em
  {R-MAT}: A recursive model for graph mining}}, in SDM04: Proceedings of the
  Fourth SIAM International Conference on Data Mining, April 2004.

\bibitem{ChKuLaMi09}
{\sc F.~Chierichetti, R.~Kumar, S.~Lattanzi, M.~Mitzenmacher, A.~Panconesi, and
  P.~Raghavan}, \href{http://dx.doi.org/10.1145/1557019.1557049}{{\em On
  compressing social networks}}, in KDD '09: Proceedings of the 15th ACM SIGKDD
  international conference on Knowledge discovery and data mining, ACM, 2009,
  pp.~219--228.

\bibitem{ChLu02}
{\sc F.~Chung and L.~Lu}, \href{http://dx.doi.org/10.1073/pnas.252631999}{{\em
  The average distances in random graphs with given expected degrees}},
  Proceedings of the National Academy of Sciences, 99 (2002), pp.~15879--15882.

\bibitem{ChLu02b}
{\sc F.~Chung and L.~Lu},
  \href{http://www.combinatorics.net/new/Annals/Abstract/6_2_125.aspx}{{\em
  Connected components in random graphs with given degree sequences}}, Annals
  of Combinatorics, 6 (2002), pp.~125--145.

\bibitem{ClMoNe08}
{\sc A.~Clauset, C.~Moore, and M.~Newman},
  \href{http://dx.doi.org/10.1038/nature06830}{{\em Hierarchical structure and
  the prediction of missing links in networks}}, Nature, 453 (2008).

\bibitem{ClShNe09}
{\sc A.~Clauset, C.~R. Shalizi, and M.~E.~J. Newman},
  \href{http://dx.doi.org/10.1137/070710111}{{\em Power-law distributions in
  empirical data}}, SIAM Review, 51 (2009), pp.~661--703.

\bibitem{DoUrGiG10}
{\sc D.~Dominguez-Sal, P.~Urb\'on-Bayes, A.~Gim\'enez-Van\'o,
  S.~G\'omez-Villamor, N.~Martínez-Baz\'an, and J.~Larriba-Pey},
  \href{http://dx.doi.org/10.1007/978-3-642-16720-1_4}{{\em Survey of graph
  database performance on the {HPC} scalable graph analysis benchmark}}, in
  Web-Age Information Management, H.~Shen, J.~Pei, M.~\"Ozsu, L.~Zou, J.~Lu,
  T.-W. Ling, G.~Yu, Y.~Zhuang, and J.~Shao, eds., vol.~6185 of Lecture Notes
  in Computer Science, Springer Berlin / Heidelberg, 2010, pp.~37--48.

\bibitem{FRD-arXiv-1210.5288}
{\sc N.~Durak, T.~G. Kolda, A.~Pinar, and C.~Seshadhri}, {\em A scalable
  directed graph model with reciprocal edges}, in Proceedings of the IEEE 2nd
  Workshop on Network Science, 2013.
\newblock to appear, also available as arXiv:1210.5288.

\bibitem{ErRe60}
{\sc P.~Erd\"os and A.~R\'enyi},
  \href{http://www.math-inst.hu/~p_erdos/1960-10.pdf}{{\em On the evolution of
  random graphs}}, Publication of the Mathematical Institute of the Hungarian
  Academy Of Sciences,  (1960).

\bibitem{Fel50}
{\sc W.~Feller}, {\em An Introduction to probability theory and applications:
  Vol I}, John Wiley and Sons, 3rd~ed., 1968.

\bibitem{GiNe02}
{\sc M.~Girvan and M.~E.~J. Newman},
  \href{http://dx.doi.org/10.1073/pnas.122653799}{{\em Community structure in
  social and biological networks}}, Proceedings of the National Academy of
  Sciences, 99 (2002), pp.~7821--7826.

\bibitem{GlOw12}
{\sc D.~F. Gleich and A.~B. Owen},
  \href{http://dx.doi.org/10.1080/15427951.2012.680824}{{\em Moment-based
  estimation of stochastic {Kronecker} graph parameters}}, Internet
  Mathematics, 8 (2012), pp.~232--256.

\bibitem{GuKr09}
{\sc W.~Guo and S.~Kraines},
  \href{http://dx.doi.org/10.1109/CASoN.2009.13}{{\em A random network
  generator with finely tunable clustering coefficient for small-world social
  networks}}, in CASON '09: International Conference on Computational Aspects
  of Social Networks, June 2009, pp.~10--17.

\bibitem{GuRoSe13}
{\sc R.~Gupta, T.~Roughgarden, and C.~Seshadhri}, {\em Decompositions of
  triangle-dense graphs}.
\newblock arXiv:1309.7440, Sept. 2013.

\bibitem{GuMeSa12}
{\sc A.~Gutfraind, L.~A. Meyers, and I.~Safro}, {\em Multiscale network
  generation}.
\newblock arXiv:1207.4266, July 2012.

\bibitem{Ke11}
{\sc J.~Kepner}, {\em The {Kronecker} theory of power law graphs}, in Graph
  Algorithms in the Language of Linear Algebra, J.~Kepner and J.~Gilbert, eds.,
  SIAM, 2011.

\bibitem{KoPiPlSe13}
{\sc T.~G. Kolda, A.~Pinar, T.~Plantenga, C.~Seshadhri, and C.~Task},
  \href{http://arxiv.org/abs/1301.5887}{{\em Counting triangles in massive
  graphs with {MapReduce}}}.
\newblock arXiv:1301.5887, Jan. 2013.

\bibitem{KuRaRaSi00}
{\sc R.~Kumar, P.~Raghavan, S.~Rajagopalan, D.~Sivakumar, A.~Tomkins, and
  E.~Upfal}, \href{http://dx.doi.org/10.1109/SFCS.2000.892065}{{\em Stochastic
  models for the web graph}}, in Proceedings of 41st Annual Symposiumon
  Foundations of Computer Science, 2000, pp.~57--65.

\bibitem{KwLePaMo10}
{\sc H.~Kwak, C.~Lee, H.~Park, and S.~Moon},
  \href{http://dx.doi.org/10.1145/1772690.1772751}{{\em What is {Twitter}, a
  social network or a news media?}}, in WWW '10: Proceedings of the 19th
  international conference on World wide web, New York, NY, USA, 2010, ACM,
  pp.~591--600.

\bibitem{LeChKlFa10}
{\sc J.~Leskovec, D.~Chakrabarti, J.~Kleinberg, C.~Faloutsos, and
  Z.~Ghahramani},
  \href{http://jmlr.csail.mit.edu/papers/v11/leskovec10a.html}{{\em Kronecker
  graphs: An approach to modeling networks}}, Journal of Machine Learning
  Research, 11 (2010), pp.~985--1042.

\bibitem{LeKlFa07}
{\sc J.~Leskovec, J.~Kleinberg, and C.~Faloutsos},
  \href{http://dx.doi.org/10.1145/1217299.1217301}{{\em Graph evolution:
  Densification and shrinking diameters}}, ACM Transactions on Knowledge
  Discovery from Data, 1 (2007).
\newblock Article No.~2.

\bibitem{MiPa02}
{\sc M.~Mihail and C.~Papadimitriou},
  \href{http://dx.doi.org/10.1007/3-540-45726-7_20}{{\em On the eigenvalue
  power law}}, in RANDOM 2002: Proceedings of Randomization and Approximation
  Techniques in Computer Science, vol.~2483 of Lecture Notes in Computer
  Science, Springer Berlin / Heidelberg, 2002, pp.~254--262.

\bibitem{MiBlWo10}
{\sc B.~Miller, N.~Bliss, and P.~Wolfe},
  \href{http://books.nips.cc/papers/files/nips23/NIPS2010_0954.pdf}{{\em
  Subgraph detection using eigenvector {L1} norms}}, in NIPS 2010: Advances in
  Neural Information Processing Systems 23, 2010, pp.~1633--1641.

\bibitem{MiStBl12}
{\sc B.~Miller, L.~Stephens, and N.~Bliss},
  \href{http://dx.doi.org/10.1109/ICASSP.2012.6288612}{{\em Goodness-of-fit
  statistics for anomaly detection in {Chung-Lu} random graphs}}, in ICASSP
  2012: IEEE International Conference on Acoustics, Speech and Signal
  Processing, march 2012, pp.~3265--3268.

\bibitem{MiWr12}
{\sc D.~Mir and R.~N. Wright},
  \href{http://dx.doi.org/10.1145/2320765.2320818}{{\em A differentially
  private estimator for the stochastic {Kronecker} graph model}}, in EDBT-ICDT
  '12: Proceedings of the 2012 Joint EDBT/ICDT Workshops, ACM, 2012,
  pp.~167--176.

\bibitem{MoKiNeVi10}
{\sc S.~Moreno, S.~Kirshner, J.~Neville, and S.~V.~N. Vishwanathan},
  \href{http://dx.doi.org/10.1109/ALLERTON.2010.5707038}{{\em Tied {Kronecker}
  product graph models to capture variance in network populations}}, in
  Proceedings of 2010 48th Annual Allerton Conference on Communication,
  Control, and Computing, Oct. 2010, pp.~1137--1144.

\bibitem{MoRa}
{\sc R.~Motwani and P.~Raghavan}, {\em Randomized Algorithms}, Cambridge
  University Press, 1995.

\bibitem{Ne07}
{\sc M.~Newman}, \href{http://dx.doi.org/10.1103/PhysRevE.76.045101}{{\em
  Component sizes in networks with arbitrary degree distributions}}, Physical
  Review E, 76 (2007).

\bibitem{NeWaSt02}
{\sc M.~Newman, D.~Watts, and S.~Strogatz},
  \href{http://dx.doi.org/10.1073/pnas.012582999}{{\em Random graph models of
  social networks}}, Proceedings of the National Academy of Sciences, 99
  (2002), pp.~2566--2572.

\bibitem{Ne03a}
{\sc M.~E.~J. Newman}, \href{http://dx.doi.org/10.1103/PhysRevE.68.026121}{{\em
  Properties of highly clustered networks}}, Phys. Rev. E, 68 (2003),
  p.~026121.

\bibitem{Ne06}
{\sc M.~E.~J. Newman}, \href{http://dx.doi.org/10.1103/PhysRevE.74.036104}{{\em
  Finding community structure in networks using the eigenvectors of matrices}},
  Physical Review E, 74 (2006), p.~036104.

\bibitem{PiSeKo12}
{\sc A.~Pinar, C.~Seshadhri, and T.~G. Kolda},
  \href{http://siam.omnibooksonline.com/2012datamining/data/papers/155.pdf}{{\em
  The similarity between stochastic {Kronecker} and {Chung-Lu} graph models}},
  in SDM12: Proceedings of the Twelfth SIAM International Conference on Data
  Mining, Apr. 2012, pp.~1071--1082.

\bibitem{SaCaWiZa10}
{\sc A.~Sala, L.~Cao, C.~Wilson, R.~Zablit, H.~Zheng, and B.~Y. Zhao},
  \href{http://dx.doi.org/10.1145/1772690.1772778}{{\em Measurement-calibrated
  graph models for social network experiments}}, in WWW '10: Proceedings of the
  19th international conference on World wide web, 2010, pp.~861--870.

\bibitem{SaGaRoZh11}
{\sc A.~Sala, S.~Gaito, G.~P. Rossi, H.~Zheng, and B.~Y. Zhao},
  \href{http://arxiv.org/abs/1108.0027}{{\em Revisiting degree distribution
  models for social graph analysis}}.
\newblock arXiv:1108.0027, July 2011.

\bibitem{SeKoPi12}
{\sc C.~Seshadhri, T.~G. Kolda, and A.~Pinar},
  \href{http://dx.doi.org/10.1103/PhysRevE.85.056109}{{\em Community structure
  and scale-free collections of {Erd\"os-R\'enyi} graphs}}, Physical Review E,
  85 (2012), p.~056109.

\bibitem{SePiKo11}
{\sc C.~Seshadhri, A.~Pinar, and T.~G. Kolda},
  \href{http://dx.doi.org/10.1109/ICDM.2011.23}{{\em An in-depth study of
  stochastic {Kronecker} graphs}}, in ICDM 2011: Proceedings of the 2011 IEEE
  International Conference on Data Mining, 2011, pp.~587--596.

\bibitem{SePiKo12}
\leavevmode\vrule height 2pt depth -1.6pt width 23pt,
  \href{http://arxiv.org/abs/1202.5230}{{\em Triadic measures on graphs: The
  power of wedge sampling}}.
\newblock arXiv:1202.5230 [cs.SI], October 2012.
\newblock to appear in SDM13.

\bibitem{JACM-SKG}
\leavevmode\vrule height 2pt depth -1.6pt width 23pt, {\em An in-depth analysis
  of stochastic {Kronecker} graphs}, Journal of the Association for Computing
  Machinery, in press (2013).
\newblock Preprint: \url{http://arxiv.org/abs/1102.5046}.

\bibitem{ViBa12}
{\sc J.~C. Vivar and D.~Banks}, \href{http://dx.doi.org/10.1002/wics.184}{{\em
  Models for networks: a cross-disciplinary science}}, Wiley Interdisciplinary
  Reviews: Computational Statistics, 4 (2012), pp.~13--27.

\bibitem{WaSt98}
{\sc D.~Watts and S.~Strogatz}, \href{http://dx.doi.org/10.1038/30918}{{\em
  Collective dynamics of `small-world' networks}}, Nature, 393 (1998),
  pp.~440--442.

\bibitem{Graph500}
{\em Graph 500 benchmark}.
\newblock Available at \url{http://www.graph500.org/Specifications.html}.

\bibitem{LWA}
{\em {LWA: Laboratory for Web Algorithms}}.
\newblock Available at \url{http://law.di.unimi.it/datasets.php}.

\bibitem{SNAP}
{\em {SNAP: Stanford Network Analysis Project}}.
\newblock Available at \url{http://snap.stanford.edu/}.

\end{thebibliography}


\end{document}